\DeclareMathOperator{\CCa}{CC_A}
\DeclareMathOperator{\obs}{obs}
\DeclareMathOperator{\UR}{UR}
\DeclareMathOperator{\dss}{dss}
\DeclareMathOperator{\TwPl}{TwPl}
\DeclareMathOperator{\Ver}{Ver}
\newcommand{\N}{\mathbb{N}}
\newcommand{\LM}{\mathcal{L}}
\newcommand{\dt}{\delta}
\newcommand{\ep}{\epsilon}
\newcommand{\vep}{\varepsilon}
\newcommand{\Scal}{\mathcal{S}}
\newcommand{\Sig}{\Sigma}
\newcommand{\s}{\sigma}
\newcommand{\Mt}{\mathcal{M}}
\newcommand{\nsf}{\mathsf{n}}
\newcommand{\fsf}{\mathsf{f}}
\newcommand{\Ef}{{\cyan E_f}}
\newcommand{\ef}{{\cyan e_f}}
\newcommand{\QS}{{\red Q_S}}
\newcommand{\f}{{\cyan f}}
\newcommand{\fs}{{\cyan\{f\}}}
\newtheorem{theorem}{Theorem}[section]
\newtheorem{corollary}[theorem]{Corollary}
\newtheorem{definition}{Definition}
\newtheorem{proposition}[theorem]{Proposition}
\newtheorem{example}{Example}
\newtheorem{remark}{Remark}
\let\NAT@parse\undefined
\newcommand{\PSPACE}{\mathsf{PSPACE}}
\newcounter{enumi_saved}
\tikzset{elliptic state/.style={draw,ellipse, thick, fill=gray!10}
}
\tikzset{
node distance=3cm, 
every state/.style={thick, fill=gray!10}, 
initial text=$ $, 
}
\newcommand{\red}{\color{red}}
\definecolor{green}{rgb}{0.1,0.7,0.1}
\newcommand{\cyan}{\color{cyan}}
\title{A unified concurrent-composition method to state/event inference and concealment 
in discrete-event systems
}
\author{Kuize Zhang\\
{\small Control Systems Group, Technical University of Berlin, 10587 Berlin, Germany}\\
{\small kuize.zhang@campus.tu-berlin.de}
	}
\begin{document}

\date{}

\maketitle

{\bf Abstract}
Discrete-event systems usually consist of discrete states and transitions between them caused by 
spontaneous occurrences of labelled (aka partially-observed) events. Due to the partially-observed feature,
fundamental properties therein could be classified into two categories: state/event-inference-based properties 
(e.g., strong detectability, diagnosability, and predictability) and state-concealment-based properties (e.g.,
opacity). Intuitively, the former category describes whether one can use observed output sequences to infer 
the current and subsequent states, past occurrences of faulty events, or future certain occurrences of faulty 
events; while the latter describes whether one cannot use observed output sequences to infer whether some 
secret states have been visited (that is, whether the DES can conceal the status that its secret states have been
visited). Over the past two decades these properties were studied separately using different methods.
In this review article, for labeled finite-state automata, a unified 
concurrent-composition method is shown to verify all above inference-based properties and concealment-based
properties, resulting in a unified mathematical framework for the two categories of properties. In addition,
compared with the previous methods in the literature, the concurrent-composition method does not depend on
assumptions and is more efficient.

{\bf Keywords}
discrete-event system, labeled finite-state automaton, inference, concealment, concurrent composition

\section{Introduction}

\emph{Discrete-event systems} (DESs) usually consist of discrete states and transitions between them caused by
spontaneous occurrences of labelled (aka partially-observed) events. Hence DESs are autonomous (they are not
driven by external factors) 
and nonlinear \cite{WonhamSupervisoryControl}. Due to the partially-observed feature of DESs, observation-based
fundamental properties could be classified into two categories: \emph{inference}-based and \emph{concealment}-based.
The former
means whether one can infer further information of a DES from the observations to the DES, so that
these information could be used to do further study on the DES, e.g., synthesizing a controller to change several
properties of the DES. While the latter refers to whether the DES can forbid several information from being leaked to
external intruders, even though the intruders can see outputs/labels generated by the DES. Roughly speaking, the former 
is dual to the latter. The former category contain detectability, diagnosability, predictability, etc.;
the latter contain opacity, etc. Detectability, diagnosability, and predictability mean whether one can use
observed output sequences generated by a DES to determine its current states, past occurrences of faulty events,
and future certain occurrences of faulty events; opacity refers to whether one cannot use 
observed output sequences to determine whether secret states have been visited.

Over the two past decades, these properties
were studied separately, but no intrinsic relations between them were revealed. In this paper, for DESs
modeled by labeled finite-state automata (LFSAs), a unified 
\emph{concurrent-composition} method is given to verify all of them, thus a unified mathematical framework is 
given to include many inference-based and concealment-based properties. Roughly speaking, a concurrent composition 
collects all pairs of trajectories of two systems producing the same output sequence, in which observable 
transitions with the same output are synchronized but unobservable transitions interleave. For strong detectability,
diagnosability, and predictability (which are inference-based
properties), the concurrent compositions of trivial variants of an LFSA and trivial variants of itself are
constructed to do verification in polynomial time; for variants of opacity (which are concealment-based properties), 
the concurrent compositions of trivial variants of an LFSA and the observers\footnote{i.e., the standard
powerset construction used for determinizing nondeterministic finite automata with $\ep$-transitions
\cite{Sipser2006TheoryofComputation}} of trivial variants of itself
are constructed to do verification in exponential time. In a large extent, \emph{from a complexity point of view, it is easy to
infer something, but it is hard to conceal something.}


The concurrent-composition method shows advantage in verifying inference-based properties of DESs since it does 
not rely on any assumption. Existing results in employing the detector method \cite{Shu2011GDetectabilityDES} to
verify strong detectability depends on two fundamental assumptions, i.e., deadlock-freeness which assumes an LFSA
always running, and divergence-freeness which requires the running of an LFSA will always be eventually observed.
These two assumptions are also adopted in the twin-plant method \cite{Jiang2001PolyAlgorithmDiagnosabilityDES} and
the verifier method in \cite{Yoo2002DiagnosabiliyDESPTime,Genc2009PredictabilityDES} for verifying diagnosability
and predictability.

When being applied to verify concealment-based properties, the concurrent-composition method is more efficient
than almost all methods in the literature, e.g., the initial-state-estimator method 
\cite{Saboori2012InfiniteStepOpacity}, the two-way-observer method \cite{Yin2017TWObserverInfiniteStepOpacity},
the $K$-delay-trajectory-estimator-method \cite{Falcone2015StrongKStepOpacity}, and the $K$/Inf-step-recognizer 
method \cite{Ma2021StrongInfSODES}.
The only known exception lies in the fact that only two variants of opacity, current-state opacity and
initial-state opacity, could be verified by directly using
the notions of observer \cite{Cassez2009DynamicOpcity,Saboori2007CurrentStateOpacity} and reverse observer
\cite{Wu2013ComparativeAnalysisOpacity} (see Section~\ref{sec:StanOpa} for details).

{\bf Notation} 
Symbol $\N$ denotes the set of nonnegative integers.
For an (finite) alphabet $\Sig$ (i.e., every sequence of elements of $\Sig$ is a unique sequence of elements of
$\Sig$, e.g., $\{0,00\}$ is not an alphabet since $000=0\ 00=00\ 0$), 
$\Sig^*$ and $\Sig^{\omega}$ are used to denote the set of
\emph{words} (i.e., finite-length sequences of elements of $\Sig$) over $\Sig$ including the empty word $\epsilon$
and the set of \emph{configurations} (i.e., infinite-length sequences of elements of $\Sig$) over $\Sig$,
respectively. $\Sig^{+}:=\Sig^*\setminus\{\epsilon\}$.
For a word $s\in \Sig^*$,
$|s|$ stands for its length, and we set $|s'|=+\infty$ for all $s'\in \Sig^{\omega}$.
For $s\in \Sig^+$ and $k\in\N$, $s^k$ and $s^{\omega}$ denote the concatenations of $k$ copies of
$s$ and infinitely many copies of $s$, respectively. Analogously, $L_1L_2:=\{e_1e_2|e_1\in L_1,
e_2\in L_2\}$, where $L_1,L_2\subset \Sig^*$.
For a word (configuration) $s\in \Sig^*(\Sig^{\omega})$, a word $s'\in \Sig^*$ is called a \emph{prefix} of $s$,
denoted as $s'\sqsubset s$,
if there exists another word (configuration) $s''\in \Sig^*(\Sig^{\omega})$ such that $s=s's''$.
For a set $S$, $|S|$ denotes its cardinality and $2^S$ its power set. Symbols $\subset$
and $\subsetneq$ denote the subset and strict subset relations, respectively.

\begin{definition}\label{FA:def8} 
		A \emph{labeled finite-state automaton}\index{labeled finite-state automaton} is a sextuple
		$\Scal=(Q,E,\dt,Q_0,\Sig,\ell)$, where
		\begin{enumerate}
			\item $Q$ is a finite set of \emph{states},
			\item $E$ (which is an alphabet) is a finite set of \emph{events},
			\item $\dt:Q\times E \to 2^Q$ is the \emph{transition function} (equivalently represented by
				the \emph{transition relation} $\dt\subset Q\times E\times Q$ such that $q'\in\dt(q,e)$ if and 
				only if $(q,e,q')\in\dt$),
			\item $Q_0\subset Q$ is a set of \emph{initial states}, 
			\item $\Sig$ (also an alphabet) is a finite set of \emph{outputs}/\emph{labels}, and
			\item $\ell: E\to \Sig\cup\{\ep\}$ is the \emph{labeling function}.
		\end{enumerate}
	\end{definition}

A transition $(q,e,q')\in\dt$ is interpreted as when $\Scal$ is in state $q$ and event $e$ occurs $\Scal$
transitions to state $q'$. 
The event set $E$ can been rewritten as disjoint union
of \emph{observable} event set $E_o=\{e\in E|\ell(e)\in\Sig\}$ and \emph{unobservable} event set
$E_{uo}=\{e\in E|\ell(e)=\ep\}$. When an observable event occurs, its label can be observed; when an
unobservable event occurs, nothing can be observed. Transition function $\dt:Q\times E\to 2^Q$ is recursively
extended to $\dt:Q\times E^*\to 2^Q$ as follows: for all $q\in Q$, $u\in E^*$, and $a\in \Sig$, one has
$\dt(q,\ep)=\{q\}$ and $\dt(q,ua)=\bigcup_{p\in \dt(q,u)}\dt(p,a)$; equivalently, transition relation
$\dt\subset Q\times E 
\times Q$ is recursively extended to $\dt\subset Q\times E^* \times Q$ as follows:
(1) for all $q,q'\in Q$, $(q,\ep,q')\in\dt$ if and only if $q=q'$; (2) for all $q,q'\in Q$,
$s\in E^*$, and $e\in E$, one has $(q,se,q')\in\dt$, also denoted by $q\xrightarrow[]{se}
q'$, called \emph{transition sequence} or \emph{run}\index{run}, if and only if, $(q,s,q''),(q'',e,q')\in\dt$
for some $q''\in Q$. 

Labeling function $\ell:E\to\Sig\cup\{\ep\}$ 
is recursively extended to $\ell:E^*\cup E^{\omega}\to\Sig^*\cup\Sig^{\omega}$ as
$\ell(e_1e_2\dots)=\ell(e_1)\ell(e_2)\dots$ and $\ell(\ep)=\ep$. 
For all $E'\subset E$, $\ell(E'):=\{\ell(e)|e\in E'\}$.
Transitions $x\xrightarrow[]{e}x'$ with $\ell(e)=\ep$ (resp., $\ell(e)\ne\ep$) are called 
\emph{unobservable} (resp., \emph{observable}).
For $q\in Q$ and $s\in E^+$, $(q,s,q)$ is called a \emph{transition 
cycle} if $(q,s,q)\in\dt$. An \emph{observable transition cycle} is defined 
by a transition cycle with at least one observable transition. Analogously an \emph{unobservable
transition cycle} is defined by a transition cycle with no observable transition.
An LFSA is called \emph{deterministic} if $|Q_0|=1$ and
for all $q,q',q''\in Q$ and $e\in E$, if $(q,e,q'),(q,e,q'')\in\dt$ then $q'=q''$.

A state $q\in Q$ is called \emph{live} if $(q,e,q')\in \dt$ for some $e\in E$ and $q'\in Q$.
$\Scal$ is called \emph{live}/\emph{deadlock-free} if each of its reachable states is live.
\emph{A state $q'\in Q$ is reachable from a state}
$q\in Q$ if there exists $s\in E^+$ such that $q\xrightarrow[]{s}q'$.
\emph{A subset $Q'$ of $Q$ is reachable from a state} $q\in Q$ if some state of $Q'$ is 
reachable from $q$. Similarly \emph{a state $q\in Q$ is reachable from a subset $Q'$ of $Q$}
if $q$ is reachable from some state of $Q'$. A state $q\in Q$ is called
\emph{reachable} (\emph{in $\Scal$}) if either $q\in Q_0$ or it is reachable from some initial state.
For a transition $(q,e,q')\in\dt$, a transition $(q'',e',q''')\in\dt$ is called a \emph{predecessor} of
$(q,e,q')$ if either $q=q'''$ or $q$ is reachable from $q'''$; a transition $(q'',e',q''')$ is called 
a \emph{successor} of $(q,e,q')$ if either $q''=q'$ or $q''$ is reachable from $q'$.

The symbol $L(\Scal):=\{s\in E^*|(\exists q_0\in Q_0)(\exists q\in Q)[q_0\xrightarrow[]
{s}q]\}$ will be used to denote the set of finite-length event sequences generated by $\Scal$,
$L^{\omega}(\Scal)=\{e_1e_2\dots $$ \in E^{\omega}|(\exists q_0\in Q_0)
(\exists q_1,q_2,\dots $$\in Q)[q_0\xrightarrow[]{e_1}q_1\xrightarrow[]{e_2}\cdots]\}$ will 
denote the set of infinite-length event sequences generated by $\Scal$.
For each $\s\in\Sig^*$, $\Mt(\Scal,\s)$ denotes the \emph{current-state estimate},
i.e., the set of states that the system can be in when $\s$ has just been generated, i.e., 
$\Mt(\Scal,\s):=\{q\in Q|(\exists q_0\in Q_0)(\exists s\in E^*)[(\ell(s)=\s)\wedge(q_0\xrightarrow[]{s}q)]\}$.
$\LM({\Scal})$ denotes the \emph{language generated} by $\Scal$,
i.e., $\LM({\Scal}):=\{\s\in\Sig^*|\Mt(\Scal,\s)\ne\emptyset\}$.
$\LM^{\omega}({\Scal})$ denotes the $\omega$-\emph{language generated} by $\Scal$,
i.e., $\LM^{\omega}(\Scal):=\{\s\in\Sigma^{\omega}|(\exists s\in L^{\omega}(\Scal))
[\ell(s)$$=\s]\}$. For a subset $x\subset Q$ of states, its \emph{unobservable reach} $\UR(x)$ is defined by
$\bigcup_{q\in x}\bigcup_{s\in (E_{uo})^*}\dt(q,s)$.

\begin{example}\label{FA:exam18} 
	Consider the following LFSA $\Scal_1$. It is deterministic but not live ($q_1$ is not live).
	One sees $L(\Scal_1)=\{(e_1)^n,(e_1)^ne_2|n\ge 0\}$, $L^{\omega}(\Scal_1)=\{(e_1)^{\omega}\}$,
	$\LM(\Scal_1)=\{a^n,a^nb|n\ge 0\}$, $\LM^{\omega}(\Scal_1)=\{a^{\omega}\}$, and $\Mt(\Scal_1,b)=\{q_1\}$.
	\begin{figure}[H]
		\centering
		\begin{tikzpicture}[>=stealth',shorten >=1pt,auto,node distance=2.4 cm, scale = 1.0, transform shape,
	>=stealth,inner sep=2pt,
		empty/.style={}]

	\node[initial, state] (s0) {$q_0$};
	\node[state] (s1) [right of =s0] {$q_1$};
	
	\path [->]
	(s0) edge [loop above] node [above, sloped] {$e_1(a)$} (s0);
	\path [->]
	(s0) edge node [above, sloped] {$e_2(b)$} (s1);
	;

		\end{tikzpicture}
		\caption{LFSA $\Scal_1$, where $q_0$ is the initial state (with an input arrow from nowhere),
		$\ell(e_1)=a$, $\ell(e_2)=b$.}
	\end{figure} 
\end{example}

Next we introduce the main tool --- concurrent composition. The concurrent-composition
structure exactly arose from characterizing \emph{negation} of a strong version of detectability called
\emph{eventual strong detectability} in
\cite{Zhang2020DetPNFA}, where the eventual strong detectability is essentially different from and strictly
weaker than the notion of \emph{strong
detectability} proposed in \cite{Shu2007Detectability_DES}. In the concurrent composition of two 
automata, observable transitions with the same label are synchronized, while unobservable transitions 
interleave.

\begin{definition}[\cite{Zhang2019KDelayStrDetDES,Zhang2020DetPNFA}]\label{FA:def9}
	Consider two LFSAs $\Scal^i=(Q_i,E,\dt_i,Q_{0i},\Sig,\ell)$, $i=1,2$, we define the \emph{concurrent
	composition} $\CCa(\Scal^1,\Scal^2)$ of $\Scal^1$ and $\Scal^2$ 
	by
		\begin{equation}\label{FA:eqn8}
			\CCa(\Scal^1,\Scal^2)=(Q',E',\dt',Q_0',\Sig,\ell'),
		\end{equation} where
\begin{enumerate}
	\item $Q'=Q_1\times Q_2$;
	\item $E'=E_o'\cup E_{uo}'$, where $E_o'=\{(\breve{e},\breve{e}')|\breve{e},\breve{e}'\in E_o,
		\ell(\breve{e})=\ell(\breve{e}')\}$,
		$E_{uo}'=\{(\breve{e},\epsilon)|\breve{e}\in E_{uo}\}\cup
		\{(\epsilon,\breve{e})|\breve{e}\in E_{uo}\}$;
	\item for all $(\breve{q}_1,\breve{q}_1'),(\breve{q}_2,\breve{q}_2')\in Q'$, $(\breve{e},\breve{e}')
		\in E_o'$, $(\breve{e}'',\epsilon)\in E_{uo}'$, and $(\epsilon,\breve{e}''')\in E_{uo}'$,
		\begin{itemize}
			\item $((\breve{q}_1,\breve{q}_1'),(\breve{e},\breve{e}'),(\breve{q}_2,\breve{q}_2'))\in\dt'$ 
				if and only if $(\breve{q}_1,\breve{e},\breve{q}_2)\in\dt_1$, $(\breve{q}_1',\breve{e}',\breve{q}_2')\in\dt_2$,
			\item $((\breve{q}_1,\breve{q}_1'),(\breve{e}'',\epsilon),(\breve{q}_2,\breve{q}_2'))\in\dt'$ 
				if and only if $(\breve{q}_1,\breve{e}'',\breve{q}_2)\in\dt_1$, $\breve{q}_1'=\breve{q}_2'$,
			\item $((\breve{q}_1,\breve{q}_1'),(\epsilon,\breve{e}'''),(\breve{q}_2,\breve{q}_2'))\in\dt'$ 
				if and only if $\breve{q}_1=\breve{q}_2$, $(\breve{q}_1',\breve{e}''',\breve{q}_2')\in\dt_2$;
		\end{itemize}
	\item $Q_0'=Q_{01}\times Q_{02}$;
	\item for all $(\breve{e},\breve{e}')\in E'$, $\ell'((\breve{e},\breve{e}')):=\ell(\breve{e})=\ell(\breve{e}')$.
	\end{enumerate}
	Particularly if $\Scal^1=\Scal^2$, then $\CCa(\Scal^1,\Scal^2)=:\CCa(\Scal^1)$ is called the 
	\emph{self-composition} of $\Scal^1$.
\end{definition}

For an event sequence $s'\in (E')^{*}$, we use $s'(L)$ and $s'(R)$ to
denote its left and right
components, respectively. Similar notation is applied to states of $Q'$. 
In addition, for every $s'\in(E')^{*}$, we use $\ell(s')$
to denote $\ell(s'(L))$ or $\ell(s'(R))$, since $\ell(s'(L))=\ell(s'(R))$. 
In the above construction, $\CCa(\Scal^1,\Scal^2)$ aggregates all pairs of runs of 
$\Scal_1$ and runs of $\Scal_2$ that produce the same label sequence. 

\begin{example}\label{FA:exam19} 
	An LFSA $\Scal_2$ and its self-composition $\CCa(\Scal_2)$ are
	shown in Figure~\ref{FA:fig26}.
	\begin{figure}[H]
     \centering
	 \subfigure[$\Scal_2$.]{
\begin{tikzpicture}[>=stealth',shorten >=1pt,auto,node distance=2.5 cm, scale = 1.0, transform shape,
	>=stealth,inner sep=2pt]

	\node[initial, initial where =above, state] (s0) {$q_0$};
	\node[state] (s1) [right of =s0] {$q_1$};
	\node[state] (s2) [left of =s0] {$q_2$};
	
	\path [->]
	(s0) edge [loop below] node [below, sloped] {$\begin{matrix}e_1(a)\\e_2(\epsilon)\end{matrix}$} (s0)
	(s0) edge node [above, sloped] {$e_3(b)$} (s1)
	(s0) edge node [above, sloped] {$e_4(b)$} (s2)
	(s1) edge [loop right] node [above, sloped] {$e_5(b)$} (s1)
	;

        \end{tikzpicture}
		}\hspace{0.5cm}
        \centering
		\subfigure[$\CCa(\Scal_2)$.]{ 
\begin{tikzpicture}[>=stealth',shorten >=1pt,auto,node distance=3.0 cm, scale = 1.0, transform shape,
	>=stealth,inner sep=2pt]
		\node[elliptic state] (s1s2) {$q_1,q_2$};
		\node[elliptic state,initial,initial where=above] (s0s0) [right of =s1s2] {$q_0,q_0$};
		\node[elliptic state] (s2s1) [below of =s0s0] {$q_2,q_1$};
		\node[elliptic state] (s1s1) [right of =s0s0] {$q_1,q_1$};
		\node[elliptic state] (s2s2) [below of =s1s1] {$q_2,q_2$};

		\path [->]
		(s0s0) edge [out = 210, in = 240, loop] node [below, sloped] {$\begin{matrix}(e_1,e_1)\\(e_2,\epsilon)\\(\epsilon,e_2)\end{matrix}$} (s0s0)
		(s0s0) edge node [above, sloped] {$(e_3,e_4)$} (s1s2)
		(s0s0) edge node [above, sloped] {$(e_3,e_3)$} (s1s1)
		(s0s0) edge node [above, sloped] {$(e_4,e_3)$} (s2s1)
		(s0s0) edge node [above, sloped] {$(e_4,e_4)$} (s2s2)
		(s1s1) edge [loop below] node [below, sloped] {$(e_5,e_5)$} (s1s1)
		;
	\end{tikzpicture}
	}
	\caption{LFSA $\Scal_2$ (left) and its self-composition (right, only reachable states illustrated).}
	\label{FA:fig26} 
	\end{figure} 
\end{example}

\section{Detectability}

In this section, we show how to use the concurrent-composition method to verify strong detectability.

The study of the state detection problem dates back to the 1950s \cite{Moore1956} in computer science and
the 1960s \cite{Kalman1963MathDescriptionofLDS} in control science, respectively. In the former,
Moore studied initial-state detection (called Gedanken-experiment) of finite-state machines
which were called Moore machines later; in the latter, Kalman studied initial-state detection (called 
observability) of linear differential equations. The two seminal papers induces many research branches in
computer science and control, e.g., model-based testing of all kinds of reactive systems \cite{Model-BasedTesting2005}
in computer science and observability studies of all kinds of control systems, e.g., arranging from
linear systems \cite{Kalman1963MathDescriptionofLDS,Wonham1985LinearMultiControl}, 
to nonlinear systems \cite{Sontag1979,Conte2007AlgebraicMethodsNonlinearControlSystems,Isidori1999NonConSys}, 
to switched systems \cite{Tanwani2013ObsSwiLinSys}, and also to networked systems
\cite{Kibangou2016ObserNetSys,Angulo2019StrucObserNonNetSys}.

The state detection problem in DESs dates back to the 1980s
\cite{Ramadge1986ObservabilityDES,Ozveren1990ObservabilityDES},
and two widely accepted fundamental notions are \emph{strong detectability}\index{strong detectability} and 
\emph{weak detectability}\index{weak detectability} proposed in 2007 by Shu, Lin, and Ying
\cite{Shu2007Detectability_DES},
where the former implies that there is a delay $k$ such that for \emph{each} event
sequence generated by an LFSA, each prefix of its output sequence of length greater than $k$ allows reconstructing
the current state. The latter relaxes the former by changing \emph{each} to \emph{some}. When long-term behavior
is considered, we let the above conditions apply to all infinite-length generated event sequences (in this case
we call the notions $\omega$-detectability); when short-term behavior is considered, we let them apply to all
finite-length generated event sequences (in this case we call the notions $*$-detectability). 

\begin{definition}[$\omega$-SD \cite{Shu2007Detectability_DES}]\label{FA:def_omega_SD_DetFSA}
	An LFSA $\Scal$ is called $\omega$-\emph{strongly detectable}\index{strong detectability}
	if there exists a positive integer $k$ such that for each infinite-length event sequence $s\in L^{\omega}(\Scal)$,
	$|\Mt(\Scal,\s)|=1$ for every prefix $\s$ of $\ell(s)$ satisfying $|\s|>k$.
\end{definition}

\begin{definition}[$\omega$-WD \cite{Shu2007Detectability_DES}]\label{FA:def_omega_WD_DetFSA}
	An LFSA $\Scal$ is called $\omega$-\emph{weakly detectable}\index{weak detectability}
	if $L^{\omega}(\Scal)\ne\emptyset$ implies there exists an infinite-length event sequence 
	$s\in L^{\omega}(\Scal)$ such that for some positive integer 
	$k$, $|\Mt(\Scal,\s)|=1$ for every prefix $\s$ of $\ell(s)$ satisfying $|\s|> k$.
\end{definition}

\begin{definition}[$*$-SD]\label{FA:def_*SD_DetFSA}
	An LFSA $\Scal$ is called $*$-\emph{strongly detectable}\index{strong detectability}
	if there exists a positive integer $k$ such that for each finite-length event sequence $s\in L(\Scal)$,
	$|\Mt(\Scal,\s)|=1$ for every prefix $\s$ of $\ell(s)$ satisfying $|\s|>k$.
\end{definition}

\begin{definition}[$*$-WD]\label{FA:def_*WD_DetFSA}
	An LFSA $\Scal$ is called $*$-\emph{weakly detectable}\index{weak detectability}
	if there exists a finite-length event sequence $s\in L(\Scal)$ such that for some positive integer 
	$k$, $|\Mt(\Scal,\s)|=1$ for every prefix $\s$ of $\ell(s)$ satisfying $|\s|> k$.
\end{definition}

An exponential-time verification algorithm based on the notion of observer for weak detectability was given in
2007 \cite{Shu2007Detectability_DES}. Recently, verifying weak detectability was proven to be $\PSPACE$-complete
\cite{Zhang2017PSPACEHardnessWeakDetectabilityDES,Masopust2018ComplexityDetectabilityDES}.
In \cite{Shu2011GDetectabilityDES}, a detector method was used to verify strong detectability
in polynomial time, under the two assumptions of deadlock-freeness and divergence-freeness as mentioned above,
where the detector is a simplified version of the observer by splitting the states of the observer into 
subsets of cardinality $2$. We refer the reader to \cite[Remark 2]{Zhang2020DetPNFA} for why the detector method
depends on the two assumptions and without the two assumptions the detector method does not work generally.
In order to verify strong detectability, we choose to characterize its \emph{negation}
(which is essentially different from the way of directly verifying strong detectability adopted in
\cite{Shu2007Detectability_DES,Shu2011GDetectabilityDES}). By definition, the following proposition holds.

	\begin{proposition}[\cite{Zhang2020bookDDS}]\label{FA:prop2}
		An LFSA $\Scal$ is not $\omega$-strongly detectable (resp., $*$-strongly detectable)
		if and only if for every positive integer $k$
		there exists an infinite-length (resp., finite-length) event sequence $s\in L^{\omega}(\Scal)$ 
		(resp., $s\in L(\Scal)$) such that
		$|\Mt(\Scal,\s)|>1$ for some prefix $\s$ of $\ell(s)$ satisfying $|\s|>k$.
	\end{proposition}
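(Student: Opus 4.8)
The plan is to obtain the statement by taking the literal logical negation of Definition~\ref{FA:def_omega_SD_DetFSA} (resp.\ Definition~\ref{FA:def_*SD_DetFSA}) and then simplifying a single inequality. Writing $\omega$-strong detectability as the quantified sentence
\[
(\exists k > 0)(\forall s \in L^{\omega}(\Scal))(\forall \s \sqsubset \ell(s))\big[\,|\s| > k \Rightarrow |\Mt(\Scal,\s)| = 1\,\big],
\]
I would negate it mechanically, pushing the negation through the quantifiers: the leading $\exists k$ becomes $\forall k$, the two universal quantifiers become existential, and the implication $|\s|>k \Rightarrow |\Mt(\Scal,\s)|=1$ becomes the conjunction $|\s|>k \wedge |\Mt(\Scal,\s)| \ne 1$. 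This already yields ``for every $k>0$ there exist $s \in L^{\omega}(\Scal)$ and a prefix $\s$ of $\ell(s)$ with $|\s|>k$ and $|\Mt(\Scal,\s)| \ne 1$,'' which is the claimed statement except that it carries $\ne 1$ in place of $>1$.

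The one genuine (though small) point is to justify replacing $|\Mt(\Scal,\s)| \ne 1$ by $|\Mt(\Scal,\s)| > 1$. For this I would observe that the witnessing $\s$ is a prefix of $\ell(s)$ for some $s \in L^{\omega}(\Scal)$ (resp.\ $s \in L(\Scal)$), hence $\s \in \LM(\Scal)$: concretely, $\s = \ell(s')$ for a suitable prefix $s'$ of $s$, and $s'$ is itself generated by $\Scal$ from some initial state, so the current-state estimate $\Mt(\Scal,\s)$ contains at least the state reached along $s'$. Therefore $|\Mt(\Scal,\s)| \ge 1$ always holds on the relevant prefixes, and under this nonemptiness the conditions $|\Mt(\Scal,\s)| \ne 1$ and $|\Mt(\Scal,\s)| > 1$ coincide. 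Substituting $>1$ for $\ne 1$ completes the forward and backward directions simultaneously, since every step above is an equivalence.

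The $*$-case is handled by the identical argument, reading $L(\Scal)$ and ``finite-length'' everywhere in place of $L^{\omega}(\Scal)$ and ``infinite-length''; no separate reasoning is needed. I do not anticipate a real obstacle here: the result is a definitional unfolding, and the only place demanding attention is the nonemptiness observation that converts $\ne 1$ into $>1$ --- everything else is bookkeeping of quantifiers. The mild subtlety worth stating explicitly is that $\Mt(\Scal,\cdot)$ is defined over all event sequences carrying a given label, not only over prefixes of the fixed $s$, so nonemptiness of $\Mt(\Scal,\s)$ follows the moment $\s \in \LM(\Scal)$, which is exactly guaranteed by $\s \sqsubset \ell(s)$ with $s$ generated.
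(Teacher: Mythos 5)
Your proposal is correct and matches the paper's intent exactly: the paper offers no separate argument, stating only that the proposition holds ``by definition,'' i.e., by the same mechanical negation of Definitions~\ref{FA:def_omega_SD_DetFSA} and~\ref{FA:def_*SD_DetFSA} that you carry out. Your one substantive addition --- observing that $\s \sqsubset \ell(s)$ for a generated $s$ forces $\Mt(\Scal,\s)\ne\emptyset$, so that $|\Mt(\Scal,\s)|\ne 1$ and $|\Mt(\Scal,\s)|>1$ coincide --- is precisely the detail the paper leaves implicit, and you justify it correctly.
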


With the notion of self-composition of an LFSA $\Scal$, we give sufficient and necessary conditions for
negation of two versions of strong detectability, without any assumption. 

\begin{theorem}[\cite{Zhang2019KDelayStrDetDES,Zhang2020bookDDS}]\label{FA:thm10}
		An LFSA $\Scal$ is not $*$-strongly detectable if and only if in $\CCa(\Scal)$, there exists a run
		\begin{align}\label{FA:eqn6}
			q_0'\xrightarrow[]{s_1'}q_1'\xrightarrow[]{s_2'}q_1'\xrightarrow[]{s_3'}q_2'
		\end{align}
		satisfying
		\begin{align}\label{FA:eqn7}
			&q_0'\in Q_0';\ q_1',q_2'\in Q';\  s_1',s_2',s_3'\in(E')^*;\ \ell(s_2')\in\Sig^+;\ q_2'(L)\ne q_2'(R).
		\end{align}
		An LFSA $\Scal$ is not $\omega$-strongly detectable if and only if in $\CCa(\Scal)$, there exists a run
		\eqref{FA:eqn6} satisfying \eqref{FA:eqn7} and in $\Scal$, there exists a transition cycle reachable from
		$q_2'(L)$.
\end{theorem}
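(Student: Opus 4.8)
The plan is to exploit the defining property of the self-composition recorded after Definition~\ref{FA:def9}: a run $(\breve q_1,\breve q_1')\xrightarrow{s'}(\breve q_2,\breve q_2')$ exists in $\CCa(\Scal)$ if and only if $\Scal$ admits two runs $\breve q_1\xrightarrow{s'(L)}\breve q_2$ and $\breve q_1'\xrightarrow{s'(R)}\breve q_2'$ with $\ell(s'(L))=\ell(s'(R))$. I would first isolate this equivalence as a working lemma: the forward direction is immediate by projecting each transition of $\CCa(\Scal)$ onto its two components, and the backward direction interleaves the two given runs by firing all unobservable events on the left, then on the right, before each synchronized observable pair. Its consequence, which drives the whole argument, is that a state $(p,p')$ with $p\ne p'$ reachable from $Q_0'$ by a run of output $\s$ forces both $p$ and $p'$ into $\Mt(\Scal,\s)$, so $|\Mt(\Scal,\s)|>1$; conversely every $\s$ with $|\Mt(\Scal,\s)|>1$ is produced by such a run ending in a state with distinct components.

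For the $*$-case I would argue both directions through Proposition~\ref{FA:prop2}. For sufficiency, given a run~\eqref{FA:eqn6} satisfying~\eqref{FA:eqn7}, I pump the cycle: since $\ell(s_2')\in\Sig^+$, the runs $q_0'\xrightarrow{s_1'(s_2')^n}q_1'\xrightarrow{s_3'}q_2'$ have outputs of length at least $n\,|\ell(s_2')|$, which grows without bound as $n\to\infty$, and each ends in the ambiguous state $q_2'$; by the working lemma each yields a prefix $\s$ of arbitrarily large length with $|\Mt(\Scal,\s)|>1$, and Proposition~\ref{FA:prop2} gives non-$*$-strong-detectability. For necessity I take $k=|Q'|$ in Proposition~\ref{FA:prop2} to obtain an ambiguous output $\s$ with $|\s|>|Q'|$, realize it via the working lemma as a run of $\CCa(\Scal)$ from $Q_0'$ ending in a state $(p,p')$ with $p\ne p'$, and extract the cycle by pigeonhole.

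The one subtle point—and the step I expect to be the main obstacle—is guaranteeing $\ell(s_2')\in\Sig^+$ in this extraction. A naive pigeonhole over all configurations visited could return a purely unobservable cycle, whose pumping would not lengthen the output. To avoid this I would mark the $|\s|$ configurations occurring immediately after each observable transition of the run; since $|\s|>|Q'|$, two such ``checkpoints'' coincide, and the run segment between them is a cycle containing at least one observable transition, hence with label in $\Sig^+$. Taking $q_1'$ to be the repeated checkpoint and $q_2'$ the unchanged ambiguous endpoint produces exactly~\eqref{FA:eqn6}--\eqref{FA:eqn7}.

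Finally, for the $\omega$-case the same construction applies, with the extra condition of a transition cycle reachable from $q_2'(L)$ handling the passage from finite to infinite event sequences. For sufficiency I reuse the pumped run and extend its left branch: following $q_2'(L)$ along a path into the guaranteed transition cycle and looping forever produces some $s\in L^\omega(\Scal)$ whose output still carries the ambiguous prefix $\s$, so Proposition~\ref{FA:prop2} yields non-$\omega$-strong-detectability. For necessity, Proposition~\ref{FA:prop2} now supplies an \emph{infinite} $s\in L^\omega(\Scal)$ with an ambiguous prefix $\s$, $|\s|>|Q'|$; I take the shortest prefix of $s$ with output $\s$ as the left branch, so its endpoint becomes $q_2'(L)$ and the infinite suffix of $s$ continues from $q_2'(L)$. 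Finiteness of $Q$ forces this suffix to revisit a state, yielding the required transition cycle reachable from $q_2'(L)$, while the checkpoint pigeonhole supplies the observable cycle in $\CCa(\Scal)$ exactly as before.
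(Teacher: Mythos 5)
Your proposal is correct and takes essentially the same route as the paper's proof: both argue through Proposition~\ref{FA:prop2}, realize an ambiguous output $\s$ with $|\s|>|Q|^2$ as a run of $\CCa(\Scal)$ ending in a state with distinct components, extract the cycle by pigeonhole, prove sufficiency by pumping $s_2'$ (the paper uses $s_1'(L)(s_2'(L))^{k+1}s_3'(L)$), and settle the $\omega$-case by appending or extracting a transition cycle reachable from $q_2'(L)$. Your checkpoint argument for securing $\ell(s_2')\in\Sig^+$ simply makes rigorous a detail the paper's one-line pigeonhole appeal leaves implicit, so it is a faithful (and slightly more careful) rendering of the same proof.
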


\begin{proof}
	We use Proposition~\ref{FA:prop2} to prove this theorem. We first consider $*$-strong detectability.

	``only if'': Assume $\Scal$ is not $*$-strongly detectable. Then by Proposition~\ref{FA:prop2}, choose 
	$k=|Q|^2$, there exists $s_k\in L(\Scal)$ and $\s\in\Sig^+$ such that  $\Mt(\Scal,\s)>1$, $\s\sqsubset
	\ell(s)$, and $|\s|>k$. Then in $\CCa(\Scal)$, there exists a run $q_0'\xrightarrow[]{s'} q'$ such that
	$q_0'\in Q_0'$, $\ell(s')=\s$ and $q'(L)\ne q'(R)$. Since $|\s|>|Q|^2$ and there exist at most $|Q|^2$ distinct
	states in $\CCa(\Scal)$, by the pigeonhole principle\footnote{If $n$ items are put into $m$ containers,
	with $n>m$, then at least one container must contain more than one item.},
	the run $q_0'\xrightarrow[]{s'} q'$ can be rewritten as
	$q_0'\xrightarrow[]{s_1'} q_1'\xrightarrow[]{s_2'} q_1'\xrightarrow[]{s_3'} q'$, where $\ell(s_2')\in\Sig^+$.

	``if'': Assume in $\CCa(\Scal)$ there exists a run \eqref{FA:eqn6} satisfying \eqref{FA:eqn7}. We choose 
	event sequence $s:=s_1'(L)(s_2'(L))^{k+1}s_3'(L)\in L(\Scal)$, then $|\ell(s)|>k$ and $|\Mt(\Scal,\ell(s))|>1$.
	By Proposition~\ref{FA:prop2}, $\Scal$ is not $*$-strongly detectable.

	We second consider $\omega$-strong detectability. Because a transition cycle reachable from $q_2'(L)$ can
	be repeated arbitrarily often, resulting in an infinite-length run starting from $q_2'(L)$. Then
	based on the above argument for $*$-strong detectability, the sufficient
	and necessary condition for $\omega$-strong detectability also holds.
\end{proof}

\begin{example}
	Reconsider the LFSA $\Scal_2$ in Figure~\ref{FA:fig26} (left) and its self-composition $\CCa(\Scal_2)$
	in Figure~\ref{FA:fig26} (right). In $\CCa(\Scal_2)$, one sees a run 
	$$(q_0,q_0)\xrightarrow[]{(e_1,e_1)}(q_0,q_0)\xrightarrow[]{(e_3,e_4)}(q_1,q_2)$$ such that
	$\ell((e_1,e_1))=a$ is of positive length and $q_1\ne q_2$. Then by Theorem~\ref{FA:thm10}, $\Scal_2$ 
	is not $*$-strongly detectable. In addition, in $\Scal_2$, there is a self-loop on $q_1$, hence also
	by Theorem~\ref{FA:thm10}, $\Scal_2$ is not $\omega$-strongly detectable.
\end{example}

%

\section{Diagnosability}

In order to define \emph{diagnosability}\index{diagnosability} for an LFSA $\Scal=(Q,E,\dt,Q_0,\Sig,\ell)$, 
we specify a subset $\Ef\subset E$ of faulty events. Diagnosability describes whether one can use an observed
output sequence to determine whether some faulty event has occurred. For an event sequence $s\in E^*$, 
$\Ef\in s$ denotes that some element of $\Ef$ appears in $s$.

\begin{definition}[Diag \cite{Sampath1995DiagnosabilityDES}]\label{FA:def11}
	Consider an LFSA $\Scal=(Q,E,\dt,Q_0,\Sig,\ell)$ and a subset ${\Ef}\subset E$ of faulty events.
		$\Scal$ is called \emph{$\Ef$-diagnosable}\index{diagnosability} if 
		\begin{align*}
			&(\exists k\in\N)(\forall s\in L(\Scal)\cap E^*{\Ef})(\forall s':ss'\in L(\Scal))\\
			&[(|s'|> k)\implies {\bf D}],
		\end{align*}
		where ${\bf D}=(\forall s''\in L(\Scal))[(\ell(s'')=\ell(ss')) \implies ({\Ef}\in s'')]$.
\end{definition}

Intuitively, if $\Scal$ is $\Ef$-diagnosable, then once a faulty event (e.g., the last event in $s$)
occurs, one can make sure that some faulty event has occurred after at least $k$ subsequent events (e.g., 
$s'$) occur by observing output sequences.

In 1995, Sampath et al. \cite{Sampath1995DiagnosabilityDES} proposed a
diagnoser method to verify diagnosability. The diagnoser of an LFSA records state estimates
along observed output sequences and also records fault propagation along transitions of states of the 
LFSA. The same as the observer mentioned above, the diagnoser also has exponential complexity, and diagnosability
is verifiable by a relatively simple cycle 
condition on the diagnoser. Hence diagnosability can be verified in exponential time. Also the same as 
the case that the observer method depends on the two assumptions of deadlock-freeness and divergence-freeness
when being applied to verify detectability \cite{Shu2007Detectability_DES}, the diagnoser 
method also depends on the two assumptions when being applied to verify diagnosability. Later in 2001, a twin-plant
method with polynomial complexity was proposed by Jiang et al. \cite{Jiang2001PolyAlgorithmDiagnosabilityDES} to verify
diagnosability in polynomial time. Because in a twin plant, only observable transitions are synchronized, the method
also depends on the two assumptions. One year later, Yoo and Lafortune 
\cite{Yoo2002DiagnosabiliyDESPTime} proposed a verifier method to verify diagnosability 
in polynomial time, where in a verifier, observable transitions are synchronized, unobservable transitions
are also considered but their events' positions (left or right) are neglected, so that the method also depends on the
two assumptions. From then on, in many papers, verification of all kinds of variants of diagnosability
depends on the two assumptions. The two assumptions were removed by Cassez and Tripakis
\cite{Cassez2008FaultDiagnosisStDyObser} in 2008
by using a generalized version of the twin-plant structure to verify \emph{negation} of diagnosability
in polynomial time, where in the generalized version of the twin plant, observable transitions are synchronized,
unobservable transitions are also considered but their events' positions (left or right) are also considered. 
The generalized version of the twin-plant structure and the concurrent-composition structure \cite{Zhang2020DetPNFA}
were proposed in a similar way: they were proposed by characterizing \emph{negation} of inference-based 
properties.

In order to verify $E_f$-diagnosability of $\Scal$, we use the concurrent composition $\CCa(\Scal_{\fsf},
\Scal_{\nsf})$ (similar to but simpler than the generalized version of the twin plant proposed in 
\cite{Cassez2008FaultDiagnosisStDyObser})
of the faulty subautomaton $\Scal_{\fsf}$ and the normal subautomaton $\Scal_{\nsf}$, where 
$\Scal_{\fsf}$ is obtained from $\Scal$ by only keeping faulty transitions and all their predecessors and 
successors, $\Scal_{\nsf}$ is obtained from $\Scal$ by removing all faulty transitions. $\CCa(\Scal_{\fsf},
\Scal_{\nsf})$ is computed similarly as in Definition~\ref{FA:def9}.

\begin{theorem}[\cite{Zhang2021UnifyingDetDiagPred}]\label{FA:thm13}  
	Consider an LFSA $\Scal=(Q,E,\dt,Q_0,\Sig,\ell)$ and a subset ${\Ef}\subset E$ of faulty events.
	$\Scal$ is not ${\Ef}$-diagnosable if and only if in $\CCa(\Scal_{\fsf},\Scal_{\nsf})$, there exists a run
	\begin{align}\label{FA:eqn9} 
		q_0'\xrightarrow[]{s_1'}q_1'\xrightarrow[]{e'}q_2'\xrightarrow[]{s_2'}q_3'\xrightarrow[]{s_3'}q_3'
	\end{align}
	satisfying
	\begin{align}\label{FA:eqn10}
		&q_0'\in Q_0';\ e'(L)\in {\Ef};\ |s_3'(L)|>0.
	\end{align}
\end{theorem}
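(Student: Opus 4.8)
The plan is to mirror the proof of Theorem~\ref{FA:thm10}, reducing the statement to a pigeonhole-and-pumping argument on runs of $\CCa(\Scal_{\fsf},\Scal_{\nsf})$, while additionally tracking the position of the faulty event on the left component. First I would write out the negation of Definition~\ref{FA:def11}: $\Scal$ is not $\Ef$-diagnosable if and only if for every $k\in\N$ there exist $s\in L(\Scal)\cap E^*\Ef$, an extension $s'$ with $ss'\in L(\Scal)$ and $|s'|>k$, and a witness $s''\in L(\Scal)$ with $\ell(s'')=\ell(ss')$ and $\Ef\notin s''$. The bridge to the concurrent composition is the observation that $ss'$ is a faulty run, so each of its transitions is a predecessor of its final faulty transition, the faulty transition itself, or a successor of it; hence $ss'$ is a run of $\Scal_{\fsf}$. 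Dually, $s''$ is fault-free and therefore a run of $\Scal_{\nsf}$. Since $\ell(ss')=\ell(s'')$, the pair lifts to a run of $\CCa(\Scal_{\fsf},\Scal_{\nsf})$ with left projection $ss'$ and right projection $s''$.

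For the ``only if'' direction I would choose $k=|Q|^2$, which dominates the number of states of $\CCa(\Scal_{\fsf},\Scal_{\nsf})$ (at most $|Q|^2$, since both subautomata have state sets contained in $Q$). In the lifted run the final faulty event of $s$ appears as a single composition transition $q_1'\xrightarrow{e'}q_2'$ with $e'(L)\in\Ef$ --- an observable pair if the fault is observable, and a left-solo transition $(\cdot,\epsilon)$ otherwise --- so the run splits as $q_0'\xrightarrow{s_1'}q_1'\xrightarrow{e'}q_2'\xrightarrow{r'}q'$ where the left projection of $r'$ is exactly $s'$. As $|s'|>|Q|^2$, the segment $r'$ contains more than $|Q|^2$ transitions that move the left component; recording the composition states reached immediately after each such left-moving transition and applying the pigeonhole principle yields two equal states, hence a cycle $q_3'\xrightarrow{s_3'}q_3'$ inside $r'$. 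Since this cycle runs between two left-moving positions it contains at least one left move, giving $|s_3'(L)|>0$; writing $s_2'$ for the sub-run from $q_2'$ to $q_3'$ then realizes \eqref{FA:eqn9} and \eqref{FA:eqn10}.

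For the ``if'' direction I would pump the cycle. Given a run \eqref{FA:eqn9} satisfying \eqref{FA:eqn10}, the left projection $s_1'(L)e'(L)s_2'(L)(s_3'(L))^m$ is a run of $\Scal_{\fsf}$, hence of $\Scal$, whose fault $e'(L)\in\Ef$ occurs right after $s_1'(L)$; the right projection $s_1'(R)e'(R)s_2'(R)(s_3'(R))^m$ is a fault-free run of $\Scal_{\nsf}$, hence of $\Scal$; and the two carry the same label sequence because every composition transition $(\breve e,\breve e')$ satisfies $\ell(\breve e)=\ell(\breve e')$. Setting $s:=s_1'(L)e'(L)$ and $s':=s_2'(L)(s_3'(L))^m$, the hypothesis $|s_3'(L)|>0$ forces $|s'|\to\infty$ as $m\to\infty$, so for every $k$ we exhibit a faulty run $s$, an extension $s'$ longer than $k$, and a fault-free look-alike $s''$ with $\ell(s'')=\ell(ss')$; by the negation of Definition~\ref{FA:def11}, $\Scal$ is not $\Ef$-diagnosable.

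The step I expect to be the main obstacle is guaranteeing $|s_3'(L)|>0$ rather than merely extracting some cycle. A naive pigeonhole over all states visited after the fault may return a cycle composed solely of right-solo $(\epsilon,\cdot)$ transitions, which cannot pump the faulty left trajectory and is therefore useless; the fix is to restrict attention to the states reached immediately after left-moving transitions, of which the bound $|s'|>|Q|^2$ guarantees enough to force a repetition. A secondary point requiring care is the lifting step: one must verify that a faulty run indeed lies in $\Scal_{\fsf}$ (using that all predecessors and successors of the faulty transition are retained, including the initial state at the source of the first retained transition) and that the standard interleaving of two equally-labelled runs yields a composition run with the prescribed left and right projections.
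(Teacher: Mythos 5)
Your proposal is correct and takes essentially the same route as the paper's proof, which likewise negates Definition~\ref{FA:def11} and then appeals to the pigeonhole principle on $\CCa(\Scal_{\fsf},\Scal_{\nsf})$ for a sufficiently large $k$. The paper leaves the details implicit; your refinements --- lifting $(ss',s'')$ via the predecessor/successor structure of $\Scal_{\fsf}$, and pigeonholing only over states reached after left-moving transitions so as to guarantee $|s_3'(L)|>0$ --- are exactly the correct way to fill in its one-line argument.
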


\begin{proof}
	By definition, $\Scal$ is not $\Ef$-diagnosable if and only if
	\begin{align*}
		&(\forall k\in\N)(\exists s_k\in L(\Scal)\cap E^*{\Ef})(\exists s_k':s_ks_k'\in L(\Scal))(\exists s_k''\in L(\Scal))\\
		&[(|s_k'|> k) \wedge (\ell(s_k'')=\ell(s_ks_k')) \wedge ({\Ef}\notin s_k'')].
	\end{align*}
	Choose sufficiently large $k$, by the finiteness of the number of states of $\Scal$ and the pigeonhole principle,
	$\Scal$ is not $\Ef$-diagnosable if and only if in $\CCa(\Scal_{\fsf},\Scal_{\nsf})$, there exists a run 
	\eqref{FA:eqn9} satisfying \eqref{FA:eqn10}.
\end{proof}

\begin{example}\label{FA:exam22} 
	Consider the LFSA $\Scal_3$ in Figure~\ref{FA:fig28}. We compute part of the concurrent composition
	$\CCa(\Scal_{3\fsf},\Scal_{3\nsf})$ as in Figure~\ref{FA:fig29}. In $\CCa(\Scal_{3\fsf},\Scal_{3\nsf})$, 
	one sees a run $(q_0,q_0)\xrightarrow[]{(e_1,e_1)}(q_1,q_2)\xrightarrow[]{(e_2,e_2)}(q_3,q_4)\xrightarrow[] 
	{(\f,\ep)}(q_5,q_4)\xrightarrow[]{(u,\ep)}(q_5,q_4)$, which satisfies \eqref{FA:eqn10}. Then by 
	Theorem~\ref{FA:thm13}, $\Scal_3$ is not $\fs$-diagnosable.
	\begin{figure}[!htpb]
			\tikzset{global scale/.style={
    scale=#1,
    every node/.append style={scale=#1}}}
		\begin{center}
			\begin{tikzpicture}[global scale = 1.0,
				>=stealth',shorten >=1pt,thick,auto,node distance=2.5 cm, scale = 1.0, transform shape,
	->,>=stealth,inner sep=2pt,
				every transition/.style={draw=red,fill=red,minimum width=1mm,minimum height=3.5mm},
				every place/.style={draw=blue,fill=blue!20,minimum size=7mm}]
				\tikzstyle{emptynode}=[inner sep=0,outer sep=0]
				\node[state, initial, initial where = left] (x0) {$q_0$};
				\node[state] (x1) [right of = x0] {$q_1$};
				\node[state] (x2) [below of = x1] {$q_2$};
				\node[state] (x3) [right of = x1] {$q_3$};
				\node[state] (x4) [below of = x3] {$q_4$};
				\node[state] (x5) [right of = x3] {$q_5$};

				\path[->]
				(x0) edge node [above, sloped] {$e_1(a)$} (x1)
				(x0) edge node [above, sloped] {$e_1(a)$} (x2)
				(x1) edge node [above, sloped] {$e_2(b)$} (x3)
				(x2) edge node [above, sloped] {$e_2(b)$} (x4)
				(x3) edge node [above, sloped] {${\f}(\ep)$} (x5)
				(x5) edge [loop right] node {$u(\ep)$} (x5)
				(x4) edge [loop right] node {$u(\ep)$} (x4)
				;
			\end{tikzpicture}
	\end{center}
	\caption{LFSA $\Scal_3$, where only event $\f$ is faulty.}
	\label{FA:fig28} 
	\end{figure}
	\begin{figure}[!htpb]
			\tikzset{global scale/.style={
    scale=#1,
    every node/.append style={scale=#1}}}
		\begin{center}
			\begin{tikzpicture}[global scale = 1.0,
				>=stealth',shorten >=1pt,thick,auto,node distance=3.2 cm, scale = 1.0, transform shape,
	->,>=stealth,inner sep=2pt,
				every transition/.style={draw=red,fill=red,minimum width=1mm,minimum height=3.5mm},
				every place/.style={draw=blue,fill=blue!20,minimum size=7mm}]
				\tikzstyle{emptynode}=[inner sep=0,outer sep=0]
				\node[elliptic state, initial, initial where = left] (00) {$q_0,q_0$};
				\node[elliptic state] (12) [right of = 00] {$q_1,q_2$};
				\node[elliptic state] (34) [right of = 12] {$q_3,q_4$};
				\node[elliptic state] (54) [right of = 34] {$q_5,q_4$};

				\path[->]
				(00) edge node {$(e_1,e_1)$} (12)
				(12) edge node {$(e_2,e_2)$} (34)
				(34) edge [loop above] node {$(\ep,u)$} (34)
				(34) edge node {$(\f,\ep)$} (54)
				(54) edge [loop right] node {$\begin{matrix}(\ep,u)\\(u,\ep)\end{matrix}$} (54)
				;
			\end{tikzpicture}
	\end{center}
	\caption{Part of $\CCa(\Scal_{3\fsf},\Scal_{3\nsf})$, where $\Scal_3$ is shown in Figure~\ref{FA:fig28}.}
	\label{FA:fig29} 
	\end{figure}
\end{example}

\section{Predictability}

Differently from diagnosability, \emph{predictability}\index{predictability} describes whether one can use an 
observed output sequence to make sure some faulty event will be certain to occur.

\begin{definition}[Pred \cite{Genc2009PredictabilityDES}]\label{FA:def12} 
	Consider an LFSA $\Scal=(Q,E,\dt,Q_0,\Sig,\ell)$ and a subset ${\Ef}\subset E$ of faulty events.
		$\Scal$ is called \emph{$\Ef$-predictable}\index{predictability} if 
			\begin{align*}
				&(\exists k\in\N)(\forall s\in L(\Scal)\cap E^*{\Ef})(\exists s'\sqsubset s:{\Ef}\notin s')
				(\forall uv\in L(\Scal))\\
				&[((\ell(s')=\ell(u)) \wedge ({\Ef}\notin u) \wedge (|v|> k)) \implies ({\Ef}\in v)].
			\end{align*}	
	\end{definition}

Intuitively, if $\Scal$ is $\Ef$-predictable, then once a faulty event will definitely occur,
then before any faulty event occurs, one
can make sure that after a common time delay (representing the number of occurrences of events, e.g., $k$),
all generated event sequences with the same observation
without any faulty event must be continued by an event sequence containing a faulty event, so as to raise an alarm
to definite occurrence of some faulty event.

In order to verify $\Ef$-predictability of $\Scal$, we use the self-composition $\CCa(\Scal_{\nsf},
\Scal_{\nsf})$ of the normal subautomaton $\Scal_{\nsf}$. $\CCa(\Scal_{\nsf},
\Scal_{\nsf})$ is also computed similarly as in Definition~\ref{FA:def9}.

	\begin{theorem}[\cite{Zhang2021UnifyingDetDiagPred}]\label{FA:thm14}  
		Consider an LFSA $\Scal=(Q,E,\dt,Q_0,\Sig,\ell)$ and a subset ${\Ef}\subset E$ of faulty events.
		$\Scal$ is not $\Ef$-predictable if and only if in $\CCa(\Scal_{\nsf},\Scal_{\nsf})$, there exists a run
		\begin{align}\label{FA:eqn11}
			q_0'\xrightarrow[]{s_1'}q_1'
		\end{align}
		such that
		\begin{subequations}\label{FA:eqn12}
			\begin{align}
				&q_0'\in Q_0';\\
				&(q_1'(L),\ef,q)\in\dt\text{ for some }\ef\in {\Ef}\text{ and  }q\in Q;\\
				&\text{ in }\Scal_{\nsf},\text{ there is a transition cycle
				reachable from }q_1'(R).
			\end{align}
		\end{subequations}
\end{theorem}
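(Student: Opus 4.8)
The plan is to verify the characterization through the logical negation of $\Ef$-predictability, mirroring the strategy used for Theorems~\ref{FA:thm10} and~\ref{FA:thm13}. Negating Definition~\ref{FA:def12} gives: $\Scal$ is not $\Ef$-predictable if and only if for every $k\in\N$ there is a faulty sequence $s\in L(\Scal)\cap E^*\Ef$ such that for \emph{every} fault-free prefix $s'\sqsubset s$ (with $\Ef\notin s'$) one can find $uv\in L(\Scal)$ with $\ell(s')=\ell(u)$, $\Ef\notin u$, $|v|>k$, and $\Ef\notin v$. The geometric picture behind \eqref{FA:eqn12} is that the two projections $s_1'(L)$ and $s_1'(R)$ of a run of $\CCa(\Scal_{\nsf},\Scal_{\nsf})$ are two fault-free trajectories sharing one observation, where the left one has reached a state at which a faulty event is imminent (condition (b)), while the right one can be prolonged fault-free forever by entering a transition cycle of $\Scal_{\nsf}$ (condition (c)); this indistinguishability is exactly what prevents prediction.

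For the ``only if'' direction, I would assume $\Scal$ is not $\Ef$-predictable, fix $k:=|Q|$, and extract from the negation a faulty $s$ together with its maximal fault-free prefix $w$, so that $s=w\ef s_2$ with $\ef\in\Ef$ the first faulty event. Applying the negation to the prefix $s'=w$ yields $uv\in L(\Scal)$ with $\ell(u)=\ell(w)$, $u$ and $v$ fault-free, and $|v|>k$. Since $w$ and $u$ are fault-free and equally observed, they run inside $\Scal_{\nsf}$ and synchronize into a single run $q_0'\xrightarrow[]{s_1'}q_1'$ of $\CCa(\Scal_{\nsf},\Scal_{\nsf})$, where $q_1'(L)$ is the state reached by $w$ and $q_1'(R)$ the state reached by $u$ in $\Scal_{\nsf}$. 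Condition (b) holds because $\ef$ fires from $q_1'(L)$; condition (c) follows from the pigeonhole principle, since the fault-free run of $v$ of length $>|Q|$ out of $q_1'(R)$ must repeat a state of $\Scal_{\nsf}$ and hence traverse a transition cycle reachable from $q_1'(R)$.

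For the ``if'' direction, I would start from a run \eqref{FA:eqn11} satisfying \eqref{FA:eqn12}, set $w:=s_1'(L)$ and $u:=s_1'(R)$ (both fault-free with $\ell(w)=\ell(u)$), and use condition (b) to pick $\ef\in\Ef$ with $(q_1'(L),\ef,q)\in\dt$, so that $s:=w\ef\in L(\Scal)\cap E^*\Ef$ is the candidate witness. Condition (c) supplies a fault-free path from $q_1'(R)$ into a transition cycle of $\Scal_{\nsf}$, which unwinds to fault-free continuations of $q_1'(R)$ of arbitrary length. Given any $k$ and any fault-free prefix $s'\sqsubset w$ of $s$, I would match $s'$ by the prefix $u_0\sqsubset u$ with $\ell(u_0)=\ell(s')$, reaching a state $p_R$ on the right trajectory toward $q_1'(R)$, and set $v$ to be the remainder of $u$ followed by the path-plus-cycle of length $>k$; then $u_0v$ is fault-free, satisfies $\ell(u_0)=\ell(s')$, and certifies the negation for this prefix and this $k$.

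The main obstacle I anticipate is the universal quantifier over \emph{all} fault-free prefixes $s'$ in the negated definition: a single run of $\CCa(\Scal_{\nsf},\Scal_{\nsf})$ must simultaneously furnish an indistinguishable fault-free continuation for each such $s'$. The resolution is that condition (c) guarantees \emph{unbounded} fault-free behaviour not merely at the terminal state $q_1'(R)$ but along the entire right trajectory leading to it, so truncating $u$ to match any shorter observation still leaves access to the same cycle. A secondary technical point is the synchronization step, namely that two equally-observed fault-free event sequences always induce a run of the concurrent composition despite the free interleaving of unobservable events; this follows directly from Definition~\ref{FA:def9}.
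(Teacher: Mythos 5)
Your proposal is correct and follows essentially the same route as the paper's own (much terser) proof: negate Definition~\ref{FA:def12}, instantiate $k$ with a bound of order $|Q|$ so the pigeonhole principle turns the long fault-free continuation $v$ into a transition cycle of $\Scal_{\nsf}$ reachable from $q_1'(R)$, and use the defining property of $\CCa(\Scal_{\nsf},\Scal_{\nsf})$ to synchronize the two equally-labeled fault-free runs. Your explicit treatment of the universal quantifier over fault-free prefixes in the ``if'' direction (truncating $u$ while retaining access to the cycle) is a detail the paper leaves implicit, but it is the same argument, just fully spelled out.
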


\begin{proof}
	By definition, $\Scal$ is not $\Ef$-predictable if and only if
	\begin{align*}
		&(\forall k\in\N)(\exists s_k\in L(\Scal)\cap E^*{\Ef})(\forall s_k'\sqsubset s_k:{\Ef}\notin s_k')
		(\exists u_kv_k\in L(\Scal))\\
		&[(\ell(s_k')=\ell(u_k)) \wedge ({\Ef}\notin u_kv_k) \wedge (|v_k|> k) ].
	\end{align*}
	Choose sufficiently large $k$, by the finiteness of the number of states of $\Scal$ and the pigeonhole principle,
	$\Scal$ is not $\Ef$-predictable if and only if in $\CCa(\Scal_{\nsf},\Scal_{\nsf})$, there exists a run 
	\eqref{FA:eqn11} satisfying \eqref{FA:eqn12}.
\end{proof}

\begin{example}\label{FA:exam23}
	Reconsider the LFSA $\Scal_3$ in Figure~\ref{FA:fig28}. Part of the concurrent composition
	$\CCa(\Scal_{3\nsf},\Scal_{3\nsf})$ can be obtained from 
	Figure~\ref{FA:fig29} by removing the transition with event $({\f},\ep)$.
	Then in $\CCa(\Scal_{3\nsf},\Scal_{3\nsf})$, 
	one sees a run $(q_0,q_0)\xrightarrow[]{(e_1,e_1)}(q_1,q_2)\xrightarrow[]{(e_2,e_2)}(q_3,q_4)$,
	which satisfies \eqref{FA:eqn12}. Then by Theorem~\ref{FA:thm14}, $\Scal_3$ is not $\fs$-predictable.
\end{example}

\section{Standard opacity}
\label{sec:StanOpa}

Opacity is a concealment-based (confidentiality) property which was first proposed by Mazar\'{e}
\cite{Mazare2004Opacity} in 2004.
Opacity describes whether the visit of \emph{secrets} in a system
could be forbidden from being leaked to an external \emph{intruder}, given that the intruder knows complete 
knowledge of the system's structure but can only see generated outputs.
It has been widely used to describe all kinds of scenarios in cyber security/privacy problems
such as the dinning cryptographers problem \cite{Chaum1988DiningCryptographers},
encryption using pseudo random generators and tracking of mobile agents in sensor networks 
\cite{Saboori2010PhDThesisOpacity}, ensuring privacy in location-based services \cite{Wu2014PhDThesisOpacity},
the indoor location privacy problem using obfuscation 
\cite{Wu2018SyntheObfuscation,Wu2016ObfuscatorSynthesis,Goes20181IndoorPrivacyObsfucation}.

In \cite{Bryans2008OpacityTransitionSystems}, a general run-based opacity framework
was proposed for labeled transition systems (which contain LFSAs, labeled Petri nets, etc., as subclasses),
where such a system is opaque if for every secret run,
there exists a non-secret run such that the two runs produce the same observation.
Later on, two special types of secrets are studied: subsets of event sequences (aka traces)
and subsets of states. According to the two types of secrets, opacity is classified into language-based opacity
and state-based opacity. The former refers to for every secret generated trace, there is a non-secret
generated trace such that they produce the same observation; the latter refers to whenever a run passes through a 
secret state at some instant, there exists another run that does not pass any secret state at the same 
instant such that the two runs produce the same observation. Language-based opacity is more involved, because 
it is already undecidable for LFSAs which contain no observable events \cite{Bryans2008OpacityTransitionSystems};
particularly, when secret languages and non-secret languages are regular, language-based opacity is decidable 
in exponential time \cite{Lin2011OpacityDES}. State-based opacity is relatively simpler.
When the time instant of visiting secret states is specified as
the initial time, the current time, any past time, and at most $K$ steps prior to the current time,
the notions of state-based opacity can be formulated as \emph{initial-state opacity} (ISO)
\cite{Saboori2013InitialStateOpacity},
\emph{current-state opacity} (CSO) \cite{Cassez2009DynamicOpcity}, 
\emph{infinite-step opacity}  (InfSO)
\cite{Saboori2012InfiniteStepOpacity},
and \emph{$K$-step opacity} ($K$SO) \cite{Saboori2009KStepOpacityFA}, respectively. The problems of verifying
the four types of state-based opacity are $\PSPACE$-complete in LFSAs \cite{Saboori2013InitialStateOpacity,Cassez2009DynamicOpcity,Saboori2012InfiniteStepOpacity},
the four properties and the special case of language-based opacity studied in \cite{Lin2011OpacityDES}
are polynomially reducible to each other 
\cite{Wu2013ComparativeAnalysisOpacity,Balun2021ComparingOpacityinDES}.


Next, we show a concurrent-composition method to verify the four properties of state-based opacity. One can directly
use an observer to verify CSO \cite{Cassez2009DynamicOpcity,Saboori2007CurrentStateOpacity} and
directly use a reverse observer 
to verify ISO \cite{Wu2013ComparativeAnalysisOpacity}. The verification methods in 
\cite{Cassez2009DynamicOpcity,Saboori2007CurrentStateOpacity,Wu2013ComparativeAnalysisOpacity} are currently the most efficient methods for
verifying CSO and ISO. However, verifying InfSO and $K$SO are more difficult, currently one cannot see any 
possibility of directly using an observer and a reverse observer to do their verification. The
concurrent-composition method
to be shown to verify InfSO and $K$SO is more efficient than the initial-state-estimator method 
\cite{Saboori2012InfiniteStepOpacity} and the two-way-observer method (i.e., the alternating product of an observer
and a reverse observer) \cite{Yin2017TWObserverInfiniteStepOpacity}.

\begin{definition}[ISO \cite{Saboori2013InitialStateOpacity}]\label{FA:def_ISO}
	Consider an LFSA $\Scal=(Q,E,\dt,Q_0,\Sig,\ell)$ and a subset $\QS\subset Q$ of secret states.
	$\Scal$ is called \emph{initial-state opaque} (ISO) \emph{with respect to $\QS$}
	if for every run $q_0\xrightarrow[]{s}q$ with 
	$q_0\in Q_0\cap \QS$, there exists a run $q_0'\xrightarrow[]{s'}q'$ such that $q_0'\in Q_0\setminus \QS$
	and $\ell(s)=\ell(s')$.
\end{definition}

From now on, ISO is short for ``initial-state opacity'' or ``initial-state opaque'' adapted to the 
context. Analogous for CSO, InfSO, and $K$SO.

Intuitively, if an LFSA is ISO, then an external intruder cannot make sure whether the initial state
is secret by observing generated label sequences.

\begin{definition}[CSO \cite{Cassez2009DynamicOpcity}]\label{FA:def_CSO}
	Consider an LFSA $\Scal=(Q,E,\dt,Q_0,\Sig,\ell)$ and a subset $\QS\subset Q$ of secret states.
	$\Scal$ is called \emph{current-state opaque} (CSO)  \emph{with respect to $\QS$}
	if for every run $q_0\xrightarrow[]{s}q$ with 
	$q_0\in Q_0$ and $q\in \QS$, there exists a run $q_0'\xrightarrow[]{s'}q'$ such that $q_0'\in Q_0$,
	$q'\in Q\setminus \QS$, and $\ell(s)=\ell(s')$.
\end{definition}

If an LFSA is CSO, then an external intruder cannot make sure whether the current state is
secret by observing generated label sequences.

\begin{definition}[InfSO \cite{Saboori2012InfiniteStepOpacity}]\label{FA:def_InfSO}
	Consider an LFSA $\Scal=(Q,E,\dt,Q_0,\Sig,\ell)$ and a subset $\QS\subset Q$ of secret states.
	$\Scal$ is called \emph{infinite-step opaque} (InfSO)  \emph{with respect to $\QS$}
	if for every run $q_0\xrightarrow[]{s_1}q_1\xrightarrow
	[]{s_2}q_2$ with $q_0\in Q_0$ and $q_1\in \QS$, there exists a run $q_0'\xrightarrow[]{s_1'}q_1'\xrightarrow
	[]{s_2'}q_2'$ such that $q_0'\in Q_0$, $q_1'\in Q\setminus \QS$, $\ell(s_1)=\ell(s_1')$, and $\ell(s_2)=\ell(s_2')$.
\end{definition}

\begin{definition}[$K$SO \cite{Saboori2009KStepOpacityFA}]\label{FA:def_KSO}
	Consider an LFSA $\Scal=(Q,E,\dt,Q_0,\Sig,\ell)$, a subset $\QS\subset Q$ of secret states, and a positive
	integer $K$. $\Scal$ is called \emph{$K$-step opaque} ($K$SO)  \emph{with respect to $\QS$}
	if for every run $q_0\xrightarrow[]{s_1}q_1\xrightarrow
	[]{s_2}q_2$ with $q_0\in Q_0$, $q_1\in \QS$, and $|\ell(s_2)|\le K$, there exists a run
	$q_0'\xrightarrow[]{s_1'}q_1'\xrightarrow
	[]{s_2'}q_2'$ such that $q_0'\in Q_0$, $q_1'\in Q\setminus \QS$, $\ell(s_1)=\ell(s_1')$, and $\ell(s_2)=\ell(s_2')$.
\end{definition}

If an LFSA is InfSO ($K$SO), then an external intruder cannot make sure whether 
any past state (at most $K$ steps prior to the current time) is secret by observing generated label sequences.

In order to verify CSO, the notion of observer is enough. Observer is the classical powerset construction
used for determinizing nondeterministic finite automata with $\ep$-transitions.

\begin{definition}[\cite{Sipser2006TheoryofComputation}]\label{FA:def10}
	Consider an LFSA $\Scal=(Q,E,\dt,Q_0,\Sig,\ell)$. Its \emph{observer} $\Scal_{\obs}$ (the term ``observer''
		was used in \cite{Shu2007Detectability_DES} and hereafter) is defined by
		a deterministic finite automaton $$(Q_{\obs},\ell(E_o),\dt_{\obs},q_{0\obs}),$$ where
		\begin{enumerate}
			\item $Q_{\obs}=2^Q$,
			\item $\ell(E_o)=\ell(E)\setminus\{\ep\}$,
			\item for all $x\in Q_{\obs}$ and $a\in\ell(E_o)$, $\dt_{\obs}(x,a)= \bigcup_{q\in x}\bigcup_{
				\substack{e_a\in E_o\\\ell(e_a)=a\\ s\in (E_{uo})^*}}\dt(q,e_as)$,
			\item $q_{0\obs}=\bigcup_{q_0\in Q_0}\bigcup_{s\in (E_{uo})^*}\dt(q_0,s)$ (i.e., $\UR(Q_0)$).
		\end{enumerate}
	\end{definition}

By definition, for all $a\in\ell(E_o)$, one has $\dt_{\obs}(\emptyset,a)=\emptyset$. The size of $\Scal_{\obs}$
is $O(2^{|Q|}|\ell(E_o)|)$, the time consumption of computing $\Scal_{\obs}$ is $O(2^{|Q|}|Q|^2|\ell(E_o)||E|)$:
for every subset $x\subset Q$ and every label $a\in\ell(E_o)$, the time consumption of computing $\dt(x,a)$
is bounded (from above) by $|Q|^2|E_o|+|Q|^2|E_{uo}|=|Q|^2|E|$.

\begin{example}\label{FA:exam20} 
	Part of the observer $\Scal_{2\obs}$ of the LFSA $\Scal_2$ in Figure~\ref{FA:fig26} is shown in
	Figure~\ref{FA:fig27}.
	\begin{figure}[!htbp]
	\centering
	\begin{tikzpicture}[>=stealth',shorten >=1pt,auto,node distance=2.5 cm, scale = 1.0, transform shape,
	>=stealth,inner sep=2pt]
		\node[elliptic state,initial,initial where=left] (s0) {$\{q_0\}$};
		\node[elliptic state] (s1s2) [right of =s0] {$\{q_1,q_2\}$};
		\node[elliptic state] (s1) [right of =s1s2] {$\{q_1\}$};
		\node[elliptic state] (emptys) [right of =s1] {$\emptyset$};
		\node[elliptic state] (s2) [right of =emptys] {$\{q_2\}$};

		\path [->]
		(s0) edge [loop above] node {$a$} (s0)
		(s0) edge node [above, sloped] {$b$} (s1s2)
		(s1s2) edge node [above, sloped] {$b$} (s1)
		(s1) edge [loop above] node {$b$} (s1)
		(s1s2) edge [bend right] node [below, sloped] {$a$} (emptys)
		(s1) edge node {$a$} (emptys)
		(emptys) edge [loop above] node {$a,b$} (emptys)
		(s2) edge node [above, sloped] {$a,b$} (emptys)
		;
	\end{tikzpicture}
	\caption{Part of observer $\Scal_{2\obs}$ of the LFSA $\Scal_2$ in Figure~\ref{FA:fig26}.}
	\label{FA:fig27} 
	\end{figure} 
\end{example}

\begin{theorem}[\cite{Cassez2009DynamicOpcity,Saboori2007CurrentStateOpacity}]\label{FA:thm_CSO}
	An LFSA $\Scal=(Q,E,\dt,Q_0,\Sig,\ell)$ is CSO with respect to $\QS\subset Q$ if and only if for every
	nonempty state $x$ reachable in $\Scal_{\obs}$, $x\not\subset \QS$.
\end{theorem}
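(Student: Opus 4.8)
The plan is to prove the biconditional in Theorem~\ref{FA:thm_CSO} by unwinding the definition of CSO (Definition~\ref{FA:def_CSO}) and relating it to the meaning of the observer states. The key observation is that the nonempty states reachable in $\Scal_{\obs}$ are exactly the current-state estimates $\Mt(\Scal,\s)$ for the output sequences $\s\in\LM(\Scal)$ that the system can actually generate. This is the standard correctness property of the powerset construction: by induction on $|\s|$, the state reached in $\Scal_{\obs}$ after reading $\s$ equals $\{q\in Q\mid (\exists q_0\in Q_0)(\exists s\in E^*)[\ell(s)=\s\wedge q_0\xrightarrow{s}q]\}=\Mt(\Scal,\s)$, and this set is nonempty precisely when $\s\in\LM(\Scal)$. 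I would state this correspondence first, as it is the bridge between the syntactic reachability condition in the theorem and the semantic opacity condition in the definition.

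With this correspondence in hand, I would prove the contrapositive in both directions, since the negation is cleaner. First I would observe that $\Scal$ is \emph{not} CSO iff there is a run $q_0\xrightarrow{s}q$ with $q_0\in Q_0$, $q\in\QS$, such that \emph{every} run $q_0'\xrightarrow{s'}q'$ with $q_0'\in Q_0$ and $\ell(s')=\ell(s)$ satisfies $q'\in\QS$ as well. Writing $\s=\ell(s)$, this says exactly that $\Mt(\Scal,\s)$ is nonempty (it contains $q$) and $\Mt(\Scal,\s)\subset\QS$ (every state reachable under observation $\s$ is secret). By the correspondence above, $\Mt(\Scal,\s)$ is precisely a nonempty reachable observer state. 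Hence $\Scal$ is not CSO iff there exists a nonempty reachable observer state $x$ with $x\subset\QS$, which is the negation of the theorem's right-hand condition.

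Concretely, for the ``only if'' (if $\Scal$ is CSO then every nonempty reachable $x$ satisfies $x\not\subset\QS$): given such an $x$, write $x=\Mt(\Scal,\s)$ for some $\s\in\LM(\Scal)$, pick any $q\in x\cap\QS$ should it be nonempty, apply CSO to a witnessing run reaching $q$ under $\s$ to obtain a non-secret $q'\in x\setminus\QS$, and conclude $x\not\subset\QS$. For the ``if'' direction I would contrapose: from a violation of CSO witnessed by $q_0\xrightarrow{s}q$, the set $x=\Mt(\Scal,\ell(s))$ is nonempty and entirely contained in $\QS$, contradicting the right-hand condition. The two directions are symmetric once the estimate/observer identification is established.

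The only genuinely technical step is the inductive proof that reachable nonempty observer states coincide with the current-state estimates $\Mt(\Scal,\s)$; everything else is definitional bookkeeping. The mild subtlety there is the handling of unobservable reach, since the observer's transition function in Definition~\ref{FA:def10} folds the $\UR$ closure (the trailing $(E_{uo})^*$ in $\dt_{\obs}$ and the initial $\UR(Q_0)$) into each observable step, so the induction must track that $\ell(s)=\s$ runs can pad arbitrarily many unobservable events before, between, and after observable ones. I expect this closure bookkeeping to be the main obstacle, but it is exactly the standard determinization argument and poses no conceptual difficulty; I would therefore state the correspondence as a lemma (or cite it as the well-known powerset-construction correctness) and devote the bulk of the proof to the short opacity-to-estimate translation.
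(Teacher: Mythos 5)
Your proposal is correct and takes essentially the same route as the paper: the paper dispatches this theorem with the remark that it ``directly follows from definition,'' and your argument---identifying the nonempty reachable observer states with the current-state estimates $\Mt(\Scal,\s)$ for $\s\in\LM(\Scal)$ via the standard powerset-construction correctness, then negating Definition~\ref{FA:def_CSO} to get ``$\exists\,\s$ with $\emptyset\ne\Mt(\Scal,\s)\subset\QS$''---is precisely the definitional unwinding that remark presupposes. No gaps; your explicit handling of the $\UR$ closure in the induction is exactly the bookkeeping the paper leaves implicit.
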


\begin{example}\label{FA:exam30}
	Reconsider the LFSA $\Scal_2$ in Figure~\ref{FA:fig26} and its observer $\Scal_{2\obs}$ in
	Figure~\ref{FA:fig27}. By Theorem~\ref{FA:thm_CSO}, $\Scal_2$ is CSO with respect to $\red\{q_2\}$.
\end{example}

We use the concurrent composition $\CCa(\Scal_{\vep},\Scal_{\obs}^{\vep})$ to verify the other three notions
of opacity, where $\Scal_{\vep}$ is obtained from $\Scal$ by changing each transition $q\xrightarrow[]{e}q'$
to $q\xrightarrow[]{\ell(e)}q'$ if $e\in E_o$, to $q\xrightarrow[]{\vep}q'$ if $e\in E_{uo}$, and replacing
the labeling function of $\Scal$ by the map $\ell'$ on $\ell(E_o)\cup\{\vep\}$ satisfying that $\ell'|_{\ell(E_o)}$
(the restriction of $\ell'$ to $\ell(E_o)$) is the identity map and $\ell'(\vep)=\ep$; 
$\Scal_{\obs}^{\vep}$ is obtained from $\Scal_{\obs}$ by adding an additional event $\vep$ and the labeling 
function $\ell'$ of $\Scal_{\vep}$. In this particular case, $\CCa(\Scal_{\vep},\Scal_{\obs}^{\vep})$ is almost
the same as the parallel composition of $\Scal_{\vep}$ and $\Scal_{\obs}^{\vep}$ in 
\cite[Page 80]{Cassandras2009DESbook}.

The size of $\CCa(\Scal_{\vep},\Scal_{\obs}^{\vep})$ is $O(|Q|2^{|Q|}(|\ell(E_o)||Q|+|E_{uo}||Q|))=
O(|Q|^22^{|Q|}|E|)$. The time consumption of computing $\CCa(\Scal_{\vep},\Scal_{\obs}^{\vep})$ is also
$O(|Q|^22^{|Q|}|E|)$ after $\Scal_{\vep}$ and $\Scal_{\obs}^{\vep}$ have been computed.

\begin{theorem}\label{FA:thm_ISO}
	An LFSA $\Scal=(Q,E,\dt,Q_0,\Sig,\ell)$ is ISO with respect to $\QS\subset Q$ if and only if $Q_0\ne\emptyset
	\implies Q_0\not\subset \QS$ and for every
	$q_0\in Q_0\cap \QS$, in concurrent composition $\CCa(\Scal_{\vep},\Scal_{\obs}^{\vep})$, all states reachable 
	from $(q_0,\UR(Q_0\setminus \QS))$ are of the form $(-,x)$ with $x\ne\emptyset$.
\end{theorem}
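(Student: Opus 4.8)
The plan is to reduce the statement to a single correspondence between runs of $\Scal$ issuing from a secret initial state and runs of $\CCa(\Scal_{\vep},\Scal_{\obs}^{\vep})$ issuing from $(q_0,\UR(Q_0\setminus\QS))$. The guiding idea is that the left coordinate follows a run of $\Scal$ from the secret initial state $q_0$ (after the relabeling that turns $\Scal$ into $\Scal_{\vep}$), while the right coordinate, being an observer started at $\UR(Q_0\setminus\QS)$, carries the exact state estimate of all runs that begin in a non-secret initial state and produce the same observation. Concretely, I would prove the following key lemma: every run of $\CCa(\Scal_{\vep},\Scal_{\obs}^{\vep})$ from $(q_0,\UR(Q_0\setminus\QS))$ to a state $(p,x)$ corresponds to a run $q_0\xrightarrow[]{s}p$ of $\Scal$ with observation $\sigma:=\ell(s)$ for which $x=\{q\in Q\mid (\exists q_0'\in Q_0\setminus\QS)(\exists s')[(\ell(s')=\sigma)\wedge (q_0'\xrightarrow[]{s'}q)]\}$, and conversely; in particular $x\ne\emptyset$ exactly when some non-secret initial run shares the observation $\sigma$.

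To establish the lemma I would argue by induction on the length of the $\CCa$-run, using the three transition types of Definition~\ref{FA:def9}. The only observable synchronizations available are pairs $(a,a)$ with $a\in\ell(E_o)$ (since in $\Scal_{\vep}$ each observable event is relabeled to its own label), and these advance the left coordinate by an observable $\Scal$-transition while updating the right coordinate through $\dt_{\obs}(\cdot,a)$; the unobservable moves $(\vep,\epsilon)$ advance the left coordinate by an unobservable $\Scal$-transition and leave the right coordinate fixed, whereas the $\vep$-self-loops $(\epsilon,\vep)$ of $\Scal_{\obs}^{\vep}$ leave both coordinates unchanged. The point to verify carefully is that the observer's transition function already folds in the unobservable reach, so that after reading an observation $\sigma$ the right coordinate is precisely the non-secret state estimate above; this is exactly the content of Definition~\ref{FA:def10} applied with initial set $Q_0\setminus\QS$ in place of $Q_0$.

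Granting the lemma, the equivalence follows by unwinding Definition~\ref{FA:def_ISO}. For sufficiency, take any secret run $q_0\xrightarrow[]{s}q$ with $q_0\in Q_0\cap\QS$ and set $\sigma=\ell(s)$. If $\sigma=\epsilon$, then since $Q_0\ne\emptyset$ the hypothesis $Q_0\not\subset\QS$ yields a non-secret initial state $q_0'$, and the empty run from $q_0'$ witnesses opacity; if $\sigma\ne\epsilon$, the associated $\CCa$-run has positive length, so its endpoint $(q,x)$ is reachable from $(q_0,\UR(Q_0\setminus\QS))$, whence $x\ne\emptyset$ by hypothesis and the lemma produces a non-secret initial run with observation $\sigma$. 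For necessity, assuming ISO, the empty observation from any $q_0\in Q_0\cap\QS$ forces a non-secret initial state, giving $Q_0\not\subset\QS$ whenever $Q_0\ne\emptyset$; and if some reachable $(p,\emptyset)$ existed, then (using $Q_0\not\subset\QS$, so that the right coordinate starts nonempty and can only empty out after an observable step) the lemma would supply a secret run whose observation is shared by no non-secret initial run, contradicting ISO.

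The main obstacle I anticipate is the bookkeeping in the key lemma, specifically keeping the right coordinate synchronized with the non-secret state estimate across the interleaving of observable synchronizations and the two kinds of $\vep$-moves. A secondary subtlety, which forces the separate clause $Q_0\ne\emptyset\implies Q_0\not\subset\QS$, is that ``reachable from'' in this paper quantifies over nonempty event sequences (paths in $E^+$), so the empty-observation case --- equivalently the initial state $(q_0,\UR(Q_0\setminus\QS))$ itself --- is not detected by the reachability condition on $\CCa(\Scal_{\vep},\Scal_{\obs}^{\vep})$ and must be handled explicitly.
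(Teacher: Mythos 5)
Your proposal is correct and follows essentially the paper's own route: the paper offers no separate argument for Theorem~\ref{FA:thm_ISO} beyond stating that it (together with Theorems~\ref{FA:thm_CSO}, \ref{FA:thm_InfSO}, \ref{FA:thm_KSO}) ``directly follows from definition,'' and your run-correspondence lemma --- left coordinate tracking a run of $\Scal_{\vep}$ from the secret initial state, right coordinate deterministically tracking the observer's non-secret estimate started at $\UR(Q_0\setminus\QS)$ --- is exactly that direct unwinding made explicit. Your handling of the empty-observation case via the clause $Q_0\ne\emptyset\implies Q_0\not\subset\QS$, necessitated by the paper's $E^+$-reachability convention, is also the right reading of why that clause appears in the statement.
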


\begin{theorem}[\cite{Balun2021KSO_ConCurren}]\label{FA:thm_InfSO}
	An LFSA $\Scal=(Q,E,\dt,Q_0,\Sig,\ell)$ is InfSO with respect to $\QS\subset Q$ if and only if for every
	nonempty state $x$ reachable in $\Scal_{\obs}$, one has $x\not\subset \QS$ and for every
	$q\in x\cap \QS$, in concurrent composition $\CCa(\Scal_{\vep},\Scal_{\obs}^{\vep})$, all states reachable
	from $(q,x\setminus \QS)$ are of the form $(-,x')$ with $x'\ne\emptyset$.
\end{theorem}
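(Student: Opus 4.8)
The plan is to prove both implications by a direct dictionary between the witnesses in Definition~\ref{FA:def_InfSO} and the runs of $\CCa(\Scal_{\vep},\Scal_{\obs}^{\vep})$, resting on two structural observations. First, a nonempty $x\subset Q$ is reachable in $\Scal_{\obs}$ precisely when $x=\Mt(\Scal,\s_1)$ for some $\s_1\in\LM(\Scal)$; hence at ``observation time'' $\s_1$ the states a secret run may occupy are exactly $x\cap\QS$, while the non-secret states consistent with $\s_1$ are exactly $x\setminus\QS$. Second, in $\CCa(\Scal_{\vep},\Scal_{\obs}^{\vep})$ the left coordinate simply runs the relabeled automaton while the right coordinate, being a deterministic observer, is a function of the produced label sequence; so any run starting at $(q,x\setminus\QS)$ whose left projection realizes $q\xrightarrow{s_2}q_2$ in $\Scal$ ends in a state whose right component is $\dt_{\obs}(\UR(x\setminus\QS),\ell(s_2))$. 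The theorem then says that the secret occupancy recorded by $q\in x\cap\QS$ can always be masked iff this observer value never becomes $\emptyset$ along any continuation the secret run can produce.

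For the ``only if'' direction I would first recover the current-time clause ``every nonempty reachable $x$ has $x\not\subset\QS$'' by specializing InfSO to $s_2=\ep$ (so $q_1=q_2$): the resulting statement is verbatim the definition of CSO, which by Theorem~\ref{FA:thm_CSO} is equivalent to that clause. The infinite-step clause I would prove by contraposition: a reachable state $(q_2,\emptyset)$ from $(q,x\setminus\QS)$ furnishes, via the dictionary above, a secret run $q_0\xrightarrow{s_1}q\xrightarrow{s_2}q_2$ with $q\in\QS$ for which $\dt_{\obs}(\UR(x\setminus\QS),\ell(s_2))=\emptyset$, i.e.\ no non-secret-split run matches the observation $\ell(s_2)$, directly contradicting InfSO.

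For the ``if'' direction I would take an arbitrary test run $q_0\xrightarrow{s_1}q_1\xrightarrow{s_2}q_2$ with $q_1\in\QS$, put $x:=\Mt(\Scal,\ell(s_1))$ so that $q_1\in x\cap\QS$, and lift the tail $q_1\xrightarrow{s_2}q_2$ to a run of $\CCa(\Scal_{\vep},\Scal_{\obs}^{\vep})$ from $(q_1,x\setminus\QS)$. The hypothesis forces the terminal right component to be some $x'\ne\emptyset$, i.e.\ $\dt_{\obs}(\UR(x\setminus\QS),\ell(s_2))\ne\emptyset$; unfolding any state of $x'$ backwards yields a non-secret split $q_1'\in x\setminus\QS$ together with a run $q_1'\xrightarrow{s_2'}q_2'$ satisfying $\ell(s_2')=\ell(s_2)$, and since $q_1'\in x=\Mt(\Scal,\ell(s_1))$ it is reached by some $q_0'\xrightarrow{s_1'}q_1'$ with $\ell(s_1')=\ell(s_1)$. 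Concatenating gives the required masking run $q_0'\xrightarrow{s_1'}q_1'\xrightarrow{s_2'}q_2'$.

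The hard part will be making the observer-tracking identity exact with respect to unobservable reaches. Because $\dt_{\obs}(\cdot,a)$ fires an observable event followed by an unobservable tail and the observer is initialized at $\UR(Q_0)$, a masking continuation whose run $s_2'$ begins with unobservable events---possibly stepping through secret states immediately after the non-secret split $q_1'$---is captured only if the right coordinate is closed under unobservable reach before the first synchronized label is read. This is exactly the role of the extra $\vep$-moves of $\Scal_{\obs}^{\vep}$: I would show they let the right coordinate pass from $x\setminus\QS$ to $\UR(x\setminus\QS)$ while the left idles, so that the states reachable from $(q,x\setminus\QS)$ have right components ranging over $\dt_{\obs}(\UR(x\setminus\QS),\cdot)$---no fewer, so that no genuine masking run is overlooked, and no more, so that no spurious masking is introduced. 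Verifying this two-sided exactness, in the spirit of the $\UR(Q_0\setminus\QS)$ initialization in Theorem~\ref{FA:thm_ISO}, is the delicate bookkeeping on which both directions ultimately rest.
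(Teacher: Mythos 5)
Your dictionary is set up correctly and your ``if'' direction is sound, but the pivotal claim on which your ``only if'' direction rests is false for the construction as defined. You assert that the extra $\vep$-moves of $\Scal_{\obs}^{\vep}$ ``let the right coordinate pass from $x\setminus\QS$ to $\UR(x\setminus\QS)$ while the left idles.'' No such moves exist: the event $\vep$ is added to the observer's \emph{alphabet} only so that the two components share an event set as Definition~\ref{FA:def9} requires; $\Scal_{\obs}^{\vep}$ has no $\vep$-transitions, and consequently every unobservable composite transition is of the form $(\vep,\ep)$, moving the \emph{left} coordinate while the right stays frozen (see Figures~\ref{FA:fig37} and \ref{FA:fig44}, where no $(\ep,\vep)$-transition appears). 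Hence a run from $(q,x\setminus\QS)$ whose left projection realizes $q\xrightarrow[]{s_2}q_2$ ends with right component $\dt_{\obs}(x\setminus\QS,\ell(s_2))$, computed by the post-closure observer of Definition~\ref{FA:def10}, with \emph{no} pre-closure under unobservable reach. Your contrapositive therefore only establishes that no masking run \emph{without an unobservable prefix} exists: a masking continuation that leaves $x\setminus\QS$ by unobservable events first---possible exactly when $x\setminus\QS$ is not UR-closed, i.e., when some non-secret state of $x$ reaches $\QS$ unobservably---is invisible to the product started at $(q,x\setminus\QS)$, so reachability of $(q_2,\emptyset)$ does not contradict InfSO.

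The gap is not cosmetic. Take $Q_0=\{A\}$, observable transitions $A\xrightarrow[]{a}S$, $A\xrightarrow[]{a}C$, $S\xrightarrow[]{b}T$, $B\xrightarrow[]{b}P$, an unobservable transition $C\xrightarrow[]{u}B$, and $\QS=\{S,B\}$. Then $x=\Mt(\Scal,a)=\{S,C,B\}$ and $x\setminus\QS=\{C\}$, and from $(S,\{C\})$ the state $(T,\emptyset)$ is reachable via $(b,b)$ since $\dt_{\obs}(\{C\},b)=\emptyset$; yet the system is InfSO, because the secret occupancy after $a$ is masked by $A\xrightarrow[]{a}C\xrightarrow[]{u}B\xrightarrow[]{b}P$ with the split at the non-secret $C$ (the masking run may visit secrets \emph{after} the split under standard InfSO). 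So the exactness you need cannot be proven as you state it; the repair is to start the product at $\bigl(q,\UR(x\setminus\QS)\bigr)$---the precise analogue of the $\UR(Q_0\setminus\QS)$ initialization the paper itself uses in Theorem~\ref{FA:thm_ISO}---or, equivalently, to use an observer transition function that closes under unobservable reach before firing the observable event. With that emendation your two-sided correspondence becomes exact and both of your directions go through essentially unchanged (note the ``if'' direction never needed the closure, since $\dt_{\obs}(x\setminus\QS,\s_2)$ only undercounts masking endpoints). The paper offers no proof beyond declaring the theorem immediate from the definitions, so your instinct that the UR bookkeeping is the real content was exactly right---but your proposed mechanism for discharging it does not exist in the construction.
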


\begin{theorem}[\cite{Balun2021KSO_ConCurren}]\label{FA:thm_KSO}
	An LFSA $\Scal=(Q,E,\dt,Q_0,\Sig,\ell)$ is $K$SO with respect to $\QS\subset Q$ 
	if and only if for every nonempty state $x$ reachable in $\Scal_{\obs}$, one has $x\not\subset \QS$ and for every
	$q\in x\cap \QS$, in concurrent composition $\CCa(\Scal_{\vep},\Scal_{\obs}^{\vep})$, for every run
	$(q,x\setminus \QS)\xrightarrow[]{s'}(q',x')$ with $|\ell(s')|\le K$, $x'\ne\emptyset$.
\end{theorem}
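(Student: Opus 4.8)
The plan is to prove the contrapositive, characterizing the negation of $K$SO, exactly in the spirit of Theorems~\ref{FA:thm10}, \ref{FA:thm13}, and \ref{FA:thm14}. First I would expand Definition~\ref{FA:def_KSO}: $\Scal$ fails to be $K$SO with respect to $\QS$ if and only if there is a run $q_0\xrightarrow{s_1}q_1\xrightarrow{s_2}q_2$ with $q_0\in Q_0$, $q_1\in\QS$, and $|\ell(s_2)|\le K$ for which no run $q_0'\xrightarrow{s_1'}q_1'\xrightarrow{s_2'}q_2'$ satisfies $q_0'\in Q_0$, $q_1'\notin\QS$, $\ell(s_1)=\ell(s_1')$, and $\ell(s_2)=\ell(s_2')$. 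Writing $\omega=\ell(s_1)$ and $\tau=\ell(s_2)$, the candidates for $q_1'$ are precisely $\Mt(\Scal,\omega)=x$, so $x$ is a reachable observer state (Definition~\ref{FA:def10}) with $q_1\in x\cap\QS$, and the non-secret candidates form $x\setminus\QS$. Thus ``no match'' splits into two cases: either $x\setminus\QS=\emptyset$, i.e. $x\subset\QS$ (no non-secret state even at the instant of the visit, the $\tau=\epsilon$ obstruction), or $x\setminus\QS\ne\emptyset$ while no continuation from a non-secret candidate reproduces the observation $\tau$.

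The first case is exactly the failure of the current-state condition and is captured verbatim by the clause ``$x\not\subset\QS$'', which is nothing but Theorem~\ref{FA:thm_CSO} applied to each reachable nonempty $x$. For the second case the key lemma I would establish is a run-correspondence for $\CCa(\Scal_{\vep},\Scal_{\obs}^{\vep})$: along any run out of a pair $(q,y)$, the left coordinate reproduces (the relabelling of) an ordinary run of $\Scal$ from $q$, while the right coordinate, which moves only on the shared observable labels and stands still during the interleaved $\vep$-moves, tracks the current-state estimate of all $\Scal$-runs from $y$ carrying the same observation. Consequently a run $(q,x\setminus\QS)\xrightarrow{s'}(q',x')$ witnesses, on the left, that $\tau:=\ell(s')$ is genuinely producible from the secret state $q=q_1$ (hence a legitimate continuation observation with $|\tau|\le K$), and, on the right, that $x'$ is the set of states the non-secret candidates can reach while producing $\tau$. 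Hence $x'=\emptyset$ is equivalent to the absence of a matching non-secret continuation, which is precisely the second obstruction. Quantifying over all reachable nonempty $x$, all $q\in x\cap\QS$, and all such runs with $|\ell(s')|\le K$ then yields the stated biconditional; the bound $|\ell(s')|\le K$ is the only change from the InfSO argument of Theorem~\ref{FA:thm_InfSO}, it matches the $|\ell(s_2)|\le K$ restriction in Definition~\ref{FA:def_KSO}, and it simultaneously turns the verification into a finite one.

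The step requiring the most care --- and the main obstacle --- is the treatment of unobservable reach at the right coordinate. Every reachable observer state $x=\Mt(\Scal,\omega)$ is unobservable-closed ($\UR(x)=x$), but $x\setminus\QS$ need not be: an unobservable path out of a non-secret candidate may pass through a secret state, as in $q_1'\xrightarrow{u}q_s\xrightarrow{e}\cdots$ with $q_s\in\QS$, so a matching non-secret continuation may have to begin with unobservable events before emitting the first label of $\tau$. Since the observer transition $\dt_{\obs}$ of Definition~\ref{FA:def10} fires the observable event first and only then closes unobservably, I would initialize the right coordinate at the unobservable closure $\UR(x\setminus\QS)$ of the non-secret candidates (mirroring the explicit $\UR(Q_0\setminus\QS)$ in Theorem~\ref{FA:thm_ISO}) rather than at the bare set $x\setminus\QS$; with that initialization one gets $x'=\dt_{\obs}(\UR(x\setminus\QS),\tau)$ equal to the true set of non-secret-origin states producing $\tau$, and the equivalence between $x'=\emptyset$ and the absence of a non-secret match goes through. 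The remaining bookkeeping is routine: because the right coordinate depends on $s'$ only through $\tau=\ell(s')$ with $|\tau|\le K$, ranging over the finitely many such $\tau$ that the left component can actually emit from $q$ suffices, so no cycle-detection as in Theorem~\ref{FA:thm10} is needed here.
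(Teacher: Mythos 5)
Your overall route is the one the paper itself gestures at: the paper gives no argument for this theorem beyond the remark that Theorems~\ref{FA:thm_CSO}, \ref{FA:thm_ISO}, \ref{FA:thm_InfSO}, and \ref{FA:thm_KSO} ``directly follow from definition,'' and your contrapositive unfolding --- the split state $q_1'$ ranges over $x=\Mt(\Scal,\omega)$, the case $x\subset\QS$ is the $\tau=\epsilon$ obstruction, and a run-correspondence lemma identifies the right coordinate of $\CCa(\Scal_{\vep},\Scal_{\obs}^{\vep})$ with $\dt_{\obs}$ iterated along $\tau=\ell(s')$, so that $x'=\emptyset$ iff no non-secret candidate admits a matching continuation --- is exactly the intended fleshing-out, with the bound $|\ell(s')|\le K$ the only change from the InfSO argument. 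Your observations that determinism of the observer makes $x'$ a function of $\tau$ alone, and that totality of $\dt_{\obs}$ (with $\emptyset$ as sink) guarantees the witnessing run exists in the concurrent composition whenever the left component can emit $\tau$, are the right supporting steps.

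The point you raise about unobservable reach is not mere bookkeeping: under the paper's own Definition~\ref{FA:def10}, in which $\dt_{\obs}(x,a)$ fires an observable event from $x$ \emph{first} and closes unobservably only \emph{afterwards}, the statement as printed is actually incorrect, and your initialization $\UR(x\setminus\QS)$ is the needed repair. Concretely, take $Q_0=\{q_0\}$, $\dt(q_0,a)=\{q_1,q_2\}$, $\dt(q_2,u)=\{q_s\}$ with $\ell(u)=\ep$, $\dt(q_1,b)=\{q_3\}$, $\dt(q_s,b)=\{q_4\}$, and $\QS=\{q_1,q_s\}$: this system is $K$SO for every $K$, since the continuation $ub$ from the non-secret split state $q_2$ matches the observation $b$ (Definition~\ref{FA:def_KSO} constrains only $q_1'$, so passing through $q_s$ afterwards is allowed); yet $x=\{q_1,q_2,q_s\}$ is reachable in $\Scal_{\obs}$ and from $(q_1,x\setminus\QS)=(q_1,\{q_2\})$ the run with event $(b,b)$ reaches $(q_3,\emptyset)$, because $\dt_{\obs}(\{q_2\},b)=\emptyset$, violating the printed condition with $|\ell(s')|=1\le K$. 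The discrepancy sits exactly where you locate it: reachable observer states are $\UR$-closed, where event-first and close-first conventions agree, but $x\setminus\QS$ is in general not closed, so the bare initialization loses matching continuations that begin unobservably through secret states. Your fix mirrors the explicit $\UR(Q_0\setminus\QS)$ in Theorem~\ref{FA:thm_ISO}, and the same correction is needed in Theorem~\ref{FA:thm_InfSO} as printed. In short: your proof sketch is sound, and it establishes a corrected statement that is more defensible than the literal one, which the paper's one-line ``proof'' does not expose.
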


Theorems~\ref{FA:thm_CSO}, \ref{FA:thm_ISO}, \ref{FA:thm_InfSO}, and \ref{FA:thm_KSO} directly follow from
definition. By definition, one directly sees the following corollaries.
If an LFSA $\Scal$ is $K$SO (with respect to ${\red Q_S}\subset Q$), then it is $K'$SO for any $K'< K$.
Conversely, if $\Scal$ is not $K$SO with $K>2^{|Q|}-2$, then it is not $K'$SO for some $K'\le 2^{|Q|}-2$,
because $\Scal_{\obs}$ has at most $2^{|Q|}-1$ nonempty states; then it is not $(2^{|Q|}-2)$SO.
Hence the verification of $K$SO based on Theorem~\ref{FA:thm_KSO}
does not depend on $K$ if $K>2^{|Q|}-2$. The verification algorithms shown in Theorems~\ref{FA:thm_CSO},
\ref{FA:thm_ISO},
\ref{FA:thm_InfSO}, and \ref{FA:thm_KSO} all run in time $O(2^{|Q|}|Q|^2|\ell(E_o)||E|)$.
The upper bound $2^{|Q|}-2$ for $K$ was obtained in \cite{Yin2017TWObserverInfiniteStepOpacity}.
The upper bound for $K$ obtained in \cite{Saboori2009KStepOpacityFA} is $2^{|Q|^2}-2$.
Compared with the concurrent-composition method, the relative inefficiency of the two-way observer method
\cite{Yin2017TWObserverInfiniteStepOpacity} comes from computing a reverse observer (with the same
complexity as computing an observer) and the alternating product (i.e., the so-called two-way observer) of the
observer and the reverse observer. The verification
algorithms obtained in \cite{Saboori2012InfiniteStepOpacity,Saboori2009KStepOpacityFA} have even higher
complexity.

\begin{corollary}[\cite{Balun2021KSO_ConCurren}]\label{FA:thm_KSO'}
	An LFSA $\Scal=(Q,E,\dt,Q_0,\Sig,\ell)$ is $K$SO with respect to $\QS\subset Q$ 
	if and only if it is $\min\{K,|2^{|Q|}-2\}$SO with respect to $\QS$.
\end{corollary}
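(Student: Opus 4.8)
The plan is to derive Corollary~\ref{FA:thm_KSO'} directly from Theorem~\ref{FA:thm_KSO} together with the monotonicity already noted in the surrounding discussion, rather than reproving anything from scratch. First I would record the two ingredients that the excerpt supplies. The monotonicity direction states: if $\Scal$ is $K$SO with respect to $\QS$, then it is $K'$SO for every $K'<K$. The reason is immediate from Definition~\ref{FA:def_KSO}: a witness for the failure of $K'$SO (a run $q_0\xrightarrow{s_1}q_1\xrightarrow{s_2}q_2$ with $q_0\in Q_0$, $q_1\in\QS$, $|\ell(s_2)|\le K'$, and no opaque companion run) is automatically a witness for the failure of $K$SO since $|\ell(s_2)|\le K'\le K$. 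Equivalently, the class of runs constrained by $|\ell(s_2)|\le K'$ is a subset of those constrained by $|\ell(s_2)|\le K$, so the universally-quantified opacity requirement for larger $K$ implies it for smaller $K$.

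The second ingredient is the saturation bound. The plan is to argue that the condition in Theorem~\ref{FA:thm_KSO} becomes insensitive to $K$ once $K>2^{|Q|}-2$. Write $N:=2^{|Q|}-2$. Suppose $\Scal$ is not $N$SO but, toward showing the contrapositive, assume it fails $K$SO for some $K>N$; I would produce a shorter violating run. By Theorem~\ref{FA:thm_KSO}, failure of $K$SO means there is a nonempty reachable observer state $x$ with either $x\subset\QS$, or some $q\in x\cap\QS$ admitting a run $(q,x\setminus\QS)\xrightarrow{s'}(q',\emptyset)$ in $\CCa(\Scal_{\vep},\Scal_{\obs}^{\vep})$ with $|\ell(s')|\le K$. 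The first alternative carries no length condition and hence already witnesses failure of $N$SO. For the second, the key observation is that the right-hand component of $\CCa(\Scal_{\vep},\Scal_{\obs}^{\vep})$ is a state of $\Scal_{\obs}$, which has at most $2^{|Q|}-1$ nonempty states, i.e.\ at most $N+1$ distinct values of $x'$ visited before reaching $\emptyset$. So along the observed run, whenever two positions share the same right component I can excise the intervening segment; repeating this, I obtain a run of observed length at most $N$ reaching $(-,\emptyset)$, which witnesses failure of $N$SO. Thus non-$N$SO follows from non-$K$SO for every $K>N$, i.e.\ $K$SO and $N$SO are equivalent when $K>N$.

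Combining the two ingredients gives the claim. If $K\le N$, then $\min\{K,N\}=K$ and there is nothing to prove. If $K>N$, then $\min\{K,N\}=N$, and the saturation argument shows $\Scal$ is $K$SO iff it is $N$SO. Since $N=2^{|Q|}-2$, in both cases $\Scal$ is $K$SO iff it is $\min\{K,2^{|Q|}-2\}$SO, which is exactly the corollary. (I read the displayed $\min\{K,|2^{|Q|}-2\}$ as $\min\{K,2^{|Q|}-2\}$, matching the text.) I would phrase the final write-up as: the forward implication uses monotonicity, and the reverse implication uses the pigeonhole excision on observer states.

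The step I expect to be the main obstacle is the excision argument in the second paragraph: I must be careful that shortening the run keeps it a legitimate run of $\CCa(\Scal_{\vep},\Scal_{\obs}^{\vep})$ ending at a $(-,\emptyset)$ state and that the observed length genuinely drops to $\le N$. The subtlety is that the pigeonhole is applied to the right (observer) component, which by construction evolves deterministically under each observed label, so two occurrences of the same nonempty $x'$ at different observed positions can indeed be spliced without affecting reachability of $\emptyset$ afterwards; I would state this compatibility explicitly rather than leave it implicit, since it is the one place where the structure of $\Scal_{\obs}^{\vep}$ (as opposed to a generic product) is actually used.
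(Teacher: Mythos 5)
Your reduction of the corollary to monotonicity plus a saturation bound at $N=2^{|Q|}-2$ is the right skeleton, and the monotonicity half is exactly the paper's. The gap is in the excision step --- precisely the point you flag as the main obstacle and then dismiss. A run of $\CCa(\Scal_{\vep},\Scal_{\obs}^{\vep})$ is a synchronized \emph{pair}: its left component must be a genuine run of $\Scal_{\vep}$ from the secret state $q$ producing the observed word, and this is not dispensable in Theorem~\ref{FA:thm_KSO} (it is what makes the violating continuation actually executable by the plant). If positions $i<j$ have equal right components $x_i=x_j$ but left states $q_i\ne q_j$, deleting the middle segment leaves a left sequence that jumps from $q_i$ into a transition leaving $q_j$, so the spliced object is not a run of the concurrent composition; determinism of $\Scal_{\obs}^{\vep}$ repairs only the right factor. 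Pigeonholing instead on the pairs $(q_i,x_i)$ does yield a legitimate splice, but only the bound $|Q|(2^{|Q|}-1)$, far above $N$. There is also an off-by-one even if the splice were granted: at most $N+1$ distinct nonempty right components at positions $0,\dots,m-1$ gives observed length $m\le N+1=2^{|Q|}-1$, not $m\le N$; the difference is material, since the corollary asserts in particular that $(2^{|Q|}-1)$SO and $(2^{|Q|}-2)$SO coincide.

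The correct argument (due to Yin--Lafortune and Balun--Masopust, whom the paper cites; the paper itself only gives the terse remark preceding the corollary) pigeonholes on \emph{suffix} estimates rather than splicing the forward run. For $\beta=\beta_1\cdots\beta_m$ let $D(\beta)\subset Q$ be the set of states from which $\beta$ is producible (with the usual unobservable-reach bookkeeping). Then a run $(q,x\setminus \QS)\xrightarrow[]{s'}(q',\emptyset)$ with $\ell(s')=\beta$ exists if and only if $q$'s continuations can produce $\beta$ while $x\setminus \QS$ contains no producer of $\beta$ --- \emph{both} conditions are functions of the single set $D(\beta)$, so no left run needs to be preserved. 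The map $\beta\mapsto D(\beta)$ is computed deterministically by prepending letters (a reverse observer) from $D(\ep)=Q$; hence if $D(\beta_{i+1}\cdots\beta_m)=D(\beta_{j+1}\cdots\beta_m)$ for $i<j$, deleting $\beta_{i+1}\cdots\beta_j$ leaves $D$ of the whole word, and so the violation, unchanged. Along a minimal witness the $m+1$ suffix estimates $D(\beta_{i+1}\cdots\beta_m)$, $i=m,\dots,0$, are therefore pairwise distinct, and all are nonempty because the run from $q$ producing $\beta$ puts its intermediate states into each of them. Thus $m+1\le 2^{|Q|}-1$, i.e.\ $m\le 2^{|Q|}-2$: the missing $-1$ in your count comes exactly from the fact that this path starts at $D(\ep)=Q$. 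With this replacement for your second ingredient, the rest of your plan (the case split on $\min\{K,N\}$) goes through as written.
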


\begin{corollary}\label{FA:thm_InfSO_KSO}
	An LFSA $\Scal=(Q,E,\dt,Q_0,\Sig,\ell)$ is InfSO with respect to $\QS\subset Q$ if and only if it is
	$K$SO with respect to $\QS$ with $K>2^{|Q|}-2$.
\end{corollary}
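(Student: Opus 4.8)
The plan is to deduce the equivalence from the structural characterisations already in hand --- Theorem~\ref{FA:thm_InfSO} for InfSO and Theorem~\ref{FA:thm_KSO} for $K$SO --- together with the single combinatorial fact that the second component of $\CCa(\Scal_{\vep},\Scal_{\obs}^{\vep})$ ranges over states of $\Scal_{\obs}$ and hence takes at most $2^{|Q|}-1$ nonempty values. First I would record that both theorems contain verbatim the same ``first clause'' ($x\not\subset\QS$ for every nonempty reachable $x$ of $\Scal_{\obs}$), so the entire gap between the two notions lies in the ``second clause'', namely whether, starting from a pair $(q,x\setminus\QS)$ with $q\in x\cap\QS$, a state of the form $(-,\emptyset)$ is reachable by a run of \emph{unbounded} observable length (InfSO) or of observable length at most $K$ ($K$SO). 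Since $K>2^{|Q|}-2$, Corollary~\ref{FA:thm_KSO'} lets me replace $K$SO by $(2^{|Q|}-2)$SO, so it suffices to show that $\Scal$ is InfSO if and only if it is $(2^{|Q|}-2)$SO.

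The forward implication is immediate from comparing the two theorems: if $\Scal$ is InfSO then \emph{no} $(-,\emptyset)$ is reachable from any relevant $(q,x\setminus\QS)$ whatsoever, in particular none within $2^{|Q|}-2$ observable steps, which is precisely $(2^{|Q|}-2)$SO.

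For the converse I would argue the contrapositive. Assume $\Scal$ is not InfSO. By Theorem~\ref{FA:thm_InfSO} either the common first clause already fails, in which case Theorem~\ref{FA:thm_KSO} shows $\Scal$ is not $(2^{|Q|}-2)$SO and we are done, or there is a run in $\CCa(\Scal_{\vep},\Scal_{\obs}^{\vep})$ from some $(q,x\setminus\QS)$ to a state $(-,\emptyset)$. A useful preliminary simplification is that, because the second clause quantifies over \emph{all} secret states $q\in x\cap\QS$, the requirement that the left coordinate be feasible from $q$ can be folded into the whole estimate: using additivity and determinism of $\dt_{\obs}$ together with $\UR(x\cap\QS)\cup(x\setminus\QS)=x$, the existence of such a witness reduces to the existence of a label word $w$ with $\dt_{\obs}(x\setminus\QS,w)=\emptyset$ and $\dt_{\obs}(x,w)\ne\emptyset$. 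I would then take $w=a_1\cdots a_m$ of minimal length, read off the second-component trajectory $B_0=x\setminus\QS,B_1,\dots,B_m=\emptyset$, and note that since this coordinate changes only on synchronised observable transitions and, once empty, stays empty ($\dt_{\obs}(\emptyset,a)=\emptyset$), minimality forces $B_0,\dots,B_{m-1}$ to be nonempty. If $m$ were too large, the pigeonhole principle applied to these nonempty observer states would yield $i<j$ with $B_i=B_j$, and excising $a_{i+1}\cdots a_j$ would produce a strictly shorter word still driving $x\setminus\QS$ to $\emptyset$ --- contradicting minimality. This caps the observable length at the claimed threshold, and Theorem~\ref{FA:thm_KSO} then certifies that $\Scal$ is not $(2^{|Q|}-2)$SO.

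The step I expect to be the genuine obstacle is making the excision legitimate for \emph{both} coordinates at once. Cutting a loop detected only in the reach $B_i=B_j$ of the non-secret part is harmless for that coordinate, but one must simultaneously guarantee that the shortened word stays feasible, i.e.\ that $\dt_{\obs}(x,\cdot)$ remains nonempty; a priori the forward reach from $x$ at the two cut points need not coincide, so that in general only a cut repeating the \emph{pair} $(\dt_{\obs}(x,\cdot),\dt_{\obs}(x\setminus\QS,\cdot))$ is automatically valid. Extracting the sharp constant $2^{|Q|}-2$ (rather than the crude product bound coming from the joint state space of $\CCa(\Scal_{\vep},\Scal_{\obs}^{\vep})$) therefore rests on exploiting the monotonicity of the smoothed estimate $\{z\in x:w\text{ feasible from }z\}$ in the length of $w$, the same phenomenon underlying the bound established in the discussion preceding Corollary~\ref{FA:thm_KSO'}; pinning down this bookkeeping is where the real work lies, the remaining steps being routine.
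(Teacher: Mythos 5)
Your forward direction is fine, and your preliminary reduction of the witness condition to a label word $w$ with $\dt_{\obs}(x\setminus\QS,w)=\emptyset$ and $\dt_{\obs}(x,w)\ne\emptyset$ is correct and clean. But the converse is left, as you yourself flag, with a genuine hole at exactly the point that carries the content of the corollary. Your excision is only automatically legitimate when the \emph{pair} $\bigl(\dt_{\obs}(x,\cdot),\dt_{\obs}(x\setminus\QS,\cdot)\bigr)$ repeats, which caps the suffix length only at roughly $(2^{|Q|}-1)^2$ (or $|Q|(2^{|Q|}-1)$ if you pigeonhole on states of $\CCa(\Scal_{\vep},\Scal_{\obs}^{\vep})$ directly), not at $2^{|Q|}-2$; for $K$ lying between $2^{|Q|}-1$ and such a crude bound this does not prove the stated equivalence. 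The ``monotonicity of the smoothed estimate'' you appeal to does not repair this: the chain $S_i=\{z\in x: w_1\cdots w_i\text{ feasible from }z\}$ is indeed non-increasing, but constancy $S_i=S_j$ does not license deleting $w_{i+1}\cdots w_j$, because each $S_i$ is defined by peeling $w$ from the front, while the deletion changes the suffix against which every $S_i$ would have to be re-evaluated. The argument that actually yields the sharp constant (Yin--Lafortune's, which underlies Corollary~\ref{FA:thm_KSO'}) routes the suffix through the \emph{reverse} observer: the violation condition depends on the suffix only through the set $Y$ of states from which the \emph{entire} suffix is feasible, via $x\cap Y\ne\emptyset$ and $x\cap Y\subset \QS$, and $Y$ is the endpoint of a deterministic reverse-observer path all of whose states are nonempty; loop-cutting there is unconditionally legitimate because it preserves $(x,Y)$, and a loop-free path through at most $2^{|Q|}-1$ nonempty sets has at most $2^{|Q|}-2$ edges.

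The irony is that you already hold the key that makes all of this unnecessary: Corollary~\ref{FA:thm_KSO'}, which you invoke only to normalize $K$. Granting it, the converse is one line, and this is essentially the paper's route (``by definition, one directly sees''): if $\Scal$ is not InfSO, the witnessing run in Definition~\ref{FA:def_InfSO} has a suffix of some finite observable length $m$, so the very same run shows $\Scal$ is not $m$SO; by Corollary~\ref{FA:thm_KSO'} it is then not $\min\{m,2^{|Q|}-2\}$SO, hence not $(2^{|Q|}-2)$SO, hence not $K$SO for any $K>2^{|Q|}-2$, since $K$SO implies $K'$SO for all $K'<K$. The pigeonhole you set out to redo from scratch is precisely the content of Corollary~\ref{FA:thm_KSO'} and need not (and, along the forward-excision lines you propose, cannot easily) be reproved here.
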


\begin{example}\label{FA:exam31}
	Reconsider the LFSA $\Scal_2$ in Figure~\ref{FA:fig26} and its observer $\Scal_{2\obs}$ in
	Figure~\ref{FA:fig27}. The corresponding $\Scal_{2\vep}$ is shown in 
	Figure~\ref{FA:fig36}. The concurrent composition $\CCa(\Scal_{2\vep},\Scal_{2\obs}^{\vep})$ is shown in 
	Figure~\ref{FA:fig37}.
	\begin{figure}[H]
     \centering
	 	\begin{tikzpicture}[>=stealth',shorten >=1pt,auto,node distance=2.5 cm, scale = 1.0, transform shape,
	>=stealth,inner sep=2pt]

	\node[initial, initial where =above, state] (s0) {$q_0$};
	\node[state] (s1) [right of =s0] {$q_1$};
	\node[state] (s2) [left of =s0] {$q_2$};
	
	\path [->]
	(s0) edge [loop below] node [below, sloped] {$a,\vep$} (s0)
	(s0) edge node [above, sloped] {$b$} (s1)
	(s0) edge node [above, sloped] {$b$} (s2)
	(s1) edge [loop right] node {$b$} (s1)
	;

        \end{tikzpicture}
		
	\caption{LFSA $\Scal_{2\vep}$ corresponding to the LFSA $\Scal_2$ in Figure~\ref{FA:fig26}.}
	\label{FA:fig36} 
	\end{figure}
	\begin{figure}[H]
		\begin{center}
			\begin{tikzpicture}[>=stealth',shorten >=1pt,auto,node distance=3.6 cm, scale = 1.0, transform shape,
	>=stealth,inner sep=2pt]
		\node[elliptic state] (2-12) {$q_2,\{q_1,q_2\}$};
		\node[elliptic state,initial,initial where=above] (0-0) [right of =2-12] {$q_0,\{q_0\}$};
		\node[elliptic state] (1-12) [right of =0-0] {$q_1,\{q_1,q_2\}$};
		\node[elliptic state] (1-1) [right of =1-12] {$q_1,\{q_1\}$};
		\node[elliptic state] (2-1) [below of =2-12] {$q_2,\{q_1\}$};
		\node[elliptic state] (1-2) [right of =2-1] {$q_1,\{q_2\}$};
		\node[elliptic state] (1-emptys) [right of =1-2] {$q_1,\emptyset$};

		\path [->]
		(0-0) edge node [above, sloped] {$(b,b)$} (2-12)
		(0-0) edge node [above, sloped] {$(b,b)$} (1-12)
		(0-0) edge [loop below] node {$(a,a),(\vep,\ep)$} (0-0)
		(1-12) edge node [above, sloped] {$(b,b)$} (1-1)
		(1-1) edge [loop right] node {$(b,b)$} (1-1)
		(1-2) edge node [above, sloped] {$(b,b)$} (1-emptys)
		(1-emptys) edge [loop right] node {$(b,b)$} (1-emptys)
		;
	\end{tikzpicture}
		\end{center}
		\caption{Part of $\CCa(\Scal_{2\vep},\Scal_{2\obs}^{\vep})$ corresponding to the LFSA $\Scal_2$ in Figure~\ref{FA:fig26}.}
		\label{FA:fig37}
	\end{figure}

	By Theorem~\ref{FA:thm_ISO}, $\Scal_2$ is not ISO with respect to $\red \{q_0\}$, because $\red q_0$ is the unique
	initial state. In addition,
	one has $\Scal_2$ is InfSO with respect to $\red\{q_2\}$ by Theorem~\ref{FA:thm_InfSO},
	because the unique reachable state of $\Scal_{2\obs}$ containing $\red q_2$ is $\{q_1,{\red q_2}\}$ and
	in $\CCa(\Scal_{2\vep},\Scal_{2\obs}^{\vep})$, there is no state reachable from $({\red q_2},\{q_1\})$ of 
	the form $(-,\emptyset)$. By the reachable state $\red\{q_1\}$ of $\Scal_{2\obs}$, one sees $\Scal_2$ is not InfSO 
	with respect to $\red\{q_1\}$, which can also be seen from the fact that $\{{\red q_1},q_2\}$ is reachable 
	in $\Scal_{2\obs}$ and in $\CCa(\Scal_{2\vep},\Scal_{2\obs}^{\vep})$,
	the state $({\red q_1},\emptyset)$ is reachable from $({\red q_1},\{q_2\})$.
\end{example}

\section{Strong opacity}

In Section~\ref{sec:StanOpa}, variants of notions of opacity were shown to describe the ability of an 
LFSA to forbid its visit of secret states from being leaked to an external intruder. Sometimes, such 
``standard'' opacity is not sufficiently strong, e.g., in some CSO LFSA, when observing a generated 
label sequence, one can make sure that some secret state must have been visited, although cannot make sure
of the exact visit instant of time. Consider the following LFSA $\Scal_4$:
\begin{figure}[H]
		\begin{center}
			\begin{tikzpicture}[>=stealth',shorten >=1pt,auto,node distance=3.3 cm, scale = 1.0, transform shape,
	>=stealth,inner sep=2pt]
		\node[state,initial, initial where = left] (1) {$q_1$};
		\node[state] (2) [right of =1] {$\red q_2$};
		\node[state, initial, initial where = left] (3) [right of =2] {$\red q_3$};
		\node[state] (4) [right of =3] {$q_4$};

		\path [->]
		(1) edge node [above, sloped] {$a$} (2)
		(3) edge node [above, sloped] {$a$} (4)
		;
	\end{tikzpicture}
		\end{center}
		\caption{LSFA $\Scal_4$, where $\ell(a)=a$, $q_2$ and $q_3$ are secret, $q_1$ and $q_4$ are not.}
		\label{FA:fig38}
\end{figure}
Automaton $\Scal_4$ is CSO with respect to $\{q_2,q_3\}$. When observing $a$, one can make sure that 
at least one secret state has been visited, in detail, if $q_1\xrightarrow[]{a}q_2$ was generated then
$q_2$ was visited, if $q_3\xrightarrow[]{a}q_4$ was generated then $q_3$ was visited. 
This leads to a ``strong version'' of CSO which guarantees that
an intruder cannot make sure whether the current state is secret, and can also guarantee that 
the intruder cannot make sure whether some secret state has been 
visited. Analogously, the other three standard versions of opacity studied in Section~\ref{sec:StanOpa} could
also be reformulated as their strong versions.

In order to define strong versions of state-based opacity, we define a \emph{non-secret} run of an LFSA by
a run containing no secret states.

\begin{definition}[SISO \cite{Han2021StrongOpacityDES}]\label{FA:def_SISO}
	Consider an LFSA $\Scal=(Q,E,\dt,Q_0,\Sig,\ell)$ and a subset $\QS\subset Q$ of secret states.
	$\Scal$ is called \emph{strongly initial-state opaque} (SISO) \emph{with respect to $\QS$}
	if for every run $q_0\xrightarrow[]{s}q$ with $q_0\in Q_0\cap \QS$, there exists a \emph{non-secret} run
	$q_0'\xrightarrow[]{s'}q'$ such that $q_0'\in Q_0\setminus \QS$
	and $\ell(s)=\ell(s')$.
\end{definition}

If an LFSA is SISO, then an external intruder cannot make sure whether the initial state
is secret and cannot make sure whether some secret state has been visited either, by observing generated
label sequences.

\begin{definition}[SCSO \cite{Han2021StrongOpacityDES}]\label{FA:def_SCSO}
	Consider an LFSA $\Scal=(Q,E,\dt,Q_0,\Sig,\ell)$ and a subset $\QS\subset Q$ of secret states.
	$\Scal$ is called \emph{strongly current-state opaque} (SCSO)  \emph{with respect to $\QS$}
	if for every run $q_0\xrightarrow[]{s}q$ with 
	$q_0\in Q_0$ and $q\in \QS$, there exists a \emph{non-secret} run $q_0'\xrightarrow[]{s'}q'$ 
	such that $q_0'\in Q_0$ and $\ell(s)=\ell(s')$.
\end{definition}

If an LFSA is SCSO, then an external intruder cannot make sure whether the current state is
secret and cannot make sure whether some secret state has been visited either, by observing generated label sequences.

\begin{definition}[SInfSO \cite{Ma2021StrongInfSODES}]\label{FA:def_SInfSO}
	Consider an LFSA $\Scal=(Q,E,\dt,Q_0,\Sig,\ell)$ and a subset $\QS\subset Q$ of secret states.
	$\Scal$ is called \emph{strongly infinite-step opaque} (SInfSO)  \emph{with respect to $\QS$}
	if for every run $q_0\xrightarrow[]{s_1}q_1\xrightarrow
	[]{s_2}q_2$ with $q_0\in Q_0$ and $q_1\in \QS$, there exists a \emph{non-secret} run $q_0'\xrightarrow[]{s_1'}q_1'\xrightarrow
	[]{s_2'}q_2'$ such that $q_0'\in Q_0$, $\ell(s_1)=\ell(s_1')$, and $\ell(s_2)=\ell(s_2')$.
\end{definition}

\begin{definition}[S$K$SO]\label{FA:def_SKSO}
	Consider an LFSA $\Scal=(Q,E,\dt,Q_0,\Sig,\ell)$, a subset $\QS\subset Q$ of secret states, and a positive
	integer $K$. $\Scal$ is called \emph{strongly $K$-step opaque}\footnote{Note that the current S$K$SO is slightly
	stronger than the $K$-step strong opacity proposed in \cite{Falcone2015StrongKStepOpacity}, where in the latter,
	$q_0'\xrightarrow[]{s_1'}q_1'\xrightarrow[]{s_2'}q_2'$ is not necessarily non-secret, but only $q_1'\xrightarrow
	[]{s_2'}q_2'$ is necessarily non-secret.}
	(S$K$SO)  \emph{with respect to $\QS$}
	if for every run $q_0\xrightarrow[]{s_1}q_1\xrightarrow
	[]{s_2}q_2$ with $q_0\in Q_0$, $q_1\in \QS$, and $|\ell(s_2)|\le K$, there exists a \emph{non-secret} run
	$q_0'\xrightarrow[]{s_1'}q_1'\xrightarrow
	[]{s_2'}q_2'$ such that $q_0'\in Q_0$, $\ell(s_1)=\ell(s_1')$, and $\ell(s_2)=\ell(s_2')$.
\end{definition}

From now on, SISO is short for ``strong initial-state opacity'' or ``strongly initial-state opaque'' adapted to the 
context. Analogous for SCSO, SInfSO, and S$K$SO.

If an LFSA is SInfSO (S$K$SO), then an external intruder cannot make sure whether 
any past state (at most $K$ steps prior to the current time) is secret and cannot make sure whether some secret 
state has been visited either, by observing generated label sequences.

Next we use the concurrent-composition structure to do verification for the four strong versions of state-based
opacity, where the derived verification algorithms are more efficient than the $K$/Inf-step recognizer method
proposed in \cite{Ma2021StrongInfSODES}.

Consider an LFSA $\Scal=(Q,E,\dt,Q_0,\Sig,\ell)$ and a subset $\QS\subset Q$ of secret states, let $$\Scal_{\dss}=
(Q_{\dss},E_{\dss},\dt_{\dss},Q_{0\dss},\Sig,\ell_{\dss})$$
be the accessible part of the remainder of $\Scal$ by \emph{deleting secret states} (dss) of $\Scal$.
Let $$\Scal_{\dss\obs}=(Q_{\dss\obs},\ell_{\dss}(E_{\dss})\setminus\{\ep\},\dt_{\dss\obs},q_{0\dss\obs})$$ 
be the observer of $\Scal_{\dss}$.

Similarly to $\Scal_{\obs}$, the size of $\Scal_{\dss\obs}$ is $O(2^{|Q|}|\ell(E_o)|)$, the time consumption of
computing $\Scal_{\dss\obs}$ is $O(2^{|Q|}|Q|^2|\ell(E_o)||E|)$. The size of $\Scal_{\dss\obs}$ is slightly 
smaller than that of $\Scal_{\obs}$.

We will use the concurrent composition $\CCa(\Scal_{\vep},\Scal_{\dss\obs}^{\vep})$ to verify the four strong
versions of opacity. Unlike CSO and ISO, SCSO and SISO cannot be verified by directly using the notions of
observer and reverse observer. 

Similarly to $\CCa(\Scal_{\vep},\Scal_{\obs}^{\vep})$, the size of $\CCa(\Scal_{\vep},\Scal_{\dss\obs}^{\vep})$ 
is $O(|Q|^22^{|Q|}|E|)$. The time consumption of computing $\CCa(\Scal_{\vep},\Scal_{\dss\obs}^{\vep})$ is also
$O(|Q|^22^{|Q|}|E|)$ after $\Scal_{\vep}$ and $\Scal_{\dss\obs}^{\vep}$ have been computed.

\begin{theorem}[\cite{Han2021StrongOpacityDES}]\label{FA:thm_SCSO}
	An LFSA $\Scal=(Q,E,\dt,Q_0,\Sig,\ell)$ is SCSO with respect to $\QS\subset Q$ if and only if for every
	state $(q,x)$ reachable in $\CCa(\Scal_{\vep},\Scal_{\dss\obs}^{\vep})$, if $q\in \QS$ then $x\ne\emptyset$.
\end{theorem}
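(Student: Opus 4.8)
The plan is to reduce the statement to a single correspondence lemma, which is the analogue for $\CCa(\Scal_{\vep},\Scal_{\dss\obs}^{\vep})$ of the reading of $\CCa(\Scal_{\vep},\Scal_{\obs}^{\vep})$ underlying Theorems~\ref{FA:thm_ISO} through \ref{FA:thm_KSO}, namely: a state $(q,x)$ is reachable in $\CCa(\Scal_{\vep},\Scal_{\dss\obs}^{\vep})$ if and only if there is a run $q_0\xrightarrow[]{s}q$ in $\Scal$ with $q_0\in Q_0$ and $x=\Mt(\Scal_{\dss},\ell(s))$. To establish it I would trace a run of the composition. The left component $\Scal_{\vep}$ is just $\Scal$ with every observable event relabelled by its own label and every unobservable event relabelled by $\vep$, so its runs are in observation-preserving bijection with the runs of $\Scal$; the right component $\Scal_{\dss\obs}^{\vep}$ is the observer of $\Scal_{\dss}$ equipped with $\vep$-self-loops. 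In $\CCa$ the observable labels of the two components are forced to coincide while the $\vep$-moves of $\Scal_{\vep}$ interleave (the observer merely waiting via its $\vep$-self-loop), so after any run producing observation $\sigma\in\Sig^*$ the right component sits exactly at $\dt_{\dss\obs}(q_{0\dss\obs},\sigma)=\Mt(\Scal_{\dss},\sigma)$. The identity $\dt_{\obs}(\emptyset,a)=\emptyset$ guarantees that the composition keeps advancing even after the estimate becomes empty, so no pair $(q,x)$ is lost.

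The second ingredient identifies the right-hand set $x$ with the witness runs in Definition~\ref{FA:def_SCSO}. Deleting the secret states and taking the accessible part removes exactly those runs of $\Scal$ that visit some state of $\QS$; hence a run lies in $\Scal_{\dss}$ if and only if it is a non-secret run of $\Scal$ whose initial state lies in $Q_0\setminus\QS$, which is precisely what SCSO demands of its witness (a non-secret run forces $q_0'\notin\QS$). Consequently $\Mt(\Scal_{\dss},\sigma)\ne\emptyset$ holds if and only if there exists a non-secret run $q_0'\xrightarrow[]{s'}q'$ with $q_0'\in Q_0$ and $\ell(s')=\sigma$.

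With these two facts the equivalence is immediate. For the ``only if'' direction I would take any reachable $(q,x)$ with $q\in\QS$; the correspondence furnishes a run $q_0\xrightarrow[]{s}q$ with $x=\Mt(\Scal_{\dss},\ell(s))$, and since $q\in\QS$ the SCSO hypothesis produces a non-secret witness run with observation $\ell(s)$, which by the second ingredient forces $x\ne\emptyset$. For the ``if'' direction I would start from an arbitrary run $q_0\xrightarrow[]{s}q$ with $q_0\in Q_0$ and $q\in\QS$, set $\sigma=\ell(s)$, observe that $(q,\Mt(\Scal_{\dss},\sigma))$ is reachable in the composition, invoke the hypothesis to obtain $\Mt(\Scal_{\dss},\sigma)\ne\emptyset$, and read off a non-secret witness run with the same observation, establishing SCSO.

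I expect the only real work to lie in the correspondence lemma of the first paragraph, and within it the careful bookkeeping of the synchronization semantics: checking that the $\vep$-self-loops on the observer let the left component perform its unobservable moves without disturbing the estimate, and that the empty-set sink does not prematurely truncate any reachable pair. Everything downstream is a direct translation between the composition and Definition~\ref{FA:def_SCSO}, so no nontrivial combinatorics (e.g.\ no pigeonhole argument as in Theorems~\ref{FA:thm10} and \ref{FA:thm13}) is required; the essential novelty relative to the standard-opacity theorems is simply that replacing $\Scal_{\obs}$ by $\Scal_{\dss\obs}$ makes the right component track estimates along non-secret runs only, which is exactly the extra strength encoded in SCSO.
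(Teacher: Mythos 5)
Your proof is correct and is essentially the argument the paper intends, namely the direct definitional unfolding (the paper gives no separate proof, asserting these criteria follow from the definitions and citing Han et al.): your correspondence lemma --- the reachable states of $\CCa(\Scal_{\vep},\Scal_{\dss\obs}^{\vep})$ are exactly the pairs $(q,\Mt(\Scal_{\dss},\ell(s)))$ arising from runs $q_0\xrightarrow[]{s}q$ of $\Scal$ with $q_0\in Q_0$, combined with the identification of runs of $\Scal_{\dss}$ with non-secret runs of $\Scal$ from $Q_0\setminus \QS$ --- is precisely that unfolding, including the two genuinely load-bearing checks (totality of the observer via the $\emptyset$-sink, so secret-ending runs always lift, and the interleaving semantics letting the left component move unobservably while the estimate stands still). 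One cosmetic slip worth noting: $\Scal_{\dss\obs}^{\vep}$ is not literally ``equipped with $\vep$-self-loops'' --- the paper only adds the event $\vep$ to its alphabet, and the waiting behavior comes from the composition's $(\vep,\epsilon)$-transitions fixing the right state --- but this is behaviorally identical to your reading and nothing in your argument depends on the difference.
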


\begin{theorem}[\cite{Han2021StrongOpacityDES}]\label{FA:thm_SISO}
	An LFSA $\Scal=(Q,E,\dt,Q_0,\Sig,\ell)$ is SISO with respect to $\QS\subset Q$ if and only if $Q_0\ne\emptyset
	\implies Q_0\not\subset \QS$ and for every
	$q_0\in Q_0\cap \QS$, in concurrent composition $\CCa(\Scal_{\vep},\Scal_{\dss\obs}^{\vep})$, all states reachable 
	from $(q_0,q_{0\dss\obs})$ are of the form $(-,x)$ with $x\ne\emptyset$.
\end{theorem}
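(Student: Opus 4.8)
The plan is to read the biconditional directly off the operational semantics of $\CCa(\Scal_{\vep},\Scal_{\dss\obs}^{\vep})$, following the same scheme as the analogous Theorem~\ref{FA:thm_ISO} but with the full observer $\Scal_{\obs}$ replaced by the secret-deleted observer $\Scal_{\dss\obs}$, which is exactly what upgrades ``initial state non-secret'' to ``whole run non-secret''.

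First I would prove the central semantic correspondence: fixing $q_0\in Q_0\cap\QS$, a state $(q,x)$ is reachable in $\CCa(\Scal_{\vep},\Scal_{\dss\obs}^{\vep})$ from $(q_0,q_{0\dss\obs})$ if and only if there is a run $q_0\xrightarrow{s}q$ in $\Scal$ with $x=\dt_{\dss\obs}(q_{0\dss\obs},\ell(s))$. This rests on two facts verified separately. On the left, runs of $\Scal_{\vep}$ are in label-preserving bijection with runs of $\Scal$, so the left coordinate merely tracks a run of $\Scal$ from the secret state $q_0$. On the right, since $\Scal_{\dss\obs}$ determinizes the accessible secret-deleted automaton $\Scal_{\dss}$, one has $\dt_{\dss\obs}(q_{0\dss\obs},\sigma)=\Mt(\Scal_{\dss},\sigma)$, and a run of $\Scal$ lies in $\Scal_{\dss}$ precisely when it is \emph{non-secret} and starts in $Q_0\setminus\QS$; hence $\dt_{\dss\obs}(q_{0\dss\obs},\sigma)$ is exactly the set of endpoints of non-secret runs from $Q_0\setminus\QS$ whose observation is $\sigma$. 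The role of the $\vep$-interleaving is only to keep the two observations synchronized: matched observable labels advance both coordinates, while each unobservable move of the left is absorbed by a $(\vep,\ep)$-step in which the observer idles, consistent with the fact that $\Scal_{\dss\obs}$ already folds the unobservable reach into its initial state and its observable transitions. Consequently the right coordinate after any composed run equals $\dt_{\dss\obs}(q_{0\dss\obs},\ell(s))$ for the left run $s$.

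Granting the correspondence, the equivalence unwinds from Definition~\ref{FA:def_SISO}. For a secret initial state $q_0$, the run $q_0\xrightarrow{s}q$ admits a non-secret witness run from $Q_0\setminus\QS$ with the same observation exactly when $\Mt(\Scal_{\dss},\ell(s))\ne\emptyset$, i.e.\ when the reachable state $(q,\dt_{\dss\obs}(q_{0\dss\obs},\ell(s)))$ has nonempty second coordinate; ranging over all $q_0\in Q_0\cap\QS$ and all runs $s$ turns SISO into the stated reachability condition that all states reachable from $(q_0,q_{0\dss\obs})$ are of the form $(-,x)$ with $x\ne\emptyset$. The $\ep$-observation case is what the side condition $Q_0\ne\emptyset\implies Q_0\not\subset\QS$ records: the empty run $q_0\xrightarrow{\ep}q_0$ needs a non-secret $\ep$-observation run from $Q_0\setminus\QS$, which exists iff $Q_0\setminus\QS\ne\emptyset$, equivalently iff $q_{0\dss\obs}=\UR(Q_0\setminus\QS)\ne\emptyset$. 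Assembling both implications yields the biconditional.

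I expect the only real work to be the bookkeeping in the semantic correspondence rather than any conceptual difficulty: one must check that deleting secret states and then determinizing genuinely computes the set of non-secret-reachable states (so that ``$x\ne\emptyset$'' is equivalent to the existence of a non-secret witness), and that the $(\vep,\ep)$-interleaving lets the observer coordinate idle during the left's unobservable moves without altering $x$. Once that is pinned down, both directions are immediate and only the $\ep$-observation base case needs separate attention, handled by the side condition.
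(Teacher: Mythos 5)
Your proposal is correct and takes essentially the paper's route: the paper gives no separate argument for this theorem (it is cited from Han et al.\ and, like the other opacity criteria, treated as following directly from the definitions), and the correspondence you make explicit --- the left coordinate tracks a run $q_0\xrightarrow{s}q$ of $\Scal$ from the secret initial state while the right coordinate equals $\Mt(\Scal_{\dss},\ell(s))$, so SISO fails exactly when some state reachable from $(q_0,q_{0\dss\obs})$ has empty second component, with the empty-observation run absorbed by the side condition $Q_0\not\subset\QS$ --- is precisely that unwinding. One notational slip worth fixing: $q_{0\dss\obs}$ is the unobservable reach of $Q_0\setminus\QS$ computed \emph{within} $\Scal_{\dss}$, i.e.\ along unobservable paths through non-secret states only, not $\UR(Q_0\setminus\QS)$ taken in $\Scal$ as you write at the end, although the only property you invoke there --- nonemptiness iff $Q_0\setminus\QS\ne\emptyset$ --- holds for both sets.
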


\begin{theorem}[\cite{Han2021StrongOpacityDES}]\label{FA:thm_SInfSO}
	An LFSA $\Scal=(Q,E,\dt,Q_0,\Sig,\ell)$ is SInfSO with respect to $\QS\subset Q$, if and only if, (\romannumeral1)
	for every state $(q,x)$ reachable in concurrent composition $\CCa(\Scal_{\vep},\Scal_{\dss\obs}^{\vep})$,
	if $q\in \QS$ then $x\ne\emptyset$ and all states reachable from $(q,x)$ are of the form $(-,x')$ with
	$x'\ne\emptyset$, if and only if, (\romannumeral2) all states $(q'',x'')$ reachable in $\CCa(\Scal_{\vep},
	\Scal_{\dss\obs}^{\vep})$ satisfy $x''\ne\emptyset$.
\end{theorem}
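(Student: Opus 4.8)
The plan is to prove Theorem~\ref{FA:thm_SInfSO} by first establishing the equivalence of SInfSO with condition (\romannumeral1), and then showing that (\romannumeral1) and (\romannumeral2) are equivalent. Let me think about what the concurrent composition $\CCa(\Scal_{\vep},\Scal_{\dss\obs}^{\vep})$ does: the left component tracks a run in $\Scal$ (with unobservable events relabeled $\vep$), while the right component tracks the observer of $\Scal_{\dss}$ — the subautomaton with secret states deleted. A reachable state $(q,x)$ means there is a secret-free observation-consistent run reaching the observer-state $x$, while $q$ is a possible current state of $\Scal$ under the same observation. Crucially, $x\neq\emptyset$ means that observation can be produced by a \emph{non-secret} run.

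Let me think about the structure of SInfSO.

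SInfSO requires: for every run $q_0\xrightarrow{s_1}q_1\xrightarrow{s_2}q_2$ with $q_1\in\QS$, there is a non-secret run matching both $\ell(s_1)$ and $\ell(s_2)$. Let me unpack the plan step by step.

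First I would characterize when a secret run-prefix $q_0\xrightarrow{s_1}q_1$ (with $q_1\in\QS$) has its \emph{first part} covered by a non-secret run producing $\ell(s_1)$ and reaching some state $q_1'$. This is exactly captured by reachability of a state $(q_1,x)$ with $q_1\in\QS$ and $x\neq\emptyset$ in $\CCa(\Scal_{\vep},\Scal_{\dss\obs}^{\vep})$: the left coordinate $q_1$ witnesses the secret run, and $x=\Mt(\Scal_{\dss},\ell(s_1))\neq\emptyset$ collects the endpoints of all non-secret runs with the same observation. Then I would extend: the continuation $q_1\xrightarrow{s_2}q_2$ must be matched by a non-secret continuation, which corresponds to requiring that every state $(q',x')$ reachable from $(q_1,x)$ in the concurrent composition also has $x'\neq\emptyset$, since $x'$ tracks the non-secret runs consistent with $\ell(s_1 s_2)$. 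Putting these together gives the equivalence of SInfSO with (\romannumeral1).

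For the equivalence of (\romannumeral1) and (\romannumeral2), the key observation is that the condition ``$x''\neq\emptyset$ for \emph{all} reachable $(q'',x'')$'' is globally stronger than requiring it only downstream of secret states. The forward direction (\romannumeral2)$\Rightarrow$(\romannumeral1) is immediate. For (\romannumeral1)$\Rightarrow$(\romannumeral2), I would argue that if some reachable $(q'',x'')$ has $x''=\emptyset$, then along the path witnessing its reachability, by looking at where the emptiness of the observer-component first appears, one can trace back. The main point is that $x''=\emptyset$ means the observation $\ell(s)$ leading to $(q'',x'')$ cannot be produced by \emph{any} non-secret run, so every run producing $\ell(s)$ passes through a secret state at some point; in particular the very run tracked by the left component passes through some secret state $q_1\in\QS$ at an intermediate instant, and at that instant the observer-component was already empty (since once the observer of $\Scal_{\dss}$ reaches $\emptyset$ it stays $\emptyset$, as $\dt_{\obs}(\emptyset,a)=\emptyset$). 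This exhibits a reachable $(q_1,\emptyset)$ with $q_1\in\QS$, violating (\romannumeral1).

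The hard part will be the (\romannumeral1)$\Rightarrow$(\romannumeral2) direction, specifically the claim that whenever an observation cannot be produced by any non-secret run, the \emph{particular} run tracked by the left coordinate must itself pass through a secret state. This requires care: the left component is an arbitrary run of $\Scal$ consistent with the observation, not necessarily secret, so I must use the fact that $x''=\emptyset$ rules out \emph{all} non-secret runs producing that observation, hence this left run in particular is secret-touching. Combined with the monotone absorption property of the empty observer state ($\emptyset$ is a sink), I can locate the earliest secret state $q_1$ on the left run and confirm the companion observer-coordinate is already $\emptyset$ there, yielding the required violating state $(q_1,\emptyset)$. I would also need to verify carefully that the $\vep$-synchronization in $\CCa(\Scal_{\vep},\Scal_{\dss\obs}^{\vep})$ correctly interleaves unobservable moves so that the left and right coordinates stay observation-aligned at every step, which is where the construction of $\Scal_{\vep}$ and $\Scal_{\dss\obs}^{\vep}$ via the shared labeling $\ell'$ does the bookkeeping.
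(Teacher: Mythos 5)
Your overall route coincides with the paper's: SInfSO $\Leftrightarrow$ (\romannumeral1) by unwinding the definitions, (\romannumeral2) $\Rightarrow$ (\romannumeral1) trivially, and for (\romannumeral1) $\Rightarrow$ (\romannumeral2) the key fact that $x''=\emptyset$ forces the left-tracked run to visit a secret state (the paper states this directly rather than contrapositively: if the witnessing run contains no state $(q,x)$ with $q\in\QS$, the left run is non-secret, hence $q''\in x''$ and $x''\ne\emptyset$). However, your final step is wrong as written. You claim that at the instant the left run first visits a secret state $q_1$, ``the observer-component was already empty,'' citing that $\emptyset$ is a sink of $\dt_{\obs}$. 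The sink property propagates emptiness \emph{forward} along a run, not backward; it says nothing about the right component at an earlier instant. Moreover, the claim itself is false in general: in the paper's Example~\ref{FA:exam32} (Figure~\ref{FA:fig44}), the run $(q_0,\{q_0\})\xrightarrow[]{(\vep,\ep)}({\red q_3},\{q_0\})\xrightarrow[]{(a,a)}(q_4,\emptyset)$ visits the secret state $q_3$ while the right component is still $\{q_0\}\ne\emptyset$; emptiness appears only downstream. Indeed, a secret visit with nonempty companion estimate followed by a later collapse to $\emptyset$ is exactly the typical way SInfSO fails, so the state $(q_1,\emptyset)$ you want to exhibit need not be reachable, and your intended violation of the first conjunct of (\romannumeral1) evaporates.

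The gap is easily repaired, because (\romannumeral1) has a second conjunct that you never invoke: besides $x\ne\emptyset$ at every reachable $(q,x)$ with $q\in\QS$, it requires that \emph{all states reachable from} $(q,x)$ have nonempty right component. From your (correct) observation that the left run visits some secret $q_1$ at an intermediate composite state $(q_1,x_t)$, you are done regardless of the value of $x_t$: if $x_t=\emptyset$ the first conjunct of (\romannumeral1) fails, and if $x_t\ne\emptyset$ then $(q'',\emptyset)$ is reachable from $(q_1,x_t)$ along the remainder of the very same run, so the second conjunct fails. With that one-line substitution your contrapositive argument becomes sound and is essentially the paper's case split read backwards.
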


\begin{proof}
	By definition, (\romannumeral1) is equivalent for $\Scal$ to be SInfSO with respect to $\QS$. 

	(\romannumeral2) $\implies$ (\romannumeral1): This trivially holds.

	(\romannumeral1) $\implies$ (\romannumeral2): Consider a state $(q'',x'')$ reachable in $\CCa(\Scal_{\vep},
	\Scal_{\dss\obs}^{\vep})$. If there is a run from some initial state of $\CCa(\Scal_{\vep},
	\Scal_{\dss\obs}^{\vep})$ to $(q'',x'')$ containing a state $(q,x)$ with $q\in \QS$, then by (\romannumeral1),
	one has $x''\ne\emptyset$; otherwise one also has $x''\ne\emptyset$ because $q''\in x''$ by definition of
	$\CCa(\Scal_{\vep},\Scal_{\dss\obs}^{\vep})$.
\end{proof}

\begin{theorem}\label{FA:thm_SKSO}
	An LFSA $\Scal=(Q,E,\dt,Q_0,\Sig,\ell)$ is S$K$SO with respect to $\QS\subset Q$ 
	if and only if for every
	state $(q,x)$ reachable in concurrent composition $\CCa(\Scal_{\vep},\Scal_{\dss\obs}^{\vep})$, if $q\in \QS$
	then $x\ne\emptyset$ and for every run
	$(q,x)\xrightarrow[]{s'}(q',x')$ with $|\ell(s')|\le K$, $x'\ne\emptyset$.
\end{theorem}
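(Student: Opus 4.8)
The plan is to read the quantifier structure of S$K$SO directly off the reachability relation of $\CCa(\Scal_{\vep},\Scal_{\dss\obs}^{\vep})$, in the same spirit as the proofs of Theorems~\ref{FA:thm_SCSO} and~\ref{FA:thm_SInfSO}. Conceptually S$K$SO interpolates between SCSO (where one needs only $x\ne\emptyset$ at a secret first component) and SInfSO (the case $K=\infty$); the structural difference is that the length bound $|\ell(s_2)|\le K$ blocks the global simplification used for SInfSO, so the argument must stay close to the definition and keep the bounded-reachability clause.

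First I would record the semantic content of a reachable state $(q,x)$. If $(q,x)$ is reached by a run of label $\s$, then projecting to the left component shows $q\in\Mt(\Scal,\s)$, while $x=\dt_{\dss\obs}(q_{0\dss\obs},\s)$ is the set of non-secret states reachable by \emph{non-secret} runs of $\Scal$ producing $\s$, since the right component is the observer of $\Scal_{\dss}$ and every state of $\Scal_{\dss}$ is non-secret by construction. The decisive consequence is that $x\ne\emptyset$ holds exactly when some non-secret run of $\Scal$ produces label $\s$. Conversely, any run $q_0\xrightarrow[]{s}q$ of $\Scal$ with $q_0\in Q_0$ lifts to a run of $\CCa(\Scal_{\vep},\Scal_{\dss\obs}^{\vep})$ from $(q_0,q_{0\dss\obs})$ to $(q,\dt_{\dss\obs}(q_{0\dss\obs},\ell(s)))$, because observable labels synchronize the two components while the $\vep$-moves of $\Scal_{\vep}$ fix the observer component.

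For ``only if'', assume $\Scal$ is S$K$SO and let $(q,x)$ be reachable with $q\in\QS$, say via label $\s_1$. Applying the definition to the witnessing run $q_0\xrightarrow[]{s_1}q$ with $s_2=\ep$ (so $|\ell(s_2)|=0\le K$) yields a non-secret run of label $\s_1$, hence $x\ne\emptyset$. For any extension $(q,x)\xrightarrow[]{s'}(q',x')$ with $|\ell(s')|\le K$, projecting to the left component gives $q\xrightarrow[]{s_2}q'$ in $\Scal$ with $\ell(s_2)=\ell(s')$, and S$K$SO again produces a non-secret run of label $\s_1\ell(s_2)$, so $x'\ne\emptyset$. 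For ``if'', take any run $q_0\xrightarrow[]{s_1}q_1\xrightarrow[]{s_2}q_2$ with $q_0\in Q_0$, $q_1\in\QS$, and $|\ell(s_2)|\le K$. Its prefix lifts to a reachable $(q_1,x)$ with $q_1\in\QS$, so by hypothesis $x\ne\emptyset$; the whole run lifts to $(q_1,x)\xrightarrow[]{s'}(q_2,x')$ with $|\ell(s')|=|\ell(s_2)|\le K$, so by hypothesis $x'\ne\emptyset$, i.e.\ some non-secret run of $\Scal$ produces label $\ell(s_1)\ell(s_2)$.

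The step needing most care is the final one of ``if'': turning the single non-secret run guaranteed by $x'\ne\emptyset$ into a decomposition $q_0'\xrightarrow[]{s_1'}q_1'\xrightarrow[]{s_2'}q_2'$ with $\ell(s_1')=\ell(s_1)$ and $\ell(s_2')=\ell(s_2)$. I would do this by cutting the non-secret run of label $\ell(s_1)\ell(s_2)$ at the state $q_1'$ reached once the prefix of label $\ell(s_1)$ has been emitted: such a boundary exists because the run's label factors through the prefix $\ell(s_1)$, the state $q_1'$ is non-secret by construction of $\Scal_{\dss}$, and the suffix from $q_1'$ carries label $\ell(s_2)$, yielding exactly the non-secret witness required for S$K$SO.
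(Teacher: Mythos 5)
Your proof is correct and matches the paper's (implicit) argument: the paper states Theorem~\ref{FA:thm_SKSO} without proof, treating it---like Theorems~\ref{FA:thm_CSO}--\ref{FA:thm_KSO} and the other strong-opacity results---as following directly from the definitions, and your definition-unpacking via the two key facts (the right component of a reachable state is exactly the set of endpoints of non-secret runs with the observed label, and every run of $\Scal$ lifts to $\CCa(\Scal_{\vep},\Scal_{\dss\obs}^{\vep})$ because the observer component, with $\emptyset$ as a sink, never blocks) is precisely the intended justification. Your final cutting argument for decomposing the non-secret witness run at the boundary of the prefix label $\ell(s_1)$ is the right way to close the ``if'' direction and is consistent with how the paper uses $\Scal_{\dss\obs}$ in Example~\ref{FA:exam32}.
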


Similarly to the standard versions of opacity, by definition, one also directly sees the following
corollaries, because $\Scal_{\dss\obs}$ has at most $2^{|Q\setminus Q_S|}-1$ nonempty states.
Hence the verification of S$K$SO based on Theorem~\ref{FA:thm_SKSO} does not depend on $K$ if $K>2^{|Q\setminus Q_S|}-2$.
The verification algorithms shown in Theorems~\ref{FA:thm_SCSO}, \ref{FA:thm_SISO},
\ref{FA:thm_SInfSO}, and \ref{FA:thm_SKSO} all run in time $O(2^{|Q|}|Q|^2|\ell(E_o)||E|)$.

\begin{corollary}\label{FA:thm_SKSO'}
	An LFSA $\Scal=(Q,E,\dt,Q_0,\Sig,\ell)$ is S$K$SO with respect to $\QS\subset Q$ 
	if and only if it is S$\min\{K,2^{|Q\setminus Q_S|}-2\}$SO with respect to $\QS$.
\end{corollary}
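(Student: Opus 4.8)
The plan is to split on the minimum. Writing $K^{*}=2^{|Q\setminus Q_S|}-2$, the case $K\le K^{*}$ is vacuous, since then $\min\{K,K^{*}\}=K$, so I would concentrate on $K>K^{*}$, where the assertion becomes: $\Scal$ is S$K$SO with respect to $\QS$ if and only if it is S$K^{*}$SO with respect to $\QS$. The ``only if'' direction is pure monotonicity, read straight off Definition~\ref{FA:def_SKSO}: any run $q_0\xrightarrow{s_1}q_1\xrightarrow{s_2}q_2$ with $q_1\in\QS$ and $|\ell(s_2)|\le K^{*}$ also satisfies $|\ell(s_2)|\le K$, so the non-secret matching run supplied by S$K$SO already witnesses S$K^{*}$SO; no new construction is needed.

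For the substantial ``if'' direction I would argue the contrapositive through the concurrent-composition criterion of Theorem~\ref{FA:thm_SKSO}. If $\Scal$ is not S$K$SO, that theorem produces a reachable state $(q,x)$ of $\CCa(\Scal_{\vep},\Scal_{\dss\obs}^{\vep})$ with $q\in\QS$ for which either (a) $x=\emptyset$, or (b) $x\ne\emptyset$ and some run $(q,x)\xrightarrow{s'}(q',x')$ has $|\ell(s')|\le K$ and $x'=\emptyset$. Case (a) is a violation of observable length $0\le K^{*}$, so $\Scal$ fails even SCSO and hence S$K^{*}$SO. In case (b) I would pass to a witness that is globally shortest: over all reachable secret states and all their continuations that first send the observer component to $\emptyset$, choose one minimising the observable length $m=|\ell(s')|$. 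Recording the observer component at successive observable steps gives $x=z_0,z_1,\dots,z_m=\emptyset$ with $z_0,\dots,z_{m-1}$ nonempty, and minimality forces every intermediate plant state $p_1,\dots,p_{m-1}$ to be non-secret (a secret one would start a strictly shorter violation). Since every state of $\Scal_{\dss\obs}$ is a subset of $Q\setminus Q_S$, there are at most $2^{|Q\setminus Q_S|}-1$ nonempty observer states, and the target is to show $m\le 2^{|Q\setminus Q_S|}-2$, which would exhibit a continuation of observable length at most $K^{*}$ ending in $x'=\emptyset$ and thus certify, via Theorem~\ref{FA:thm_SKSO}, that $\Scal$ is not S$K^{*}$SO.

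The step I expect to be the main obstacle is turning the count of nonempty observer states into the length bound. One would like to pigeonhole on the trajectory $z_0,\dots,z_{m-1}$ and excise a repeated value $z_a=z_b$; but excising this loop also deletes a segment of the left ($\Scal_{\vep}$) component, whose plant state need not repeat, so the shortened observation need not be realised by an actual run from a secret state. Pigeonholing the full joint state instead is safe but only yields $m\le|Q\setminus Q_S|\cdot(2^{|Q\setminus Q_S|}-1)$, with a spurious plant factor. Obtaining the sharp, observer-only bound $2^{|Q\setminus Q_S|}-2$ therefore requires genuinely more than the plain concurrent-composition pigeonhole: one must show that whenever some realisable observation drives $\Scal_{\dss\obs}$ from $x$ to $\emptyset$, a realisable one of length at most $2^{|Q\setminus Q_S|}-2$ does so too. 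I would secure this by adapting the estimator argument behind the threshold $2^{|Q|}-2$ for standard $K$-step opacity in \cite{Yin2017TWObserverInfiniteStepOpacity} to the secret-deleted observer $\Scal_{\dss\obs}$, whose state set lies in $2^{Q\setminus Q_S}$; its at most $2^{|Q\setminus Q_S|}-1$ nonempty states then deliver exactly the claimed bound. Combined with the trivial case $K\le K^{*}$ and the monotone ``only if'' direction, this yields Corollary~\ref{FA:thm_SKSO'}.
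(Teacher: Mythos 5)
Your skeleton coincides with the paper's: split off the trivial case $K\le K^{*}$, get the ``only if'' direction by monotonicity of Definition~\ref{FA:def_SKSO}, and for the converse pass through Theorem~\ref{FA:thm_SKSO} and shorten a violating suffix by counting nonempty states of $\Scal_{\dss\obs}$. Indeed the paper disposes of the corollary in a single sentence --- ``because $\Scal_{\dss\obs}$ has at most $2^{|Q\setminus Q_S|}-1$ nonempty states'' --- which is precisely the observer-only pigeonhole. Your normalizations (truncating at the first empty observer value, re-anchoring at a later secret plant state, which is legitimate since every intermediate pair $(p_i,z_i)$ on a run from the reachable $(q,x)$ is itself reachable) are correct, and your objection to the pigeonhole is astute and genuine: excising a repeated observer value $z_a=z_b$ deletes a segment of the $\Scal_{\vep}$-component, whose plant state need not recur, so the shortened observation need not be realisable from the anchor; only the joint pigeonhole on $(p_i,z_i)$ is sound, and it yields $m\le |Q\setminus Q_S|\,(2^{|Q\setminus Q_S|}-1)$, not $K^{*}$. (Even granting excision, distinctness of $z_0,\dots,z_{m-1}$ would give $m\le 2^{|Q\setminus Q_S|}-1=K^{*}+1$, an off-by-one still to be settled --- further evidence that the one-line count is not by itself a complete proof.)

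The gap is that your proposal stops exactly there: the decisive inequality $m\le K^{*}$ is never established, but delegated to an unexecuted ``adaptation'' of the estimator argument of \cite{Yin2017TWObserverInfiniteStepOpacity}, and that adaptation is not routine. For \emph{standard} $K$SO the excision goes through because a violation is characterized purely in estimate space: the witness is a pair (forward estimate, reverse estimate of the suffix), both deterministic images of the observation, so deleting a loop of this pair preserves the witness, realisability being absorbed into nonemptiness of the intersection of the two estimates. For S$K$SO the witness irreducibly contains an individual run of $\Scal_{\vep}$ from the secret anchor --- the very component whose survival under excision you could not guarantee --- and if one tries to restore estimator-only reasoning by adjoining the reverse estimate of the suffix, that estimate ranges over subsets of the full $Q$, so pigeonholing the resulting pair yields a threshold contaminated by $2^{|Q|}$ rather than the claimed $2^{|Q\setminus Q_S|}-2$. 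A correct proof must exploit the additional structure your re-anchoring step surfaces --- after the anchor the plant run lives entirely in $Q\setminus\QS$, the same state budget in which the $\Scal_{\dss\obs}$-estimates live --- and neither your sketch nor the paper's one-sentence justification actually carries out that argument. So: right decomposition, honestly flagged obstacle, but the statement is not proved; as written, the final step is a citation-wave at a result proved for a different (purely estimator-based) notion, and it cannot be accepted as closing the hole.
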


\begin{corollary}\label{FA:thm_SInfSO_SKSO}
	An LFSA $\Scal=(Q,E,\dt,Q_0,\Sig,\ell)$ is SInfSO with respect to $\QS\subset Q$ if and only if it is
	S$K$SO with respect to $\QS$ and positive integer $K$ with $K>2^{|Q\setminus Q_S|}-2$.
\end{corollary}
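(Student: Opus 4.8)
The plan is to prove the two implications separately, reducing everything to the characterizations already in hand, Theorems~\ref{FA:thm_SInfSO} and \ref{FA:thm_SKSO}, together with Corollary~\ref{FA:thm_SKSO'}. First I would dispatch the forward direction. If $\Scal$ is SInfSO, then by criterion (\romannumeral2) of Theorem~\ref{FA:thm_SInfSO} every state reachable in $\CCa(\Scal_{\vep},\Scal_{\dss\obs}^{\vep})$ has a nonempty right component. In particular, for every reachable $(q,x)$ with $q\in\QS$ we have $x\ne\emptyset$, and every state reachable from $(q,x)$ is again reachable in $\CCa(\Scal_{\vep},\Scal_{\dss\obs}^{\vep})$, hence also has nonempty right component; by Theorem~\ref{FA:thm_SKSO} this is precisely the condition defining S$K$SO, and it holds for \emph{every} $K\in\N$, in particular for every $K>2^{|Q\setminus Q_S|}-2$. (Equivalently, this is immediate from the definitions: Definition~\ref{FA:def_SInfSO} imposes the opacity requirement on \emph{all} continuations $s_2$, whereas Definition~\ref{FA:def_SKSO} imposes it only on those with $|\ell(s_2)|\le K$.)

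For the converse I would use Corollary~\ref{FA:thm_SKSO'} to upgrade a single bounded guarantee into an unbounded one. Assume $\Scal$ is S$K$SO for some $K>2^{|Q\setminus Q_S|}-2$. By Corollary~\ref{FA:thm_SKSO'} this is equivalent to S$(2^{|Q\setminus Q_S|}-2)$SO; since being S$K$SO is nonincreasing in $K$ (a larger delay only adds runs that must be checked) and, by that corollary, stabilizes at $K=2^{|Q\setminus Q_S|}-2$, this single instance forces S$K'$SO to hold for \emph{every} $K'\in\N$. Now suppose, for contradiction, that $\Scal$ is not SInfSO. By the negation of criterion (\romannumeral2) of Theorem~\ref{FA:thm_SInfSO} there is a state $(q'',\emptyset)$ reachable in $\CCa(\Scal_{\vep},\Scal_{\dss\obs}^{\vep})$, and by the equivalence (\romannumeral1)$\Leftrightarrow$(\romannumeral2) established there this yields a reachable $(q,x)$ with $q\in\QS$ and a run $(q,x)\xrightarrow[]{s'}(q',\emptyset)$ of some \emph{finite} observable length $m:=|\ell(s')|$ (or else already $x=\emptyset$). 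Applying Theorem~\ref{FA:thm_SKSO} with $K'=m$, this run witnesses the failure of S$m$SO, contradicting that S$K'$SO holds for all $K'$. Hence $\Scal$ is SInfSO, which closes the equivalence.

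The step I expect to carry the real weight is the stabilization invoked through Corollary~\ref{FA:thm_SKSO'}: no \emph{new} disclosure of a secret can first occur more than $2^{|Q\setminus Q_S|}-2$ observed labels after the secret visit. The mechanism is that the right component of $\CCa(\Scal_{\vep},\Scal_{\dss\obs}^{\vep})$ evolves as the \emph{deterministic} observer $\Scal_{\dss\obs}$ driven by the observed label string, with $\emptyset$ an absorbing sink; since $\Scal_{\dss\obs}$ has at most $2^{|Q\setminus Q_S|}-1$ nonempty states, any run from a secret-left state $(q,x)$ that reaches an empty right component must pass through a repeated nonempty right-component value, and such a repetition can in principle be excised to shorten the run while its right component still reaches $\emptyset$. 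The delicate point — the one that makes this a statement about the concurrent composition and not about $\Scal_{\dss\obs}$ in isolation — is that the excision has to be carried out inside $\CCa(\Scal_{\vep},\Scal_{\dss\obs}^{\vep})$, so that the shortened observation remains the label of a genuine run of the left component $\Scal_{\vep}$. I would verify this compatibility exactly as in the proof of Corollary~\ref{FA:thm_SKSO'}, which I am taking as given, so that in the present argument the entire length bound is consumed by that corollary and only the \emph{finiteness} of the disclosing continuation $s'$ is needed here.
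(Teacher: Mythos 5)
Your proof is correct and takes essentially the same route as the paper, which disposes of this corollary in one line via the monotonicity of S$K$SO in $K$ together with the bound of at most $2^{|Q\setminus Q_S|}-1$ nonempty states of $\Scal_{\dss\obs}$ --- precisely the stabilization you delegate to Corollary~\ref{FA:thm_SKSO'} --- your forward direction and the finite-witness contradiction through Theorems~\ref{FA:thm_SInfSO} and \ref{FA:thm_SKSO} being sound fleshings-out of that one line. (Your closing excision sketch is only heuristic as stated, since excising a repeated right-component value need not leave a genuine run of the left component $\Scal_{\vep}$, but nothing in your argument rests on it because you correctly take Corollary~\ref{FA:thm_SKSO'} as given.)
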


\begin{remark}
	The main time consumption in the verification algorithms shown in Theorems~\ref{FA:thm_CSO}, \ref{FA:thm_ISO},
	\ref{FA:thm_InfSO}, \ref{FA:thm_KSO}, \ref{FA:thm_SCSO}, \ref{FA:thm_SISO}, \ref{FA:thm_SInfSO}, and
	\ref{FA:thm_SKSO} comes from computing the corresponding observer. If the observer is not explicitly 
	computed, then by using nondeterministic search, verification can be done in $\PSPACE$.
\end{remark}

\begin{example}\label{FA:exam32}
	Consider the following LFSA $\Scal_5$:
	\begin{figure}[H]
		\begin{center}
			\begin{tikzpicture}[>=stealth',shorten >=1pt,auto,node distance=2.5 cm, scale = 1.0, transform shape,
	>=stealth,inner sep=2pt]
		\node[state,initial, initial where = above] (0) {$q_0$};
		\node[state] (1) [right of =0] {$\red q_1$};
		\node[state] (2) [right of =1] {$q_2$};
		\node[state] (3) [left of =0] {$\red q_3$};
		\node[state] (4) [left of =3] {$q_4$};
		\node[state] (5) [left of =4] {$q_5$};

		\path [->]
		(0) edge node [above, sloped] {$a$} (1)
		(1) edge node [above, sloped] {$a$} (2)
		(0) edge node [above, sloped] {$u$} (3)
		(3) edge node [above, sloped] {$a$} (4)
		(4) edge node [above, sloped] {$a$} (5)
		;
	\end{tikzpicture}
		\end{center}
		\caption{LSFA $\Scal_5$, where $\ell(u)=\ep$, $\ell(a)=a$, $q_1$ and $q_3$ are secret, the other 
		states are not.}
		\label{FA:fig39} 
	\end{figure}
	We verify whether $\Scal_5$ is InfSO or SInfSO with respect to $\red\{q_1,q_3\}$ by Theorem~\ref{FA:thm_InfSO}
	and Theorem~\ref{FA:thm_SInfSO}. By Theorem~\ref{FA:thm_InfSO}, we compute $\Scal_{5\vep}$, $\Scal_{5\obs}$,
	and $\CCa(\Scal_{5\vep},\Scal_{5\obs}^{\vep})$ as follows:
	\begin{figure}[H]
		\begin{center}
			\begin{tikzpicture}[>=stealth',shorten >=1pt,auto,node distance=2.5 cm, scale = 1.0, transform shape,
	>=stealth,inner sep=2pt]
		\node[state,initial, initial where = above] (0) {$q_0$};
		\node[state] (1) [right of =0] {$\red q_1$};
		\node[state] (2) [right of =1] {$q_2$};
		\node[state] (3) [left of =0] {$\red q_3$};
		\node[state] (4) [left of =3] {$q_4$};
		\node[state] (5) [left of =4] {$q_5$};

		\path [->]
		(0) edge node [above, sloped] {$a$} (1)
		(1) edge node [above, sloped] {$a$} (2)
		(0) edge node [above, sloped] {$\vep$} (3)
		(3) edge node [above, sloped] {$a$} (4)
		(4) edge node [above, sloped] {$a$} (5)
		;
	\end{tikzpicture}
		\end{center}
		\caption{$\Scal_{5\vep}$ corresponding to the LFSA $\Scal_5$ in Figure~\ref{FA:fig39}.}
		\label{FA:fig40} 
	\end{figure}
	\begin{figure}[H]
		\begin{center}
			\begin{tikzpicture}[>=stealth',shorten >=1pt,auto,node distance=2.5 cm, scale = 1.0, transform shape,
	>=stealth,inner sep=2pt]
		\node[elliptic state,initial, initial where = left] (03) {$\{q_0,{\red q_3}\}$};
		\node[elliptic state] (14) [right of =03] {$\{{\red q_1},q_4\}$};
		\node[elliptic state] (25) [right of =14] {$\{q_2,q_5\}$};
		\node[elliptic state] (emptys) [right of =25] {$\emptyset$};

		\node[elliptic state] (0) [above of =03] {$\{q_0\}$};
		\node[elliptic state] (1) [right of =0] {$\{{\red q_1}\}$};
		\node[elliptic state] (2) [right of =1] {$\{q_2\}$};

		\node[elliptic state] (3) [below of =03] {$\{{\red q_3}\}$};
		\node[elliptic state] (4) [right of =3] {$\{q_4\}$};
		\node[elliptic state] (5) [right of =4] {$\{q_5\}$};

		\path [->]
		(03) edge node [above, sloped] {$a$} (14)
		(14) edge node [above, sloped] {$a$} (25)
		(25) edge node [above, sloped] {$a$} (emptys)
		(emptys) edge [loop right] node {$a$} (emptys)

		(0) edge node [above, sloped] {$a$} (1)
		(1) edge node [above, sloped] {$a$} (2)
		(3) edge node [above, sloped] {$a$} (4)
		(4) edge node [above, sloped] {$a$} (5)
		(2) edge [bend left] node [above, sloped] {$a$} (emptys)
		(5) edge [bend right] node [above, sloped] {$a$} (emptys)
		;
	\end{tikzpicture}
		\end{center}
		\caption{Part of $\Scal_{5\obs}$ corresponding to the LFSA $\Scal_5$ in Figure~\ref{FA:fig39}.}
		\label{FA:fig41} 
	\end{figure}
	\begin{figure}[H]
		\begin{center}
			\begin{tikzpicture}[>=stealth',shorten >=1pt,auto,node distance=3.3 cm, scale = 1.0, transform shape,
	>=stealth,inner sep=2pt]
		\node[elliptic state] (3-0) {${\red q_3},\{q_0\}$};
		\node[elliptic state] (4-1) [right of =3-0] {$q_4,\{{\red q_1}\}$};
		\node[elliptic state] (5-2) [right of =4-1] {$q_5,\{q_2\}$};
		\node[elliptic state] (1-4) [right of =5-2] {${\red q_1},\{q_4\}$};
		\node[elliptic state] (2-5) [right of =1-4] {$q_2,\{q_5\}$};

		\path [->]
		(3-0) edge node [above, sloped] {$(a,a)$} (4-1)
		(4-1) edge node [above, sloped] {$(a,a)$} (5-2)
		(1-4) edge node {$(a,a)$} (2-5)
		;
		\end{tikzpicture}		
		\end{center}
		\caption{Part of $\CCa(\Scal_{5\vep},\Scal_{5\obs}^{\vep})$ corresponding to the LFSA $\Scal_5$ 
		in Figure~\ref{FA:fig39}.}
		\label{FA:fig42} 
	\end{figure}
	In observer $\Scal_{5\obs}$, the reachable states containing secret states are $\{q_0,{\red q_3}\}$ and $\{{\red 
	q_1},q_4\}$.
	In $\CCa(\Scal_{5\vep},\Scal_{5\obs}^{\vep})$, the states reachable from $({\red q_3},\{q_0\})$ and $({\red q_1},\{q_4\})$
	all satisfy that their right components are nonempty. Then by Theorem~\ref{FA:thm_InfSO}, $\Scal_5$ is InfSO
	with respect to $\red\{q_1,q_3\}$.

	By Theorem~\ref{FA:thm_SInfSO}, we compute $\Scal_{5\dss}$, $\Scal_{5\dss\obs}$,
	and $\CCa(\Scal_{5\vep},\Scal_{5\dss\obs}^{\vep})$ as follows:
	\begin{figure}[H]
     \centering
	 \subfigure[$\Scal_{5\dss}$.]{
\begin{tikzpicture}[>=stealth',shorten >=1pt,auto,node distance=3.0 cm, scale = 1.0, transform shape,
	>=stealth,inner sep=2pt]

		\node[state,initial,initial where=left] (0) {$q_0$};

        \end{tikzpicture}
		}\hspace{0.5cm}
        \centering
		\subfigure[$\Scal_{5\dss\obs}$.]{ 
\begin{tikzpicture}[>=stealth',shorten >=1pt,auto,node distance=3.0 cm, scale = 1.0, transform shape,
	>=stealth,inner sep=2pt]
		
		\node[elliptic state,initial,initial where=left] (0obs) {$\{q_0\}$};
		\node[elliptic state, right of =0obs] (emptys) {$\emptyset$};

		\path [->]
		(0obs) edge node [above, sloped] {$a$} (emptys)
		(emptys) edge [loop right] node {$a$} (emptys)
		;
	\end{tikzpicture}
	}
	\caption{$\Scal_{5\dss}$ and $\Scal_{5\dss\obs}$ corresponding to the LFSA $\Scal_5$ 
	in Figure~\ref{FA:fig39}.}
	\label{FA:fig43} 
	\end{figure} 
	\begin{figure}[H]
		\begin{center}
			\begin{tikzpicture}[>=stealth',shorten >=1pt,auto,node distance=3.0 cm, scale = 1.0, transform shape,
	>=stealth,inner sep=2pt]
		\node[elliptic state, initial, initial where = left] (0-0) {$q_0,\{q_0\}$};
		\node[elliptic state] (1-emp) [right of =0-0] {${\red q_1},\emptyset$};
		\node[elliptic state] (2-emp) [right of =1-emp] {$q_2,\emptyset$};
		\node[elliptic state] (3-0) [below of =1-emp] {${\red q_3},\{q_0\}$};
		\node[elliptic state] (4-emp) [right of =3-0] {$q_4,\emptyset$};
		\node[elliptic state] (5-emp) [right of =4-emp] {$q_5,\emptyset$};

		\path [->]
		(0-0) edge node [above, sloped] {$(a,a)$} (1-emp)
		(1-emp) edge node [above, sloped] {$(a,a)$} (2-emp)
		(0-0) edge node [above, sloped] {$(\vep,\ep)$} (3-0)
		(3-0) edge node [above, sloped] {$(a,a)$} (4-emp)
		(4-emp) edge node [above, sloped] {$(a,a)$} (5-emp)
		;
		\end{tikzpicture}		
		\end{center}
		\caption{Part of $\CCa(\Scal_{5\vep},\Scal_{5\dss\obs}^{\vep})$ (all reachable states illustrated)
		corresponding to the LFSA $\Scal_5$ 
		in Figure~\ref{FA:fig39}.}
		\label{FA:fig44} 
	\end{figure}
	In $\CCa(\Scal_{5\vep},\Scal_{5\dss\obs}^{\vep})$, there exist reachable states whose right components 
	are equal to $\emptyset$, then by Theorem~\ref{FA:thm_SInfSO}, $\Scal_5$ is not SInfSO with respect to
	$\red\{q_1,q_3\}$.
\end{example}

\section{Conclusion}

In this paper, a unified concurrent-composition method was given to verify inference-based properties and 
concealment-based properties in labeled finite-state automata. Compared with the previous verification algorithms 
in the literature, the concurrent-composition method does not depend on assumptions and is more efficient.
These results for the first time showed that many inference-based properties and concealment-based properties 
can be unified into one mathematical framework, although the two categories of properties look quite different.
This similarity between the two categories has never been revealed before. It is interesting to explore other 
usages of the concurrent-composition method, e.g., what other properties could be verified by the  method, what
other kinds of models in discrete-event systems could be dealt with by the method, and what other problems (e.g.,
enforcement) can be solved by the method.

\section*{Appendix}

We briefly review the twin-plant method proposed in \cite{Jiang2001PolyAlgorithmDiagnosabilityDES} and the verifier
method proposed in \cite{Yoo2002DiagnosabiliyDESPTime} used for verifying diagnosability and show that
they usually do not work without the two assumptions of liveness/deadlock-freeness and divergence-freeness.
We also briefly show the coincident similarity between the concurrent composition and the generalized version
of the twin plant proposed in \cite{Cassez2008FaultDiagnosisStDyObser}.

For brevity, we consider an LFSA $\Scal=(Q,E,\dt,Q_0,\Sig,\ell)$ in which $\ell|_{E_o}$ is the identity mapping
and $Q_0$ is a singleton and denoted by $\{q_0\}$. Recall that $E_o$ is 
the set of observable events, and $E_{uo}=E\setminus E_o$ is the set of unobservable events. Consider a 
single faulty event $\f\in E_{uo}$.

The twin plant $\TwPl_{\Scal}$ of $\Scal$ proposed in \cite{Jiang2001PolyAlgorithmDiagnosabilityDES} is constructed as 
follows:
\begin{enumerate}[(1)]
	\item Construct the automaton $\Scal_{\phi}=(Q_{\phi},E_o,(x_0,\phi),\dt_{\phi})$, where
		the initial state is $(x_0,\phi)\in Q_{\phi}$, $Q_{\phi}=Q\times \{\phi,F\}$; for all $(x_1,\phi),(x_2,l_2)
		\in Q_{\phi}$ and $t\in E_o$, $((x_1,\phi),t,(x_2,l_2))\in \dt_{\phi}$ if and only if there is a run 
		$x_1\xrightarrow[]{st}x_2$ in $\Scal$ such that $s\in(E_{uo})^*$, and $l_2=F$ if and only
		if $\f$ appears in at least one such $s$; for all $(x_1,F),(x_2,l_2)
		\in Q_{\phi}$ and $t\in E_o$, $((x_1,F),t,(x_2,l_2))\in \dt_{\phi}$ if and only if there is a run 
		$x_1\xrightarrow[]{st}x_2$ in $\Scal$ such that $s\in E_{uo}^*$ and $l_2=F$\footnote{Here
		$F$ denotes propagation of $\f$, i.e., along every run of $\Scal_{\phi}$, once a state has its right 
		component equal to $F$, then all subsequent states have their right components equal to $F$.}.
	\item The twin plant $\TwPl_{\Scal}$ is the parallel composition $\Scal_{\phi}||\Scal_{\phi}$ of
		$\Scal_{\phi}$ with itself, where the 
		parallel composition is as in \cite[Page 80]{Cassandras2009DESbook}. In this special case, 
		$\Scal_{\phi}||\Scal_{\phi}$ is almost the same 
		as the self-composition $\CCa(\Scal_{\phi})$ because $\Scal_{\phi}$ contains no unobservable events.
		After replacing each event $e_o$ in $\Scal_{\phi}||\Scal_{\phi}$ by $(e_o,e_o)$, $\CCa(\Scal_{\phi})$
		is obtained.
\end{enumerate}
\begin{proposition}[\cite{Jiang2001PolyAlgorithmDiagnosabilityDES}]\label{prop1_responses}
	A live and divergence-free $\Scal$ is $\fs$-diagnosable if and only if in $\TwPl_{\Scal}$ all states of 
	all cycles are of the form $((-,l_1),(-,l_2))$ with $l_1= l_2$.
\end{proposition}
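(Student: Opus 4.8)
The plan is to build a faithful dictionary between runs of $\Scal$ and runs of $\Scal_{\phi}$, transport it through the parallel composition to describe $\TwPl_{\Scal}$, and then match the cycle condition against the negation of $\fs$-diagnosability (Definition~\ref{FA:def11} with $\Ef=\fs$) by a pigeonhole argument. First I would establish a correspondence lemma: a run $(x_0,\phi)\xrightarrow{t_1\cdots t_n}(x_n,l_n)$ exists in $\Scal_{\phi}$ if and only if there is a run $q_0\xrightarrow{w}x_n$ in $\Scal$ with $\ell(w)=t_1\cdots t_n$, and the label satisfies $l_n=F$ exactly when some such witness $w$ contains $\f$. Because the second coordinate is monotone (it starts at $\phi$, flips to $F$ on the first observable transition whose unobservable prefix carries $\f$, and never returns), the reached label records precisely whether the fault has already occurred. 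Here the two assumptions enter: liveness forbids deadlocks and divergence-freeness forbids being trapped in an unobservable cycle, so every reachable state of $\Scal$ admits an observable continuation; consequently $\Scal_{\phi}$ is itself live, every fault is eventually followed by an observation (hence registered as $F$), and every arbitrarily long run of $\Scal$ induces an equally long run of $\Scal_{\phi}$.

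Next I would lift this to $\TwPl_{\Scal}=\Scal_{\phi}\|\Scal_{\phi}$. Since $\Scal_{\phi}$ has only observable events, the parallel composition synchronises on every event, so a run of $\TwPl_{\Scal}$ from its initial state is exactly a pair of runs of $\Scal_{\phi}$ with a common label sequence, and by the lemma a pair of runs of $\Scal$ with equal observations whose labels $(l_1,l_2)$ track fault occurrence on the two sides. A short monotonicity remark shows that along any cycle of $\TwPl_{\Scal}$ both labels are constant, so each cycle is either \emph{good} ($l_1=l_2$) or \emph{bad} ($\{l_1,l_2\}=\{\phi,F\}$); the assertion to prove is exactly the absence of reachable bad cycles.

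For the ``only if'' direction I would argue by contraposition. Given a reachable bad cycle, the path reaching it followed by arbitrarily many traversals yields, through the lemma, a faulty run of $\Scal$ on the $F$-side together with a fault-free run on the $\phi$-side sharing a common observation of unbounded length. Fixing the fault at its first occurrence on the $F$-side and letting the number of traversals grow produces, for every $k$, a triple witnessing the negation of Definition~\ref{FA:def11}; hence $\Scal$ is not $\fs$-diagnosable.

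For the ``if'' direction I would again use contraposition, and this is where I expect the main obstacle. If $\Scal$ is not $\fs$-diagnosable, then for every $k$ there are a run $s_k$ ending with $\f$, a continuation $s_k'$ with $|s_k'|>k$, and a fault-free run $s_k''$ with $\ell(s_k'')=\ell(s_ks_k')$. The delicate point is turning the event-length bound $|s_k'|>k$ into an observation-length bound: divergence-freeness forbids reachable unobservable cycles, so at most $|Q|-1$ unobservable events separate consecutive observable ones, giving $|\ell(s_k')|\ge |s_k'|/|Q|$; taking $k>|Q|\cdot|Q_{\phi}|^2$ forces the post-fault observation to exceed the number $|Q_{\phi}|^2$ of states of $\TwPl_{\Scal}$. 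Running the pair $(s_ks_k',s_k'')$ through the twin plant and applying the pigeonhole principle on the post-fault segment, where the first coordinate is pinned at $F$ and the second at $\phi$, produces a repeated state and hence a reachable bad cycle, contradicting the hypothesis. The step needing the most care is precisely this divergence-freeness estimate, together with checking that the repeated state lies strictly after the fault has been registered on the $F$-side, so that the cycle is genuinely bad rather than merely good.
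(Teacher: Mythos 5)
You cannot be compared against the paper here, because the paper offers no proof of this proposition: it is imported verbatim from \cite{Jiang2001PolyAlgorithmDiagnosabilityDES}, and the appendix only reviews the construction of $\TwPl_{\Scal}$ and shows by counterexample ($\Scal_6$) that the hypotheses cannot be dropped. Judged on its own merits, your argument is the standard correctness proof for the twin plant, and its skeleton is sound: the run dictionary between $\Scal_{\phi}$ and $\Scal$; full synchronization in $\Scal_{\phi}||\Scal_{\phi}$ since $\Scal_{\phi}$ has only observable events; monotonicity of the fault labels, so every cycle is homogeneous and ``bad'' means $\{l_1,l_2\}=\{\phi,F\}$; a reachable bad cycle unrolled $k$ times yielding, for every $k$, a witness to the negation of Definition~\ref{FA:def11}; and conversely the pigeonhole on the post-fault segment, with divergence-freeness used exactly where you say it is needed, to convert the event-count bound $|s_k'|>k$ into an observation-count bound (no reachable unobservable cycle, hence at most $|Q|-1$ unobservable events between consecutive observations). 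Your two flagged delicate points (the length conversion, and that the repetition occurs after the fault is registered, since $\f$ is unobservable and only recorded at the next observable event) are indeed the right ones.

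One clause of your correspondence lemma must be restated, because read literally it would sink your ``if'' direction. You write that the reached label satisfies ``$l_n=F$ exactly when some such witness $w$ contains $\f$.'' If the label is thereby \emph{determined} by the set of witnesses (which is also how this paper's paraphrase of the construction reads), the lemma fails you: take $x_0\xrightarrow[]{\f}y$, $x_0\xrightarrow[]{u}y$, $y\xrightarrow[]{a}y$ with $u,\f$ unobservable and $a$ observable. This $\Scal$ is live and divergence-free and clearly not $\fs$-diagnosable ($\f a^n$ versus the fault-free $ua^n$), yet under that reading only $(y,F)$ is reachable on $a$, your fault-free run $s_k''$ cannot be kept ``pinned at $\phi$,'' and the twin plant contains only good cycles $((y,F),(y,F))$ --- the proposition's condition would wrongly certify diagnosability. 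What the argument needs, and what the construction of \cite{Jiang2001PolyAlgorithmDiagnosabilityDES} actually provides through its nondeterministic labels, is the per-run dictionary stated as two separate existential equivalences: $(x_n,F)$ is reachable over $t_1\cdots t_n$ iff some run of $\Scal$ with that observation reaching $x_n$ contains $\f$, and $(x_n,\phi)$ is reachable over $t_1\cdots t_n$ iff some such run is fault-free. With the lemma in that form every subsequent step of your proof goes through. A side remark: under the finite-continuation Definition~\ref{FA:def11} your proof never really invokes liveness --- divergence-freeness does all the work --- which is consistent but worth noting, since liveness is what ties this statement to the original infinite-run formulation of diagnosability.
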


\begin{example}
	Proposition~\ref{prop1_responses} does not generally hold for $\Scal$ that is not
	live or divergence-free. Consider the following LFSA $\Scal_6$:
		\begin{figure}[H]
		\tikzset{global scale/.style={
    scale=#1,
    every node/.append style={scale=#1}}}
		\begin{center}
			\begin{tikzpicture}[global scale = 1.0,
				>=stealth',shorten >=1pt,thick,auto,node distance=3.0 cm, scale = 1.0, transform shape,
	->,>=stealth,inner sep=2pt,
				every transition/.style={draw=red,fill=red,minimum width=1mm,minimum height=3.5mm},
				every place/.style={draw=blue,fill=blue!20,minimum size=7mm}]
				\tikzstyle{emptynode}=[inner sep=0,outer sep=0]
				\node[state, initial, initial where = above] (x0) {$x_0$};
				\node[state] (x1) [left of = x0] {$x_1$};
				\node[state] (x2) [right of = x0] {$x_2$};

				\path[->]
				(x0) edge node [above, sloped] {$\f$} (x1)
				(x0) edge node [above, sloped] {$u$} (x2)
				(x1) edge [loop left] node {$u$} (x1)
				(x2) edge [loop right] node {$u$} (x2)
				;
			\end{tikzpicture}
			\caption{LFSA $\Scal_6$, where $\ell({\f})=\ell(u)=\ep$.}
			\label{fig1_response}
		\end{center}
	\end{figure}
	By definition, $\Scal_{6\phi}$ consists of only the initial state $(x_0,\phi)$. Hence 
	by Proposition~\ref{prop1_responses}, $\Scal_6$ is $\fs$-diagnosable vacuously. However, by definition,
	$\Scal_6$ is not $\fs$-diagnosable.
\end{example}

The verifier $\Ver_{\Scal}=(Q_{\Ver},E,(x_0,N,x_0,N),\dt_{\Ver})$ of $\Scal$ proposed in 
\cite{Yoo2002DiagnosabiliyDESPTime} is constructed as follows: $Q_{\Ver}=Q\times\{N,F\}\times Q\times\{N,F\}$,
for all $(x_1,l_1,x_2,l_2)\in Q_{\Ver}$, $\s_o\in E_o$, and $\s_{uo}\in E_{uo}\setminus
\fs$,
\begin{enumerate}[(i)]
	\item $((x_1,l_1,x_2,l_2),\f,(x_1',F,x_2,l_2))\in\dt_{\Ver}$ if and only if $(x_1,\f,x_1')\in\dt$,
	\item $((x_1,l_1,x_2,l_2),\f,(x_1,l_1,x_2',F))\in\dt_{\Ver}$ if and only if $(x_2,\f,x_2')\in\dt$,
	\item $((x_1,l_1,x_2,l_2),\f,(x_1',F,x_2',F))\in\dt_{\Ver}$ if and only if $(x_1,\f,x_1'),(x_2,\f,x_2')\in\dt$,
	\item $((x_1,l_1,x_2,l_2),\s_{uo},(x_1',l_1,x_2,l_2))\in\dt_{\Ver}$ if and only if $(x_1,\s_{uo},x_1')\in\dt$,
	\item $((x_1,l_1,x_2,l_2),\s_{uo},(x_1,l_1,x_2',l_2))\in\dt_{\Ver}$ if and only if $(x_2,\s_{uo},x_2')\in\dt$,
	\item $((x_1,l_1,x_2,l_2),\s_{uo},(x_1',l_1,x_2',l_2))\in\dt_{\Ver}$ if and only if $(x_1,\s_{uo},x_1'),(x_2,\s_{uo},x_2')\in\dt$,
	\item $((x_1,l_1,x_2,l_2),\s_{o},(x_1',l_1,x_2',l_2))\in\dt_{\Ver}$ if and only if $(x_1,\s_{o},x_1'),(x_2,\s_o,x_2')\in\dt$.
\end{enumerate}

\begin{proposition}[\cite{Yoo2002DiagnosabiliyDESPTime}]\label{prop2_responses}
	A live and divergence-free $\Scal$ is $\fs$-diagnosable if and only if in $\Ver_{\Scal}$ all states of all
	cycles are of the form $(-,l_1,-,l_2)$ with $l_1 = l_2$\footnote{In $\Ver_{\Scal}$, 
	if there is a cycle containing a state of the form $(-,l_1,-,l_2)$ with $l_1\ne l_2$, then
	either (1) all states in the cycle are of the form
	$(-,F,-,N)$ or (2) all states in the cycle are of the form $(-,N,-,F)$.}.
\end{proposition}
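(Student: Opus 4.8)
The plan is to route both implications through one structural correspondence between $\Ver_{\Scal}$ and pairs of equally-labeled runs of $\Scal$. I would first observe that every run of $\Ver_{\Scal}$ from $(x_0,N,x_0,N)$ to $(x_1,l_1,x_2,l_2)$ encodes an ordered pair of runs $q_0\xrightarrow[]{s_1}x_1$ and $q_0\xrightarrow[]{s_2}x_2$ with $\ell(s_1)=\ell(s_2)$, in which $l_1=F$ exactly when $\f$ occurs in $s_1$ and $l_2=F$ exactly when $\f$ occurs in $s_2$; conversely, any two runs with equal labelings can be interleaved into a run of $\Ver_{\Scal}$, taking each observable event jointly via rule (vii) (legitimate because $\ell|_{E_o}$ is the identity, so equal labels mean equal observable events) and each unobservable event one side at a time via (i),(ii),(iv),(v). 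I would then record the monotonicity of the flags: rules (i)--(vii) never turn an $F$ back into an $N$, so along any run each of $l_1,l_2$ is nondecreasing in the order $N<F$, and hence constant on a cycle. This is precisely the footnote's claim that a cycle meeting some $(-,l_1,-,l_2)$ with $l_1\ne l_2$ is everywhere $(-,F,-,N)$ or everywhere $(-,N,-,F)$, which lets me assume in the sequel that an offending cycle has the form $(-,F,-,N)$ throughout.

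For ``not $\fs$-diagnosable $\Rightarrow$ an offending cycle'' I would start from the negation of Definition~\ref{FA:def11} instantiated at $k=|Q_{\Ver}|=4|Q|^2$: there are $ss'\in L(\Scal)$ with $\f$ occurring in $s$ and $|s'|>k$, and $s''\in L(\Scal)$ with $\ell(s'')=\ell(ss')$ and $\f\notin s''$. Realizing the pair $(ss',s'')$ as a run of $\Ver_{\Scal}$, the instant $\f$ is read on the left the state enters the region $(-,F,-,N)$ and, since the right run never contains $\f$, it never leaves it. Each of the more than $k$ left-events of $s'$ is consumed by exactly one verifier transition, so this tail has more than $|Q_{\Ver}|$ transitions and therefore more than $|Q_{\Ver}|$ visited states; by the pigeonhole principle one of them repeats, producing a reachable cycle all of whose states are $(-,F,-,N)$.

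For the converse I would take a reachable offending cycle, which by the monotonicity step is everywhere $(-,F,-,N)$, reached by a pair $(s_1,s_2)$ with $\f\in s_1$, $\f\notin s_2$ and carrying closed walks $p\xrightarrow[]{w_1}p$, $r\xrightarrow[]{w_2}r$ in $\Scal$ with $\ell(w_1)=\ell(w_2)$ and $\f\notin w_2$. The crucial step is $|w_1|\ge 1$, and this is exactly where divergence-freeness enters: if some $w_i$ were nonempty with $\ell(w_i)=\ep$, it would be a reachable unobservable cycle, which is forbidden, so a nontrivial verifier cycle must carry a nonempty common observation and hence advance \emph{both} components. Pumping then gives, for every $n$, a faulty run $s_1w_1^{n}$ and a fault-free run $s_2w_2^{n}$ with equal labelings, the fault confined to $s_1$ on the left and absent on the right; choosing $n$ with $n|w_1|>k$ exhibits, for arbitrary $k$, the configuration forbidden by $\fs$-diagnosability, so $\Scal$ is not diagnosable.

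The main obstacle is precisely this last progress argument. Without divergence-freeness $\Ver_{\Scal}$ can contain a cycle of one-sided unobservable moves (all of type (v), say) that loops unobservably on the right while the fault-carrying left component idles; then the faulty continuation never grows, diagnosability is not actually violated, and the purely syntactic cycle condition would give a false alarm. The divergence-freeness hypothesis is what forces the faulty side to strictly progress around any offending cycle and thereby makes the cycle condition equivalent to the semantic property; liveness is its standard companion, guaranteeing that faults are followed by genuine continuations so that the event-length condition $|s'|>k$ of Definition~\ref{FA:def11} is the right notion to match against cycles, and so that the pigeonhole and pumping constructions above land on genuine behaviors of $\Scal$.
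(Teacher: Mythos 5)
Your argument is correct, but note that the paper itself contains no proof of Proposition~\ref{prop2_responses}: the result is quoted from \cite{Yoo2002DiagnosabiliyDESPTime}, and the appendix supplements it only with the counterexample $\Scal_7$ showing the hypotheses cannot be dropped. What you have produced is thus a self-contained reconstruction, and its route is exactly the style the paper uses for the results it \emph{does} prove (cf.\ the proof of Theorem~\ref{FA:thm13}): negate Definition~\ref{FA:def11}, instantiate $k$ at the number of product states, pigeonhole inside the $(-,F,-,N)$ region for one direction; decode a reachable offending cycle into a pair of equally labeled closed walks $p\xrightarrow[]{w_1}p$, $r\xrightarrow[]{w_2}r$ and pump for the other. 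Your key structural observations are sound: monotonicity of the flags under rules (i)--(vii) yields the footnote's dichotomy; synchronization of observables in rule (vii) is legitimate precisely because the appendix assumes $\ell|_{E_o}$ is the identity; and your isolation of divergence-freeness as the assumption forcing every nontrivial verifier cycle to contain a synchronized observable event (hence $|w_1|\ge 1$) pinpoints exactly the failure mode the paper exhibits with $\Scal_7$, whose offending cycle $(x_1,F,x_2,N)\xrightarrow[]{u}(x_1,F,x_2,N)$ is a one-sided type-(v) loop that lets the faulty left component idle. Two small matters you should make explicit rather than implicit: first, ``all cycles'' must be read as cycles in the \emph{accessible} part of $\Ver_{\Scal}$ (the construction in the appendix nominally lists the full state set $Q\times\{N,F\}\times Q\times\{N,F\}$, and your pumping direction genuinely needs reachability of the cycle); second, the reduction to cycles that are everywhere $(-,F,-,N)$ uses the left-right symmetry of the transition rules, which deserves a sentence. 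Finally, it is worth noting --- and is a mild strength of your write-up that you half-acknowledge --- that under the finite-trace Definition~\ref{FA:def11} your proof never actually uses liveness: divergence-freeness alone carries both directions, and invoking liveness only as a ``standard companion'' is harmless here since it is among the stated hypotheses, its real role being tied to infinite-continuation formulations of diagnosability rather than to any step of your argument.
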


\begin{example}
	Proposition~\ref{prop2_responses} does not generally hold for $\Scal$ that is not
	live or divergence-free. Consider the following LFSA $\Scal_7$:
	\begin{figure}[H]
		\tikzset{global scale/.style={
    scale=#1,
    every node/.append style={scale=#1}}}
		\begin{center}
			\begin{tikzpicture}[global scale = 1.0,
				>=stealth',shorten >=1pt,thick,auto,node distance=3.0 cm, scale = 1.0, transform shape,
	->,>=stealth,inner sep=2pt,
				every transition/.style={draw=red,fill=red,minimum width=1mm,minimum height=3.5mm},
				every place/.style={draw=blue,fill=blue!20,minimum size=7mm}]
				\tikzstyle{emptynode}=[inner sep=0,outer sep=0]
				\node[state, initial, initial where = above] (x0) {$x_0$};
				\node[state] (x1) [left of = x0] {$x_1$};
				\node[state] (x2) [right of = x0] {$x_2$};

				\path[->]
				(x0) edge node [above, sloped] {$\f$} (x1)
				(x0) edge node [above, sloped] {$u$} (x2)
				(x2) edge [loop right] node {$u$} (x2)
				;
			\end{tikzpicture}
			\caption{LFSA $\Scal_7$, where $\ell(\f)=\ell(u)=\ep$.}
			\label{fig2_response}
		\end{center}
	\end{figure}
	Part of $\Ver_{\Scal_7}$ is shown as follows:
	\begin{figure}[H]
		\tikzset{global scale/.style={
    scale=#1,
    every node/.append style={scale=#1}}}
		\begin{center}
			\begin{tikzpicture}[global scale = 1.0,
				>=stealth',shorten >=1pt,thick,auto,node distance=5.0 cm, scale = 1.0, transform shape,
	->,>=stealth,inner sep=2pt,
				every transition/.style={draw=red,fill=red,minimum width=1mm,minimum height=3.5mm},
				every place/.style={draw=blue,fill=blue!20,minimum size=7mm}]
				\tikzstyle{emptynode}=[inner sep=0,outer sep=0]
				\node[elliptic state, initial, initial where = left] (0N0N) {$x_0,N,x_0,N$};
				\node[elliptic state] (1F0N) [right of = 0N0N] {$x_1,F,x_0,N$};
				\node[elliptic state] (1F2N) [right of = 1F0N] {$x_1,F,x_2,N$};

				\path[->]
				(0N0N) edge node [above, sloped] {$\f$} (1F0N)
				(1F0N) edge node [above, sloped] {$u$} (1F2N)
				(1F2N) edge [loop right] node {$u$} (1F2N)
				;
			\end{tikzpicture}
			\caption{Part of $\Ver_{\Scal_7}$ corresponding to the LFSA $\Scal_7$ in Figure~\ref{fig2_response}.}
			\label{fig3_response}
		\end{center}
	\end{figure}
	The cycle $(x_1,F,x_2,N)\xrightarrow[]{u}(x_1,F,x_2,N)$ in $\Ver_{\Scal_7}$ contradicts the condition in
	Proposition~\ref{prop2_responses}, hence by Proposition~\ref{prop2_responses} $\Scal_7$ is not 
	$\fs$-diagnsoable. However, by definition, $\Scal_7$ is $\fs$-diagnosable vacuously.
\end{example}

The generalized twin plant
$$\overline{\TwPl}_{\Scal}=(Q_{\Ver},\{(\s_o,\s_o)|\s_o\in E_o\}\cup(T_{uo}\times\{\ep\})
\cup(\{\ep\}\times (E_{uo}\setminus\fs)),(x_0,N,x_0,N),\dt_{\overline{\TwPl}})$$ of $\Scal$ proposed in 
\cite{Cassez2008FaultDiagnosisStDyObser} is constructed as follows:
for all $(x_1,l_1,x_2,l_2)\in Q_{\Ver}$, $(\s_o,\s_o)$ with $\s_o\in E_o$, and $\s_{uo}\in E_{uo}\setminus
\fs$,
\begin{enumerate}[(a)]
	\item $((x_1,l_1,x_2,l_2),(\f,\ep),(x_1',F,x_2,l_2))\in\dt_{\Ver}$ if and only if $(x_1,\f,x_1')\in\dt$,
	\item $((x_1,l_1,x_2,l_2),(\s_{uo},\ep),(x_1',l_1,x_2,l_2))\in\dt_{\Ver}$ if and only if $(x_1,\s_{uo},x_1')\in\dt$,
	\item $((x_1,l_1,x_2,l_2),(\ep,\s_{uo}),(x_1,l_1,x_2',l_2))\in\dt_{\Ver}$ if and only if $(x_2,\s_{uo},x_2')\in\dt$,
	\item $((x_1,l_1,x_2,l_2),(\s_{o},\s_o),(x_1',l_1,x_2',l_2))\in\dt_{\Ver}$ if and only if $(x_1,\s_{o},x_1'),(x_2,\s_o,x_2')\in\dt$.
\end{enumerate}

There is no state of the form $(-,-,-,F)$ reachable in $\overline{\TwPl}_{\Scal}$.

\begin{proposition}[\cite{Cassez2008FaultDiagnosisStDyObser}]\label{prop3_responses}
	An $\Scal$ is not $\fs$-diagnosable if and only if in $\overline{\TwPl}_{\Scal}$ there is a reachable
	cycle in which all states are of the form $(-,F,-,N)$ and there is at least one event of the form
	$(\s,-)$ with $\s\in E$.
\end{proposition}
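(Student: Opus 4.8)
The plan is to reduce the statement to the characterization of non-$\fs$-diagnosability recorded in the proof of Theorem~\ref{FA:thm13} (specialized to $\Ef=\fs$): $\Scal$ is not $\fs$-diagnosable if and only if for every $k\in\N$ there exist $s_k\in L(\Scal)\cap E^*\fs$, a continuation $s_k'$ with $s_ks_k'\in L(\Scal)$ and $|s_k'|>k$, and a run $s_k''\in L(\Scal)$ with $\ell(s_k'')=\ell(s_ks_k')$ and $\f\notin s_k''$. First I would record the semantics of $\overline{\TwPl}_{\Scal}$, which is just a fault-labelled concurrent composition: a run from $(x_0,N,x_0,N)$ to $(x_1,l_1,x_2,l_2)$ corresponds exactly to a pair consisting of a \emph{left} run $x_0\xrightarrow[]{\alpha}x_1$ and a \emph{right} run $x_0\xrightarrow[]{\beta}x_2$ of $\Scal$ with $\ell(\alpha)=\ell(\beta)$, where $l_1=F$ iff $\f$ occurs in $\alpha$, and $l_2=N$ always (there is no event $(\ep,\f)$, so the right run never uses $\f$ and no state $(-,-,-,F)$ is reachable). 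Hence a state $(-,F,-,N)$ encodes a configuration in which the left run has already seen $\f$ while the right run is fault-free with the same observation. Crucially, each transition of types (a), (b), (d) (those whose first component lies in $E$) contributes exactly one event to the left run, whereas a transition $(\ep,\s_{uo})$ of type (c) leaves the left run and its label unchanged.

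For the ``if'' direction I would take a reachable cycle all of whose states are $(-,F,-,N)$ and which uses at least one event $(\s,-)$ with $\s\in E$. Reachability gives a run from $(x_0,N,x_0,N)$ into the cycle; since the left label is already $F$ on the cycle, cutting the left word at its first occurrence of $\f$ yields a faulty word $s\in L(\Scal)\cap E^*\fs$ together with a fault-free right word of equal observation. Pumping the cycle $k+1$ times preserves the pair of labels $(F,N)$, keeps the right run fault-free, and --- because the cycle contains a left-moving event --- lengthens the faulty left continuation by at least one event per turn. Thus the continuation $s'$ after $\f$ satisfies $|s'|>k$ while the right word $s''$ stays fault-free with $\ell(s'')=\ell(ss')$. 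Setting $s_k:=s$, $s_k':=s'$, $s_k'':=s''$ for every $k$ produces exactly the witnesses of non-diagnosability, so $\Scal$ is not $\fs$-diagnosable.

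For the ``only if'' direction I would feed the witnesses $s_k,s_k',s_k''$ into $\overline{\TwPl}_{\Scal}$: the left word $s_ks_k'$ and the fault-free right word $s_k''$, having equal observation, trace a run that reaches a state where the left label turns to $F$ (at the final $\f$ of $s_k$) and thereafter stays among $(-,F,-,N)$ states, the suffix corresponding to $s_k'$ on the left. Choosing $k\ge |Q_{\Ver}|=4|Q|^2$ forces this suffix to contain more than $|Q_{\Ver}|$ left-moving transitions, since their number equals $|s_k'|>k$. I would then extract a left-progressing cycle: mark the source state of each left-moving transition in the suffix, apply the pigeonhole principle to find two equal marked states, and take the portion of the run between them. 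This portion is a cycle lying entirely in $(-,F,-,N)$ states and beginning with a left-moving event $(\s,-)$, $\s\in E$, which is precisely the claimed structure.

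I expect the main obstacle to be this final extraction. A naive pigeonhole over \emph{all} visited states of the suffix could return a cycle built solely from right-component moves $(\ep,\s_{uo})$, which would violate the ``at least one event $(\s,-)$'' requirement and produce a false witness. The remedy is to count and apply pigeonhole to the left-moving transitions only, which is legitimate precisely because their number equals $|s_k'|$; this is what converts the unbounded growth of $|s_k'|$ (the quantitative content of non-diagnosability) into a cycle that genuinely advances the faulty left run, and it is the step that keeps the equivalence free of the liveness and divergence-freeness assumptions needed by Propositions~\ref{prop1_responses} and \ref{prop2_responses}.
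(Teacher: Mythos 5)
Your proof is correct and takes essentially the route the paper intends: the paper itself states this proposition without proof (quoting \cite{Cassez2008FaultDiagnosisStDyObser}), and its proof of the analogous Theorem~\ref{FA:thm13} is precisely your argument---negate Definition~\ref{FA:def11}, pump the cycle for sufficiency, and extract a cycle by pigeonhole from the unboundedly long witnesses for necessity. Your insistence on applying the pigeonhole only to the sources of left-moving transitions (so the extracted cycle cannot degenerate into pure $(\ep,\s_{uo})$ moves) correctly makes explicit the detail that the paper's terse ``choose sufficiently large $k$, \dots\ pigeonhole principle'' step glosses over, and it is exactly the counterpart of the condition $|s_3'(L)|>0$ in \eqref{FA:eqn10}.
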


\begin{example}
	Consider $\Scal_7$ in Figure~\ref{fig2_response}. $\overline{\TwPl}_{\Scal_7}$ is shown as follows:
	\begin{figure}[H]
		\tikzset{global scale/.style={
    scale=#1,
    every node/.append style={scale=#1}}}
		\begin{center}
			\begin{tikzpicture}[global scale = 1.0,
				>=stealth',shorten >=1pt,thick,auto,node distance=4.5 cm, scale = 1.0, transform shape,
	->,>=stealth,inner sep=2pt,
				every transition/.style={draw=red,fill=red,minimum width=1mm,minimum height=3.5mm},
				every place/.style={draw=blue,fill=blue!20,minimum size=7mm}]
				\tikzstyle{emptynode}=[inner sep=0,outer sep=0]
				\node[elliptic state, initial, initial where = left] (0N0N) {$x_0,N,x_0,N$};
				\node[elliptic state] (1F0N) [right of = 0N0N] {$x_1,F,x_0,N$};
				\node[elliptic state] (1F2N) [right of = 1F0N] {$x_1,F,x_2,N$};
				\node[elliptic state] (0N2N) [below of = 1F0N] {$x_0,N,x_2,N$};
				\node[elliptic state] (2N0N) [below of = 0N0N] {$x_2,N,x_0,N$};
				\node[elliptic state] (2N2N) [right of = 0N2N] {$x_2,N,x_2,N$};

				\path[->]
				(0N0N) edge node [above, sloped] {$(\f,\ep)$} (1F0N)
				(1F0N) edge node [above, sloped] {$(\ep,u)$} (1F2N)
				(1F2N) edge [loop right] node {$(\ep,u)$} (1F2N)
				(0N0N) edge node [above, sloped] {$(\ep,u)$} (0N2N)
				(0N2N) edge node [above, sloped] {$(\f,\ep)$} (1F2N)
				(0N0N) edge node [above, sloped] {$(u,\ep)$} (2N0N)
				(2N0N) edge [bend right] node [above, sloped] {$(\ep,u)$} (2N2N)
				(0N2N) edge node [above, sloped] {$(u,\ep)$} (2N2N)
				(2N2N) edge [loop right] node {$\begin{matrix}(\ep,u)\\(u,\ep)\end{matrix}$} (2N2N)
				;
			\end{tikzpicture}
			\caption{Part of $\overline{\TwPl}_{\Scal_7}$ corresponding to the LFSA $\Scal_7$ in Figure~\ref{fig2_response}.}
			\label{fig4_response}
		\end{center}
	\end{figure}
	In Figure~\ref{fig4_response}, there is no reachable cycle as in Proposition~\ref{prop3_responses}.
	In the cycle $(x_1,F,x_2,N)\xrightarrow[]{(\ep,u)}(x_1,F,x_2,N)$, in the unique event $(\ep,u)$,
	the left component is $\ep$. 
	Hence by Proposition~\ref{prop3_responses} $\Scal_7$ is $\fs$-diagnosable.
\end{example}

It is easy to see that the concurrent composition $\CCa(\Scal_{\fsf},\Scal_{\nsf})$ in the current paper
used for verifying diagnosability is coincidently similar to the generalized version of twin plant
$\overline{\TwPl}_{\Scal}$ proposed in 
\cite{Cassez2008FaultDiagnosisStDyObser}. After removing all $F$'s and $N$'s from $\overline{\TwPl}_{\Scal}$,
$\CCa(\Scal_{\fsf},\Scal_{\nsf})$ is obtained.

\end{document}